\newcounter{theorem}
\renewcommand{\thetheorem}{\arabic{section}.\arabic{theorem}}
\newenvironment{thm}[1]{\par
\begin{sloppypar}\refstepcounter{theorem}%
\noindent{\bf #1 \thetheorem.}\it{}}{\end{sloppypar}}
\newenvironment{theorem}{\begin{thm}{Theorem}}{\end{thm}}
\newenvironment{proposition}{\begin{thm}{Proposition}}{\end{thm}}
\newenvironment{corollary}{\begin{thm}{Corollary}}{\end{thm}}
\newenvironment{lemma}{\begin{thm}{Lemma}}{\end{thm}}
\newenvironment{defi}[1]{\par
\begin{sloppypar}\refstepcounter{theorem}%
\noindent{\bf #1 \thetheorem.}\rm{}}{\end{sloppypar}}
\newenvironment{remark}{\begin{defi}{Remark}}{\end{defi}}
\newenvironment{hypothesis}{\begin{defi}{Hypothesis}}{\end{defi}}
\newcommand{\eh}{\hfill}\newlength{\sperr}
\newcommand{\R}{\mathbb{R}}
\newcommand{\x}{{\bf x}}
\newenvironment{proof}{{\settowidth{\sperr}{\rm Proof}
\par\addvspace{0.3cm}\noindent\parbox[t]{1.3\sperr}{\rm P\eh r\eh o\eh o\eh 
f\eh.}}}{\nopagebreak\mbox{}\hfill $\blacksquare $\par\addvspace{0.25cm}}
\def\L{\boldsymbol{\mathcal{L}}}
\def\C{\boldsymbol{\mathcal{C}}}
\def\D{\boldsymbol{\mathcal{D}}}
\def\I{\boldsymbol{\mathcal{I}}}
\def\S{\boldsymbol{\Sigma}}
\def\Im{{{\rm Im}\,}}
\def\vsth{\vskip2mm}
\begin{document}

\noindent 
\begin{center}
\textbf{\large Adiabatic non-equilibrium steady states in the
  partition free approach}
\end{center}

\begin{center}
June 22, 2010
\end{center}

\vspace{0.5cm}

\noindent 

\begin{center}
\small{ 
Horia D. Cornean\footnote{Department of Mathematical Sciences, 
    Aalborg
    University, Fredrik Bajers Vej 7G, 9220 Aalborg, Denmark; e-mail:
    cornean@math.aau.dk},
\framebox{Pierre Duclos}, 
Radu Puric${\rm e}^{1,}$\footnote{
Inst. of Math. ``Simion Stoilow'' of
the Romanian Academy, P. O. Box 1-764, RO-014700 Bu\-cha\-rest,
Romania;
e-mail: Radu.Purice@imar.ro}}
     
\end{center}

\vspace{0.5cm}

\noindent

\begin{abstract}
Consider a small
sample coupled to a finite number of
leads, and assume that the total (continuous) system is at thermal equilibrium in the remote past. We 
construct a non-equilibrium steady state (NESS) by adiabatically turning on an electrical bias
between the leads. The main mathematical challenge is to show that certain 
adiabatic wave operators exist, and to identify their strong limit when the
adiabatic parameter tends to zero. Our NESS is different from, though
closely related with the NESS provided by the Jak{\v
  s}i{\'c}-Pillet-Ruelle approach. Thus we partly settle a question asked by Caroli {\it et al} in 1971 regarding the (non)equivalence between the partitioned and partition-free approaches.
\end{abstract}

\tableofcontents

\newpage

\section{Introduction}\label{Intro}

\setcounter{equation}{0}

\subsection{Generalities}

This paper deals with the rigorous construction of adiabatic non-equilibrium steady states
for mesoscopic systems which initially are fully coupled (or 'partition free')
and at thermal equilibrium \cite{Cini, CGZ}. The initial equilibrium
state is broken
down by slowly turning on an electrical bias between leads
(i.e. inserting a d.c. battery), which in a certain way can be seen as
slowly changing the chemical potentials of the leads coupled with the
small sample. 

In contrast with the above described partition-free setting, 
the 'partitioned procedure' is the one in which 
one starts with several decoupled reservoirs, each of them being at 
different equilibrium states. Let us assume for simplicity
that they are in grand canonical Gibbs states having the same temperature but
different chemical potentials. Then at $t=0$ they are
suddenly joined together with a sample, and the newly composed system 
is allowed to freely evolve until it reaches a steady state at
$t=\infty$. From a mathematical point of view this approach is by now
very well understood, see for example \cite{AJPP, JP1, Ruelle, N, brarob} and
references therein.  One can 
allow the carriers to interact in the sample \cite{JOP}, and the
theory still works. Note that even if we
choose to turn on the coupling
between the reservoirs in a time dependent way, 
the result will be the same \cite{CNZ}. 

One can ask which approach is more physical; here is a quote from a
paper by Caroli {\it et al} \cite{Caroli} from 1971 -maybe the
first very influential paper on the subject- who came with the following
observation about the partitioned procedure: 
{\it One might raise a major objection to the above procedure; 
it amounts to establishing
first the dc bias, and only later the coupling between the 
barrier and the electrode. Physically,
it is the reverse that is true; the 
transfer matrix elements are always there, 
and the dc bias
is established afterwards; it is not obvious that the 
corresponding limits can be interchanged.}

The major achievement of our current paper is that we can now
construct an adiabatic NESS in the partition free setting; let us explain how.  
The leads are already coupled with the sample, and at $t=-\infty$ the
full system is in a Gibbs equilibrium state at a given temperature and 
chemical potential. Then we adiabatically turn on a potential bias
$V\chi(\eta t)$ between the leads, modeling
in this way a gradual appearance of a difference in the chemical
potentials (here $\chi(-\infty)=0$, 
$\chi(0)=1$ and $\eta>0$ is the adiabatic parameter). 
The final bias $V$ does not need to be small; 
our results are beyond the linear response theory. 
The statistical density matrix $\rho_\eta(t)$ is found as the solution of
a quantum Liouville equation, with the initial condition at 
$t=-\infty$ given by the 
global Gibbs state. 

In Theorem \ref{main} we show the existence and
compute the strong limit $\rho_{ad}:=\lim_{\eta\searrow 0}\rho_\eta(t)$. The limit is 
$t$ independent, and contains - as in the partitioned procedure- 
two contributions: one from the discrete, and one from the continuous
subspaces. Note that we do not have to take the Ces{\`a}ro limit in order
to insure convergence for the discrete part. The adiabatic limit takes
care of the oscillations. The price we pay is that we need to demand
that the point spectrum of certain Hamiltonians only consists from
finitely many discrete eigenvalues. Most probably this condition is
too strong, and getting rid of it remains an interesting open problem. 

Even though the stationary density matrix of 
the partitioned procedure has a similar structure, it is {\bf
  different} from the one we construct here. A careful comparison will be given elsewhere.  

A future problem is to investigate the charge current and
establish Landauer-B\"uttiker type formulas \cite{But, BPT, Avron1,
  Avron2, CGZ} in the partition free setting with a continuous model, 
and without the linear response approximation. In fact this
was the starting point of a number of remarkable physical papers,
see for example \cite{FL}, \cite{LA}, \cite{BS}. A first
mathematically sound derivation of the L-B formula on a discrete model
and under the linear response approximation was obtained in
\cite{CJM1} and further investigated in
\cite{CJM2}. In \cite{CDNP-1} we significantly improved the method of proof of
\cite{CJM1}, which also allowed us to extend the results to the continuous
case. 

Another challenging open problem is to extend the formalism in order
to accommodate transient regimes (see \cite{MGM, MGM2, MSS, CGZ} and references
therein), and 
locally interacting fermions \cite{Stefa1, Stefa2}. 

Finally, we want to stress that some of the technical conditions which
we impose for our model (like smoothness of boundaries and potentials,
working with only two parallel leads) can be relaxed. We chose though to work
under stronger conditions in order to give shorter proofs for 
certain spectral and 
asymptotic completeness results, thus making the paper rather self-consistent.
In this way, the number of generic assumptions is kept to a minimum.

\subsection{The model}\label{System}

Take two identical semi-infinite cylinders and couple them 
smoothly through a finite
domain. The cylinders will model the leads, while the connecting domain 
will represent the region where the interesting physics takes 
place. The total configuration space $\boldsymbol{\mathcal{L}}$ is a subset
of $\R^{d+1}$ with $d\geq 0$. In order to simplify presentation, we will
assume that $\boldsymbol{\mathcal{L}}$ is cylinder-like, which means
that for each value of the longitudinal coordinate 
$x_{||}\in\R$ the transverse coordinate $\x_\perp$ belongs to a
bounded cross-section $\mathfrak{D}(x_{||})\subset\mathbb{R}^d$. Again
for the sake of simplicity, we assume that the boundary
\begin{align}\label{finalprima3}
\S:=\partial \boldsymbol{\mathcal{L}}
\end{align}
defines a regular $C^\infty$-surface embedded in $\mathbb{R}^{d+1}$.

Let us start with the description of the configuration space 
associated to one of our $d+1$ dimensional leads,
namely the left one. Let $\tilde{a}>0$. We let
$\widetilde{\I}_-:=(-\infty,-\tilde{a})$ model its
longitudinal dimension. Then we assume that:
$$\boldsymbol{\mathcal{L}}\cap \{\widetilde{\I}_-\times \R^d\}=
:\widetilde{\I}_-\times \mathfrak{D},$$ 
where the transverse section 
$\mathfrak{D}\subset\mathbb{R}^d$ is supposed to be 
a bounded and simply connected open
set with a regular $C^\infty$-boundary $\partial\mathfrak{D}$. Thus the
configuration space of the left cylinder is modeled in a
natural way by the set $\tilde{\I}_-\times\mathfrak{D}$. Similarly, if
$\widetilde{\I}_+:=(\tilde{a},\infty)$,  the configuration space of the right cylinder
is modeled by $\widetilde{\I}_+\times\mathfrak{D}$. 

Now define: 
\begin{align}\label{finalprima1}
\widetilde{\C}:=\boldsymbol{\mathcal{L}}\cap \{[-\tilde{a},\tilde{a}]\times \R^d\}.
\end{align}
Thus the small sample is contained by a bounded and 
simply connected set $\widetilde{\C}\subset\mathbb{R}^{d+1}$
which is smoothly glued to the two leads. 
With these notations, the one particle configuration space can be
decomposed as: 
\begin{align}\label{finalprima2}
\boldsymbol{\mathcal{L}}=\big(
\widetilde{\I}_-\times\mathfrak{D}\big)
\cup
\widetilde{\C}
\cup
\big(
\widetilde{\I}_+\times\mathfrak{D}\big).
\end{align}
When we refer to the "coupled system", we mean that there are no
internal walls between
the sample and leads. A particle will be free to flow inside the
system, and to pass from one
lead to another via the sample. But it is not allowed to get out of 
$\boldsymbol{\mathcal{L}}$.  

Now let us introduce the one particle Hamiltonian of the coupled
system. In the
sample $\widetilde{\C}$ we assume the existence of a potential $w\in
C^\infty_0(\widetilde{\C})$, which will be considered positive 
without loss of generality. The kinetic energy of a particle living in
$\L$ will be modeled by the Laplace operator
$-\Delta_D$ with Dirichlet boundary conditions on 
$\partial\boldsymbol{ \mathcal{L}}$ and having the
domain $\mathbb{H}_D(\L):=H^1_0(\L)\cap H^2(\L)$.
Thus the one-particle Hamiltonian is of the form:
\begin{equation}\label{1-p-hamilt}
H:=-\Delta_D+w,
\end{equation}
with the same domain. 

Regarding the spectral properties of $H$, we will prove in Lemma
\ref{LAP-K1} that its singular continuous
spectrum is absent. We will assume that the pure point
spectrum consists of discrete and finitely many eigenvalues: 
\begin{equation}\label{Hyp-2}
\sigma_{pp}(H)=\sigma_{disc}(H), 
\quad\#\sigma_{pp}(H)<\infty. 
\end{equation}
%%%%%%%%%%%%%%%
\begin{remark}\label{absenceOfEmbeddedEigenvalues}%
%%%%%%%%%%%%%%%
This assumption means in particular that we do not allow embedded 
eigenvalues in the continuous spectrum. To our knowledge, sufficient 
conditions to guarantee  this property are not known in dimension $d+1\ge2$.
%%%%%%%
\end{remark}%
%%%%%%%%

\vsth
Let $\mathcal{H}:=L^2(\L)$, and let $a>\tilde{a}$. Define 
\begin{align}\label{1-2010}
\L_-:=\L\cap\{(-\infty,-a)\times \D\},\quad \L_+:=\L\cap 
\{(a,\infty)\times \D\},\quad \C:=\boldsymbol{\mathcal{L}}\cap
\{(-a,a)\times \R^d\}.
\end{align}
We introduce three orthogonal projections:
\begin{align}\label{H-dec}
&\Pi_-:\mathcal{H}\rightarrow\mathcal{H}_-:=
L^2(\L_-),\quad 
\Pi_+:\mathcal{H}\rightarrow\mathcal{H}_+:=L^2(\L_+),
\nonumber\\
&\Pi_0:\mathcal{H}\rightarrow\mathcal{H}_0:=L^2(\C).
\end{align}
Note that $\widetilde{\C}$ is completely included in the open set 
$\C$, and $\L_\pm$ are "shorter" than the
corresponding leads. 

%%%%%%%%%%%%%%%%%%%%%%%%%
\subsection{The state and the Liouville equation}%
%%%%%%%%%%%%%%%%%%%%%%%%%

We only work at the level of density matrices. 
In the remote past $t\rightarrow-\infty$ the electron gas is at equilibrium
at a temperature $T>0$ and a chemical potential $\mu$, 
moving in all the volume $\L$. The gas is described by a quasi-free
state, having as two-point function the usual Fermi-Dirac equilibrium
density matrix operator:
\begin{equation}\label{FD-d}
\rho_{eq}(H):=\frac{1}{1+e^{(H-\mu)/kT}}.
\end{equation}
The system is driven out of equilibrium by {\it slowly} turning on an 
electric bias
\begin{equation}\label{defPotBias}
V=v_-\Pi_-+v_+\Pi_+,
\end{equation}
where $v_\pm$ are real constants. We want to introduce the bias 
adiabatically with
an adiabatic parameter $\eta >0$, as
a time-dependent potential $V_\eta(t):=\chi(\eta t)V$. One should have
in mind $\chi(t)=e^{t}$, but only a few abstract
properties of this function are really
needed, namely:
\begin{align}\label{cdh1}
& 0<\chi(t)< 1\quad {\rm and}\quad \chi'(t)>0 \quad {\rm if}\quad t<0;
\qquad \chi(0)=1; \\
&\chi, |\chi''|\in L^1(\mathbb{R}_-).  \nonumber
\end{align}
We will also need to consider the 'bias' with a fixed coupling
constant $\kappa\in[0,1]$. We introduce a family of operators:
\marginpar{$H(\kappa)$}
\begin{equation}\label{fix-bias-hamilt}
K(\kappa):=H+\kappa V.
\end{equation}
$E_{pp}(A)$ and $E_{ac}(A)$ will denote 
respectively the projector on the pure point and absolutely continuous 
spectral subspace of the self-adjoint operator $A$.  In Lemma
\ref{LAP-K1} we will prove that the singular continuous spectrum of 
$K(\kappa)$ is empty. We now make the
following assumptions concerning the point spectrum:

\vsth
%%%%%%%%%%%%%%%%%
\begin{hypothesis}\label{Hyp-3}%
%%%%%%%%%%%%%%%%%
\begin{enumerate}
\item $\forall\kappa\in[0,1]$ the Hamiltonian $K(\kappa)$ 
has no eigenvalues embedded in the continuous spectrum; \item $\dim
   E_{pp}(K({\kappa}))=N<\infty,\;\forall\kappa\in[0,1]$, 
$\sigma_{pp}(K({\kappa}))=\{\varepsilon_j(\kappa)\}_{j=1}^N$;

\item
  $\underset{\kappa\in[0,1]}{\min}\left\{\text{dist}\left(\sigma_{pp}(K(\kappa)), 
\sigma_{ac}(K(\kappa)) \right)\right\}\geq d>0$.
\end{enumerate}
%%%%%%%%%%
\end{hypothesis}%
%%%%%%%%%
In order to simplify the presentation, we will only work with $N=2$
and adopt an extra assumption: 
\vsth
%%%%%%%%%%%%%%%%%% 
\begin{hypothesis}\label{Hyp-33}

The eigenvalues $\{\varepsilon_j(\kappa)\}_{j\in\{1,2\}}$ (which are 
real analytic functions of $\kappa\in [0,1]$) can cross at most at one
point $\kappa_0\in(0,1)$. This $\kappa_0$ corresponds to some
unique $t_0<0$ where $\chi(t_0)=\kappa_0$ and  $\chi'(t_0)>0$. 
%%%%%%%%%%
\end{hypothesis}%
%%%%%%%%%

\vspace{0.5cm}

The time dependent Hamiltonian will be
\begin{equation}\label{t-dep-hamilt}
K({\chi(\eta t)}):=H+\chi(\eta t)V,
\end{equation}
having the constant domain equal to the domain of $H$,
i.e. $\mathbb{H}_D(\L)$. The evolution defined by the time-dependent
Hamiltonian $K({\chi(\eta t)})$ is described by a unitary propagator
$W_{\eta}(t)$, solution of the following Cauchy problem:
\begin{equation}\label{W}
\left\{
 \begin{array}{l}
  i\partial_tW_{\eta}(t)=K({\chi(\eta t)})W_{\eta}(t)\\
  W_{\eta}(0)=1,
 \end{array}
\right.
\end{equation} 
for $t\in\mathbb{R}$. For any $\eta>0$, the family
$\{K({\chi(\eta t)})\}_{t\in\mathbb{R}}$ consists of self-adjoint
operators in $\mathcal{H}$ having a common domain equal to
$\mathbb{H}_D(\L)$ and strongly differentiable with respect to $t\in \mathbb{R}$
with a bounded self-adjoint norm derivative $
\partial_tK(\chi(\eta t))\,=\,\eta\,\chi'(\eta t)V$. 

Now using well known results quoted in \cite[Th. X.70]{RS2} 
we easily obtain that the problem 
\eqref{W} has a unique solution which is unitary and leaves
 the domain $\mathbb{H}_D(\L)$ invariant for any
 $t\in\mathbb{R}$. Moreover, its adjoint satisfies the equation: 
\begin{equation}\label{finalprima4}
i\partial_tW_{\eta}^*(t)=-W_{\eta}^*(t)K(\chi_{\eta}(t)).
\end{equation}

The object we are interested in is the time evolved density matrix $\rho_\eta(t)$
which must be a solution of the
Liouville equation, starting from the initial value $\rho_{eq}(H)$ at $t\rightarrow-\infty$:
\begin{equation}\label{Liouv}
i\partial_t \rho_\eta(t)=[K(\chi(\eta t)),\rho_\eta(t)],
\quad \underset{t\to -\infty} {n-\lim}\rho_\eta(t)=\rho_{eq}(H).
\end{equation}
In the remaining part of our paper we will show that the unique
solution $\rho_\eta(t)$ of the Liouville equation has a strong limit
when $\eta\searrow 0$, and compute it. In particular, we will see that
the adiabatic limit is $t$ independent.

%%%%%%%%%%%%%%%
\subsection{The main result}%
%%%%%%%%%%%%%%%

In order to formulate our main result we need to
define some new objects. First, we  introduce the decoupled
Hamiltonian obtained from $H$
by introducing Dirichlet walls where the bias is discontinuous
($x_{||}=\pm a$). Remember that the decomposition \eqref{1-2010}
depends on $a$, and the walls are inside the leads. 
Let
$\overset{\circ}{\Delta}_D$ be the self-adjoint Laplace operator
defined in $L^2(\L)$ with Dirichlet
conditions on $\partial\L_\pm\cup \partial{\C}$; we have 
$
\overset{\circ}{\Delta}_D=\overset{\circ}{\Delta}_{D,-}\oplus\overset{\circ}{\Delta}_{D,0}\oplus\overset{\circ}{\Delta}_{D,+}$,
where their domains are denoted as follows: 
\begin{align}\label{decoupl-domains}
\mathbb{H}_D(\L_\pm)&:=H^1_0(\L_\pm)\cap H^2(\L_\pm),\quad 
\mathbb{H}_D(\C):=H^1_0(\C)\cap H^2(\C),\nonumber \\
\overset{\circ}{\mathbb{H}}_D(\L)&:=\mathbb{H}_D(\L_-)\oplus\mathbb{H}_D(\C)\oplus\mathbb{H}_D(\L_+).
\end{align}
Let us note that due to the cylindrical symmetry of the
regions $\L_{\pm}$ where the bias is piecewise constant,  we can write 
\begin{equation}\label{sect-Lapl}
\overset{\circ}{\Delta}_{D,\pm}=\mathfrak{l}_\pm\otimes1+1\otimes\mathfrak{L}_{\D}
\end{equation}
with $\mathfrak{L}_{\D}$ the Laplacean on the bounded domain 
$\D\subset\mathbb{R}^d$ with Dirichlet conditions on the boundary
$\partial\D$, and $\mathfrak{l}_\pm$ the operator of second derivative
on $\I_{\pm}$ with Dirichlet condition at $\pm a$. The decoupled one particle Hamiltonian will be: 
\begin{equation}\label{dec-hamilt}
\overset{\circ}{H}:=-\overset{\circ}{\Delta}_D +w,
\end{equation}
which is self-adjoint on the domain
$\overset{\circ}{\mathbb{H}}_D(\L)$, having Dirichlet conditions on 
$\partial\L_\pm\cup\partial\C$.  As in the coupled case, 
we need to consider the bias 
with a fixed coupling constant $\kappa\in[0,1]$ and define $
\overset{\circ}{K}(\kappa):=\overset{\circ}{H}+\kappa V$. 
In order to formulate our main theorem we need the following lemma:
\begin{lemma}\label{lemamain}%
%%%%%%%%%%%%%%%
The stationary wave operator $\Xi_0$ associated to the pair 
$\{\overset{\circ}{K}(1),K(1)\}$:
$$
\Xi_0:=\underset{s\searrow-\infty}{s-\lim}e^{isK(1)}e^{-is\overset{\circ}{K}(1)}
E_{ac}(\overset{\circ}{K}(1)),
$$
exists and is a
unitary operator from $E_{ac}(\overset{\circ}{K}(1))\mathcal{H}$ to 
$E_{ac}(K(1))\mathcal{H}$. Moreover, the singular continuous spectrum
of $K(\kappa)$ is empty for all $\kappa\in [0,1]$. 
\end{lemma}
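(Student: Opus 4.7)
My strategy is to reduce existence and completeness of $\Xi_0$ to a trace-class criterion on the resolvent difference and then invoke the Birman-Kuroda theorem. First I would exploit the structure of the decoupled operator: $\overset{\circ}{K}(1) = \overset{\circ}{K}(1)|_{\L_-}\oplus \overset{\circ}{K}(1)|_{\C}\oplus \overset{\circ}{K}(1)|_{\L_+}$. The middle piece lives on the bounded open set $\C$, hence has compact resolvent, purely discrete spectrum, and contributes nothing to $E_{ac}(\overset{\circ}{K}(1))$. Because $a>\tilde{a}$ and $w$ is supported in $\widetilde{\C}\subset\C$, the lead pieces separate exactly, $\overset{\circ}{K}(1)|_{\L_\pm} = \mathfrak{l}_\pm\otimes 1 + 1\otimes \mathfrak{L}_\D + v_\pm$, which produces an explicit ac spectral representation of $\overset{\circ}{K}(1)$ on $[\lambda_1(\mathfrak{L}_\D)+v_\pm,\infty)$ in terms of the transverse Dirichlet eigenbasis and the longitudinal half-line Dirichlet Laplacian.

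The central step is to show that the resolvent difference
\[
R(z) := (K(1)-z)^{-1} - (\overset{\circ}{K}(1)-z)^{-1},
\]
or a suitable iterated/smoothed variant $\phi(K(1)) - \phi(\overset{\circ}{K}(1))$ with $\phi(x)=(x+i)^{-N}$, lies in the trace class. Via a Krein-type formula for the two self-adjoint realizations, $R(z)$ decomposes as a sum of boundary operators localized at the two cross-sections $\{x_{||}=\pm a\}$. Expanding in the transverse Dirichlet eigenbasis $\{\phi_n\}$ of $\mathfrak{L}_\D$ gives, for each $n$, an essentially one-dimensional finite-rank longitudinal operator whose trace norm decays as a negative power of the transverse eigenvalue $\lambda_n(\mathfrak{L}_\D)$; the Weyl asymptotics $\lambda_n\sim c\,n^{2/d}$ then furnish summability, at the price of iterating the resolvent enough times (choosing $N$ larger than a constant multiple of $d$). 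I expect the main technical obstacle to be the gluing region $\C$, where separation of variables breaks down and the Krein formula is less explicit; I would handle this by inserting smooth cutoffs supported strictly inside the leads and controlling the leftover errors through the second resolvent identity, using that the boundary perturbation and the cutoffs can be made to commute up to compact remainders.

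Once the trace-class bound is established, the Birman-Kuroda theorem together with Kato's invariance principle delivers existence and completeness of the wave operators for the pair $\{\overset{\circ}{K}(1), K(1)\}$; in particular, $\Xi_0$ exists and is a unitary map from $E_{ac}(\overset{\circ}{K}(1))\mathcal{H}$ onto $E_{ac}(K(1))\mathcal{H}$, and $\sigma_{ac}(K(1))=\sigma_{ac}(\overset{\circ}{K}(1))$. For the final claim, the absence of singular continuous spectrum for $K(\kappa)$, $\kappa\in[0,1]$, I would invoke the limiting absorption principle of Lemma~\ref{LAP-K1}, which is to be proved by a Mourre commutator estimate with a conjugate operator built from the longitudinal coordinate in the leads, giving a strict Mourre bound on the ac spectrum away from a discrete set of thresholds. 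Combined with Hypothesis~\ref{Hyp-3} (no embedded eigenvalues, finite point spectrum, gap between pp and ac spectrum), this excludes any singular continuous component.
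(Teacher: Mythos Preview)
Your overall strategy coincides with the paper's: both reduce existence and completeness of $\Xi_0$ to the Birman--Kuroda theorem by showing that $R_\kappa(z)^n-\overset{\circ}{R}_\kappa(z)^n$ is trace class for some large $n$, and both treat absence of singular continuous spectrum separately via a limiting absorption principle. The technical implementations differ at both steps, however.

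For the trace-class estimate, the paper does not use a Krein boundary formula or an explicit transverse mode expansion with Weyl summation. Instead it observes that the range of $R(z)-\overset{\circ}{R}(z)$ lies in $\ker(D_0^*-z)$, whose elements solve the free equation on the leads and hence decay exponentially in $Q_1$ (Lemma~\ref{exp-dec}). Combined with a Combes--Thomas rotation and the Sobolev embedding $H^{2k}\hookrightarrow BC$ for $k$ large (Lemma~\ref{cdhlemma1} and its corollary), this yields that $F(Q_1)R(z)^k$ is Hilbert--Schmidt for any $F\in L^2(\mathbb{R})$; the trace-class conclusion follows by writing $R(z)^{2k}\Psi_\alpha^{-1}$ as a product of two such operators. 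This route sidesteps the Krein machinery and the gluing-region difficulty you flag, at the price of the somewhat implicit Combes--Thomas/Sobolev argument. Your Krein/mode-expansion path is viable too and makes the dimension dependence explicit, but you would still need the exponential localization of the boundary terms to control the non-separable part near $\C$, so the two approaches converge in practice.

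For the LAP (Lemma~\ref{LAP-K1}), the paper does \emph{not} use Mourre theory. It uses a geometric parametrix: a quadratic partition of unity $\chi_-^2+\chi_0^2+\chi_+^2=1$, an approximate resolvent $\widetilde{R}_\kappa(z)$ built from $\overset{\circ}{R}_\kappa(z)$ on the leads and a Dirichlet box resolvent on a neighborhood of $\C$, and then the analytic Fredholm alternative applied to the compact, analytically continuable remainder $e^{\langle Q_1\rangle}X(z)e^{-\langle Q_1\rangle}$. This is more elementary than a Mourre argument and avoids constructing a conjugate operator adapted to the waveguide geometry; your Mourre approach would also work but is heavier here.
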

%%%%%%%%%%%%%%%
\vspace{0.5cm}

And here is the main result:
\begin{theorem}\label{main}% 
The adiabatic limit of the density matrix exists
  in the strong operator topology on $\mathbb{B}(\mathcal{H})$, is
  independent of $t$ and given by:
\begin{equation}\label{finalprima6}
\rho_{ad}:=\underset{\eta\searrow 0}{s-\lim}\;\rho_{\eta}(t)\,=\Xi_0
\rho_{eq}(\overset{\circ}{H})\Xi_0^*+\sum_{j=1}^N\rho_{eq}(\varepsilon_j(0))E_j(K(1)),
\end{equation}
where $\{\varepsilon_j(0)\}_{j=1}^N$ are the eigenvalues of $H=K(0)$ in
ascending order, while $\{E_j(K(1))\}_{j=1}^N$ are the eigenprojections of $H+V=K(1)$
obtained by analytically continuing $\{E_j(K(\kappa))\}_{j=1}^N$ from
$\kappa=0$ to $\kappa=1$.
%%%%%%%%%
\end{theorem}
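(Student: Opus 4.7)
The plan is to recast $\rho_\eta(t)$ as a conjugation of $\rho_{eq}(H)$ by a single ``adiabatic wave operator''
$$\Omega_\eta(t):=\underset{s\to-\infty}{s-\lim}\,W_\eta(t)\,W_\eta(s)^*\,e^{i(t-s)H},$$
whose existence follows from a Duhamel computation: the $s$-derivative of $W_\eta(t)W_\eta(s)^*e^{i(t-s)H}$ equals $i\,W_\eta(t)W_\eta(s)^*\chi(\eta s)V\,e^{i(t-s)H}$, with norm bounded by $\chi(\eta s)\|V\|$, which is integrable at $-\infty$ by $\chi\in L^1(\R_-)$ from \eqref{cdh1}. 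Since $\rho_{eq}(H)$ commutes with $e^{\pm isH}$, the solution of the Liouville problem \eqref{Liouv} takes the product form $\rho_\eta(t)=\Omega_\eta(t)\rho_{eq}(H)\Omega_\eta(t)^*$. Using Lemma \ref{LAP-K1} and Hypothesis \ref{Hyp-3}, I decompose
$$\rho_{eq}(H)=\sum_{j=1}^{N}\rho_{eq}(\varepsilon_j(0))\,E_j(H)+\rho_{eq}(H)\,E_{ac}(H)$$
and handle the two pieces separately.

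For the discrete piece, I would invoke Kato's adiabatic theorem with gap. Hypothesis \ref{Hyp-3}.3 supplies the uniform distance $d>0$ between point and AC spectra of $K(\kappa)$, while the single eigenvalue crossing at $\kappa_0$ permitted by Hypothesis \ref{Hyp-33} is bypassed by analytic continuation: the rank-2 projection onto the two crossing eigenvalues is analytic in $\kappa$, and the Kato analytic branches glue together to define the projections $E_j(K(1))$ appearing in \eqref{finalprima6}. This yields $s-\lim_{\eta\searrow 0}\Omega_\eta(t)\,E_j(H)\,\Omega_\eta(t)^*=E_j(K(1))$, with the Fermi--Dirac weight $\rho_{eq}(\varepsilon_j(0))$ inherited from the initial condition.

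For the absolutely continuous piece --- the technical core --- the strategy is to factor through the decoupled Hamiltonian $\overset{\circ}{H}$. Arguing exactly as in Lemma \ref{lemamain} (the $\kappa=0$ case, where the only difference between $H$ and $\overset{\circ}{H}$ is the insertion of Dirichlet walls at $x_{||}=\pm a$ inside the leads), one obtains the existence and unitarity of the stationary wave operator
$$\widetilde{\Xi}:=\underset{s\to-\infty}{s-\lim}\,e^{isH}e^{-is\overset{\circ}{H}}\,E_{ac}(\overset{\circ}{H})$$
from $E_{ac}(\overset{\circ}{H})\mathcal{H}$ onto $E_{ac}(H)\mathcal{H}$. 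Its intertwining property gives $\rho_{eq}(H)E_{ac}(H)=\widetilde{\Xi}\rho_{eq}(\overset{\circ}{H})\widetilde{\Xi}^*$, so the AC part of $\rho_\eta(t)$ becomes $[\Omega_\eta(t)\widetilde{\Xi}]\rho_{eq}(\overset{\circ}{H})[\Omega_\eta(t)\widetilde{\Xi}]^*$ and the whole theorem reduces to the key identification
$$\underset{\eta\searrow 0}{s-\lim}\,\Omega_\eta(t)\,\widetilde{\Xi}=\Xi_0\quad\text{on}\quad E_{ac}(\overset{\circ}{H})\mathcal{H}=E_{ac}(\overset{\circ}{K}(1))\mathcal{H}$$
(the two AC subspaces coincide because on each lead $\overset{\circ}{K}(1)$ differs from $\overset{\circ}{H}$ only by the constant $v_\pm$). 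I would prove this convergence by combining the Duhamel representation of $\Omega_\eta(t)$ with the intertwining $f(H)\widetilde{\Xi}=\widetilde{\Xi}f(\overset{\circ}{H})$ to express $\Omega_\eta(t)\widetilde{\Xi}-\Xi_0$ as a single integral whose integrand couples $\chi(\eta s)V$ to the Dirichlet-wall perturbation through the full adiabatic propagator on one side and the decoupled free evolution on the other, then controlling it on a dense set of spectrally well-localized vectors in $E_{ac}(\overset{\circ}{H})\mathcal{H}$ via a limiting-absorption-principle estimate for $K(\kappa)$ uniform in $\kappa\in[0,1]$.

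The \textbf{main obstacle} is precisely this last identification: proving $\Omega_\eta(t)\widetilde{\Xi}\to\Xi_0$ requires interchanging the adiabatic limit $\eta\searrow 0$ with the long-time scattering limits defining both operators, which in turn demands quantitative propagation (LAP-type) estimates for $K(\kappa)$ uniform in the coupling $\kappa$. Once this is in hand, the $t$-independence of $\rho_{ad}$ is automatic: $\Xi_0\rho_{eq}(\overset{\circ}{H})\Xi_0^*$ commutes with $K(1)$ --- on $E_{ac}(\overset{\circ}{H})\mathcal{H}$ the bias $V$ acts as the locally constant multiplier $v_\pm$ on each lead and hence commutes with $\rho_{eq}(\overset{\circ}{H})$, and the intertwining property of $\Xi_0$ then transports this commutation to $K(1)$ --- while the discrete term is manifestly a spectral function of $K(1)$; hence both contributions are invariant under the $K(1)$-dynamics that governs $\rho_\eta(t)$ for $t\geq 0$, and a short stability argument propagates convergence at $t=0$ to arbitrary $t$.
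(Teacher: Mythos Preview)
Your overall architecture mirrors the paper's (adiabatic wave operator, discrete/AC splitting, passage through $\overset{\circ}{H}$), but the ``key identification'' you reduce everything to,
\[
\underset{\eta\searrow 0}{s\text{-}\lim}\;\Omega_\eta(t)\,\widetilde{\Xi}\;=\;\Xi_0\quad\text{on }E_{ac}(\overset{\circ}{H})\mathcal{H},
\]
is \emph{false}, and this is the genuine gap. Writing out the limits one finds $\Omega_\eta(0)\widetilde{\Xi}=\omega_\eta\widetilde{\Xi}=\underset{s\to-\infty}{s\text{-}\lim}\,W_\eta^*(s)\,e^{-is\overset{\circ}{H}}E_{ac}(\overset{\circ}{H})$, whereas the paper's object $\Xi_\eta$ (which \emph{does} converge to $\Xi_0$) is $\underset{s\to-\infty}{s\text{-}\lim}\,E_{ac}(K(1))W_\eta^*(s)\,\overset{\circ}{W}_\eta(s)E_{ac}(\overset{\circ}{H})$. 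The two differ precisely by the phase operator $U_\eta=\Pi_-e^{iv_-\eta^{-1}\|\chi\|_{L^1}}+\Pi_+e^{iv_+\eta^{-1}\|\chi\|_{L^1}}$ coming from \eqref{finalprima5}, which oscillates wildly as $\eta\searrow 0$ and has no limit. (Already in the trivial test case $V=v_+\mathrm{Id}$ one gets $\omega_\eta\widetilde{\Xi}=e^{iv_+\eta^{-1}\|\chi\|_{L^1}}E_{ac}(H)$, which manifestly fails to converge.) Thus no Duhamel/LAP argument can establish the limit you propose, because the limit does not exist.

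What rescues the theorem is that $U_\eta$ \emph{commutes} with $\rho_{eq}(\overset{\circ}{H})E_{ac}(\overset{\circ}{H})$, so the oscillating phases cancel in the conjugated expression. The paper exploits this by inserting the full decoupled adiabatic propagator $\overset{\circ}{W}_\eta(s)$ (not merely $e^{-is\overset{\circ}{H}}$) on both sides of $\rho_{eq}(\overset{\circ}{H})$ for free, thereby defining $\Xi_\eta$ so that it absorbs $U_\eta$, and only then proving $\Xi_\eta\to\Xi_0$ via a Cook-type argument with uniform propagation estimates (Propositions~\ref{P-2} and~\ref{P-4}). Your strategy becomes correct once you make this same insertion; the LAP estimates you allude to are then used exactly as you suggest. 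A smaller point: for the discrete part, invoking ``Kato's adiabatic theorem with gap'' on the rank-2 projection yields only $\omega_\eta(E_1(0)+E_2(0))\omega_\eta^*\to E_1(1)+E_2(1)$, which is not enough when $\rho_{eq}(\varepsilon_1(0))\neq\rho_{eq}(\varepsilon_2(0))$; the paper handles the crossing by a direct gap-less estimate (Lemma~\ref{hcprima8}), splitting $\mathbb{R}_-$ into a short ``near-crossing'' interval of length $\eta^{\delta-1}$ and two gapped regions.
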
%

\vspace{0.5cm}

\begin{remark}. Even though Lemma \ref{lemamain} is not
surprising, its proof is not straightforward. 

We also note that the 
adiabatic limit $\rho_{ad}$ commutes with $K(1)=H+V$, but it is not a function of
$K(1)$.  Even though $\rho_\eta(t)$ is a solution of a Liouville
equation involving operators with no internal Dirichlet boundaries at 
$\pm a$, the
limit $\rho_{ad}$ is expressed with the help of a  
comparison operator $\overset{\circ}{H}+V$, depending on $a$, and 
which appears naturally in the proof. 

We will assume $N=2$, but the result holds true for any finite $N$. An
interesting open problem is to study the case $N=\infty$ and when the
eigenvalues can enter the continuous spectrum while $\kappa$ grows 
from $0$ to $1$. Another interesting situation is the one in which we
have a degeneracy at $\kappa=0$;  this situation is related to the Gell-Mann
and Low theorem for degenerate unperturbed states \cite{Panati}. 
\end{remark}
%%%%%%%%%%%%%%%%%%%%%%%%%%%%%
\subsection{A useful expression of the density matrix}%
%%%%%%%%%%%%%%%%%%%%%%%%%%%%

Before actually starting the study of the adiabatic limit, let us very
quickly show that \eqref{Liouv} has a solution, which can be put into 
a form which is 
particularly convenient for taking the adiabatic limit.

Define the unitary adiabatic wave operators
\begin{equation}\label{finaladoua1}
 \omega_\eta:=\underset{t\to -\infty}{n-\lim}\; W_\eta^*(t)e^{-it H},\quad
 \omega_\eta^*=\underset{t\to -\infty}{n-\lim}\; e^{it H}W_\eta(t).
\end{equation}
They converge in norm due to the following estimate ($s<t$):
\begin{equation}\label{Cauchy-1}
\left\Vert W_\eta^*(t)e^{-it H}\,-\,W_\eta^*(s)e^{-is H}\right\Vert\,\leq\,\int_s^t\left\Vert \frac{d}{d\tau}\left \{W_\eta^*(\tau)e^{-i\tau H}\right\}\right\Vert d\tau\leq ||V||\int_s^t\chi(\eta \tau),
\end{equation}
where we use that $\chi\in L^1(\mathbb{R}_-)$. Then by direct computation we can prove that the operator 
\begin{equation}
\rho_\eta(t):=W_\eta(t)\omega_\eta\rho_{eq}(H)\omega_\eta^*W_\eta^*(t)
\end{equation}
solves the Liouville equation. It also obeys the initial condition
because we can write:
\begin{align}
0&=\lim_{t\to -\infty}\left \Vert \rho_\eta(t)-e^{-itH}
\left \{e^{itH}W_\eta(t)\right\}
\omega_\eta\rho_{eq}(H)\omega_\eta^*\left\{W_\eta^*(t)
e^{-itH}\right\}e^{itH}\right \Vert\nonumber \\
&=\lim_{t\to -\infty}\left \Vert
  \rho_\eta(t)-e^{-itH}\rho_{eq}(H)e^{itH}\right 
\Vert=\lim_{t\to -\infty}\left \Vert
  \rho_\eta(t)-\rho_{eq}(H)\right 
\Vert.
\end{align}
The above solution can be rewritten as:
\begin{equation}\label{hc1prima1}
 \rho_\eta(t)=W_\eta(t)\rho_\eta(0)W_\eta^*(t),
\end{equation}
where 
\begin{equation}\label{hc1prima2}
\rho_\eta(0)=\omega_\eta\rho_{eq}(H)\omega_\eta^*.
\end{equation}

Now let us show that it is enough to prove \eqref{finalprima6} for
$t=0$. Indeed, once this formula is proved for $t=0$ it 
shows that the strong limit of $\rho_\eta(0)$ when
$\eta\searrow 0$ is commuting with $K(1)=H+V$. 
It is elementary to check that $W_\eta(t)$ and $W_\eta^*(t)$ converge
in norm to $e^{-itK(1)}$ and respectively $e^{itK(1)}$ when
$\eta\searrow 0$ (with  $t$ fixed). Since $e^{\pm itK(1)}$ commutes
with $\underset{\eta\searrow 0}{s-\lim}\;\rho_\eta(0)$ it follows that
the adiabatic strong limit of
$\rho_\eta(t)$ must also exist and equal the r.h.s of (\ref{finalprima6}).

Moreover, due to the fact that the limits in \eqref{finaladoua1} are in operator
norm, it is easy to show that we have the identity:
\begin{equation}\label{state-t-2}
\rho_\eta(0)=\underset{s\to -\infty}{n-\lim} \; W_\eta^*(s)\rho_{eq}(H)W_\eta(s).
\end{equation}
It is important to note that the above norm limit is not uniform in
$\eta$, and this is the reason why the adiabatic limit is not
straightforward. Formula \eqref{state-t-2} will be the 
starting point in what follows, and we will be interested in computing 
the double limit:
\begin{equation}\label{state-t-22}
\rho_{ad}=\underset{\eta\searrow 0}{s-\lim}\rho_\eta(0)=\underset{\eta\searrow 0}{s-\lim}\left 
\{\underset{s\to -\infty}{n-\lim} \; W_\eta^*(s)\rho_{eq}(H)W_\eta(s)\right \}.
\end{equation}

\section{A road map of the proof of the adiabatic limit}\label{Ad-Lim}

\setcounter{equation}{0}

 Since our proof of the adiabatic limit is quite long, in this section
 we will give a list of technical results leading to it and postpone their
proofs for the next sections. 

The two terms of
\eqref{finalprima6} are coming from different spectral subspaces of
$H+V$: the first
one from the absolutely continuous spectrum, and the second one from 
the discrete spectrum.

In Lemma \ref{LAP-K1} we will prove the absence of singular continuous
spectrum for $K({\kappa})$, thus we can consider the orthogonal decompositions
\begin{equation}\label{finaladoua6}
\mathcal{H}\,=\,E_{ac}(\kappa)\mathcal{H}\oplus
\left\{\underset{1\leq j\leq N}{\oplus}E_j(\kappa)\mathcal{H}\right\},
\end{equation}
where $E_{ac}(\kappa):=E_{ac}(K(\kappa))$ and
$E_j(\kappa):=E_j(K({\kappa}))$. Let us remark here the important fact
that due to the Rellich Theorem (Theorem II.61 in \cite{K}) 
we can choose the eigenprojections of $K(\kappa)$ to be real analytic
functions of $\kappa$ on the interval $[0,1]$. Then we can write
$$
W_\eta^*(s)\rho_{eq}(H)W_\eta(s)=W_{\eta}^*(s)\rho_{eq}(H)E_{ac}(0)W_{\eta}(s)+
\left\{\underset{1\leq j\leq
    N}{\sum}\rho_{eq}(\varepsilon_j(0))\;W_{\eta}^*(s)
E_j(0)W_{\eta}(s)\right\}.
$$
We will separately take the double limit as in 
\eqref{state-t-22} for both above terms. 

\subsection{The contribution of the discrete spectrum}

Let us start our analysis with the pure-point part and compute
$$
\underset{\eta\searrow0}{s-\lim}\left[\underset{s\searrow-\infty}{n-\lim}
W_{\eta}(s)^*E_j(0)W_{\eta}(s)\right].
$$

As $V$ is a bounded analytic perturbation of $H$, the map $[0,1]\ni
\kappa\mapsto E_j(\kappa)$ is - in particular - 
Lipschitz continuous in the uniform topology. Thus there exists a constant $C>0$ such that:
\begin{equation}\label{E-1}
\left\|W_{\eta}^*(s)E_j(0)W_{\eta}(s)-W_{\eta}^*(s)E_j(\chi(\eta
  s))W_{\eta}(s)\right\|\leq C\chi(\eta s),\quad s\leq 0.
\end{equation}
Thus we can replace $E_j(0)$ with the analytically continued
projection $E_j(\chi(\eta
  s))$ and the limit does not change. 
We will prove the following result (a weaker version of the gap-less
adiabatic theorem, see \cite{teufel} and references therein):
\vskip2mm
%%%%%%%%%%%%%%%%%%
\begin{proposition}\label{P-Discr} 
Under our Hypothesis \ref{Hyp-3} the following limit exists in the
uniform topology and we have the equality:
$$
\underset{\eta\searrow0}{n-\lim}\left[\underset{s\searrow-\infty}{n-\lim}
W_{\eta}^*(s)E_j(\chi(\eta s))W_{\eta}(s)\right]=E_j(1),
$$
%%%%%%%%%%
\end{proposition}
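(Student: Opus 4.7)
The plan has two stages: first show that for each fixed $\eta>0$ the inner norm limit exists, then show it converges to $E_j(1)$ as $\eta\searrow 0$ by reducing to a standard adiabatic theorem on a compact slow-time interval.

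Set $Q_\eta(s):=W_\eta^*(s)\,E_j(\chi(\eta s))\,W_\eta(s)$. Because $E_j(\kappa)$ is a spectral projection of $K(\kappa)$, the commutator $[K(\chi(\eta s)),E_j(\chi(\eta s))]$ vanishes, so differentiating with \eqref{W} and \eqref{finalprima4} leaves
$$
\partial_s Q_\eta(s)\,=\,\eta\,\chi'(\eta s)\,W_\eta^*(s)\,E_j'(\chi(\eta s))\,W_\eta(s).
$$
The Rellich analyticity quoted in the text gives $\|E_j'\|_\infty<\infty$ on $[0,1]$, and integrating yields the $\eta$-independent Cauchy estimate
$$
\|Q_\eta(s_2)-Q_\eta(s_1)\|\,\le\,\|E_j'\|_\infty\bigl(\chi(\eta s_2)-\chi(\eta s_1)\bigr),\qquad s_1\le s_2\le 0.
$$
Since $\chi(-\infty)=0$, $F_\eta:=\underset{s\to-\infty}{n-\lim}\,Q_\eta(s)$ exists in norm.

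For the adiabatic limit, fix $T>0$. Setting $s_1=-T/\eta$ and $s_2\to-\infty$ the same estimate gives $\|F_\eta-Q_\eta(-T/\eta)\|\le \|E_j'\|_\infty\chi(-T)$, which is $\eta$-independent and arbitrarily small for large $T$. It therefore suffices to prove that, for every fixed $T>0$,
$$
Q_\eta(-T/\eta)=\tilde U_\eta(0,-T)\,E_j(\chi(-T))\,\tilde U_\eta^*(0,-T)\;\underset{\eta\searrow 0}{\longrightarrow}\;E_j(1),
$$
where the slow-time propagator $\tilde U_\eta(\tau,\tau_0):=W_\eta(\tau/\eta)W_\eta(\tau_0/\eta)^*$ solves $i\eta\partial_\tau \tilde U_\eta=K(\chi(\tau))\tilde U_\eta$ on the compact interval $[-T,0]$. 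Kato's parallel-transport propagator $A(\tau,-T)$, generated by $i\chi'(\tau)[E_j'(\chi(\tau)),E_j(\chi(\tau))]$, obeys the intertwining $A(\tau,-T)E_j(\chi(-T))=E_j(\chi(\tau))A(\tau,-T)$ and thereby sends $E_j(\chi(-T))$ to $E_j(1)$; a Duhamel comparison of $\tilde U_\eta$ with $A$ (modulo a fast phase that preserves each instantaneous eigenprojection) delivers the convergence on any subinterval where $\varepsilon_j(\chi(\tau))$ stays isolated in $\sigma(K(\chi(\tau)))$.

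The main obstacle is the possible crossing $\tau_0\in(-T,0)$ allowed by Hypothesis \ref{Hyp-33}: there the pure-point gap closes and the gapped adiabatic theorem fails for the rank-one projection $E_j$. I would handle it by splitting $[-T,0]=[-T,\tau_0-\delta]\cup[\tau_0-\delta,\tau_0+\delta]\cup[\tau_0+\delta,0]$. On the outer pieces the local gap is $\gtrsim \delta$ thanks to $\chi'(\tau_0)>0$ and the analyticity of $\varepsilon_1-\varepsilon_2$, so the gapped theorem gives an $O(\eta/\delta)$ error. On the middle window the $\eta$-independent Cauchy bound above controls the variation of $Q_\eta$ by $O(\delta)$, while the analyticity of $\kappa\mapsto E_j(\kappa)$ across $\kappa_0$ (Hypothesis \ref{Hyp-33}) guarantees that Kato's transport $A(\tau_0+\delta,\tau_0-\delta)$ remains the correct intertwiner for the rank-one projections, so the instantaneous projection $E_j(\chi(\tau))$ itself only moves by $O(\delta)$. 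Balancing with $\delta=\sqrt\eta$ produces an overall $O(\sqrt\eta)$ bound. Combined with the earlier estimate, $\limsup_{\eta\to 0}\|F_\eta-E_j(1)\|\le \|E_j'\|_\infty\chi(-T)$ for every $T>0$, and letting $T\to\infty$ closes the proof.
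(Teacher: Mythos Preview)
Your strategy is the paper's strategy: differentiate $Q_\eta(s)$ using $[K,E_j]=0$ to get the $\eta$-uniform Cauchy estimate, and then split a slow-time interval into a small window around the crossing (handled by the crude $O(\delta)$ Cauchy bound) and two gapped pieces (handled by an adiabatic estimate). The paper carries this out directly on $\mathbb{R}_-$ with a window of width $\eta^\delta$ in slow time and, on the gapped pieces, builds the explicit second-order corrector $Y(\kappa)$ solving $i[K(\kappa),Y(\kappa)]=-E_j'(\kappa)$, which is exactly the Duhamel comparison you allude to.

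There is, however, a quantitative slip in your outer-region estimate. Hypothesis~\ref{Hyp-33} only says the eigenvalues are real analytic and cross at one point; the paper records this as $|\varepsilon_1(\kappa)-\varepsilon_2(\kappa)|\ge C|\kappa-\kappa_0|^M$ for some $M\in\mathbb{N}^*$. Hence at slow-time distance $\delta$ from the crossing the gap is only $\gtrsim\delta^M$, not $\gtrsim\delta$, and the adiabatic error on the outer pieces is governed by the inverse gap (squared, in fact, once you write down $Y$). Your claimed $O(\eta/\delta)$ and the balancing $\delta=\sqrt{\eta}$ are therefore only valid for a transversal crossing $M=1$. The fix is exactly what the paper does: take the window width to be $\eta^\delta$ so that the gap on the outer pieces is $\gtrsim\eta^{M\delta}$, obtain an outer error $O(\eta^{1-2M\delta})$ from the explicit $Y$-construction, and then choose any $\delta\in(0,1/(2M))$. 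With that adjustment your argument goes through and coincides with the paper's.
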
%
%%%%%%%%%%
\noindent which combined with \eqref{E-1} immediately gives:
\vskip2mm
%%%%%%%%%%%%%%%%%%
\begin{corollary}\label{Cor-P-discr}%
%%%%%%%%%%%%%%%%%%
 \begin{align}\label{hc2prima1}
\underset{\eta\searrow0}{n-\lim}\left[\underset{s\searrow-\infty}
{n-\lim}\; W_{\eta}^*(s)E_j(0)W_{\eta}(s)\right]=E_j(1)
\end{align}
and  
\begin{align}\label{hc2prima2}
&\underset{\eta\searrow0}{n-\lim}
\left[\underset{s\searrow-\infty}{n-\lim}\;
e^{isH}E_{ac}(H)W_{\eta}(s)E_{pp}(K(1))\right]=0.
\end{align}
%%%%%%%%%
\end{corollary}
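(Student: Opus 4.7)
The first assertion \eqref{hc2prima1} should follow almost immediately by combining Proposition \ref{P-Discr} with the Lipschitz bound \eqref{E-1}. My plan is to split
\[
W_\eta^*(s)E_j(0)W_\eta(s) = W_\eta^*(s)E_j(\chi(\eta s))W_\eta(s) + W_\eta^*(s)\bigl[E_j(0)-E_j(\chi(\eta s))\bigr]W_\eta(s),
\]
and observe that the second summand has operator norm at most $C\chi(\eta s)$ by \eqref{E-1}. Since $\chi$ is positive, monotone increasing on $\mathbb{R}_-$ and integrable by \eqref{cdh1}, one has $\chi(\eta s)\to 0$ as $s\to -\infty$ for each fixed $\eta>0$, so the inner norm limit in $s$ exists and agrees with that of the first summand. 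The outer $n$-limit $\eta\searrow 0$ then produces $E_j(1)$ by Proposition \ref{P-Discr}.

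For \eqref{hc2prima2}, the idea is to rewrite \eqref{hc2prima1} as an intertwining statement and then use the orthogonality $E_{ac}(H)E_j(0)=0$. Denote by $B_\eta := n\text{-}\lim_{s\to -\infty} W_\eta^*(s)E_j(0)W_\eta(s)$ the inner norm limit that \eqref{hc2prima1} provides. Multiplying on the left by the unitary $W_\eta(s)$ and applying the triangle inequality yields
\[
\bigl\|E_j(0)W_\eta(s) - W_\eta(s)E_j(1)\bigr\| \leq \bigl\|W_\eta^*(s)E_j(0)W_\eta(s) - B_\eta\bigr\| + \bigl\|B_\eta - E_j(1)\bigr\|,
\]
and the right-hand side vanishes under $s\to -\infty$ at fixed $\eta$ (first term, by the definition of $B_\eta$) followed by $\eta\searrow 0$ (second term, by \eqref{hc2prima1}). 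Since $E_j(0)$ is the spectral projection of $H=K(0)$ onto a discrete eigenvalue of $H$, one has $E_{ac}(H)E_j(0)=0$; left-multiplying the previous identity by the contraction $e^{isH}E_{ac}(H)$ therefore gives
\[
\bigl\|e^{isH}E_{ac}(H)W_\eta(s)E_j(1)\bigr\| = \bigl\|e^{isH}E_{ac}(H)\bigl[W_\eta(s)E_j(1)-E_j(0)W_\eta(s)\bigr]\bigr\|,
\]
which vanishes in the same iterated limit. Summing over $j=1,\dots,N$ and invoking $E_{pp}(K(1))=\sum_j E_j(1)$ yields \eqref{hc2prima2}.

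The only delicate point is to avoid any illegitimate exchange of the two limits when passing from \eqref{hc2prima1} to its intertwining version; the triangle inequality above is designed to cleanly separate the two regimes. Apart from this piece of bookkeeping, both assertions are direct algebraic consequences of Proposition \ref{P-Discr} and the elementary orthogonality of distinct spectral subspaces of $H$, so I do not anticipate any genuine obstacle.
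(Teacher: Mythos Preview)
Your argument is correct and follows essentially the same route as the paper. For \eqref{hc2prima1} both proofs simply combine \eqref{E-1} with Proposition~\ref{P-Discr}. For \eqref{hc2prima2} the paper first identifies the inner $s\to-\infty$ limit explicitly as $E_{ac}(H)\omega_\eta^*E_{pp}(K(1))$ via the norm convergence $e^{isH}W_\eta(s)\to\omega_\eta^*$ from \eqref{finaladoua1}, and then invokes \eqref{hc2prima1} in the equivalent form $\omega_\eta E_{pp}(H)\omega_\eta^*\to E_{pp}(K(1))$ (stated separately as a corollary); your triangle-inequality bound packages the same ingredients without naming $\omega_\eta$, which is a mild streamlining. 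One small point worth adding: your estimate only controls the \emph{norm} of $e^{isH}E_{ac}(H)W_\eta(s)E_j(1)$, so to conclude that the stated iterated $n$-limit equals $0$ you should note that the inner norm limit actually exists---this is immediate from $e^{isH}E_{ac}(H)W_\eta(s)=E_{ac}(H)\,e^{isH}W_\eta(s)\to E_{ac}(H)\omega_\eta^*$ in norm.
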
%
%%%%%%%%
While \eqref{hc2prima1} concludes the proof of the adiabatic limit for
the discrete part of the spectrum (even in the uniform topology), the
limit in \eqref{hc2prima2} is a technical result which will play a role
in the contribution of the continuous spectrum.

\subsection{The contribution of the continuous spectrum}

We will now focus our attention on the term coming from the absolutely continuous part of
the spectrum:
\begin{equation}\label{E-2-0}
\underset{\eta\searrow 0}{s-\lim}\left[\underset{s\searrow-\infty}
{s-\lim}\;W_{\eta}^*(s)\rho_{eq}(H)E_{ac}(H)W_{\eta}(s)\right].
\end{equation}

Due to \eqref{hc2prima2} we may conclude that
\begin{align}\label{hc3prima1}
&\underset{\eta\searrow0}{s-\lim}\left[\underset{s\searrow-\infty}{s-\lim}W_{\eta}^*(s)
\rho_{eq}(H)E_{ac}(H)W_{\eta}(s)\right]\nonumber \\
&=\underset{\eta\searrow0}{s-\lim}\left[\underset{s\searrow-\infty}{s-\lim}W_{\eta}^*(s)
E_{ac}(H)e^{-isH}
\rho_{eq}(H)e^{isH}E_{ac}(H)W_{\eta}(s)\right]\nonumber \\
&=\underset{\eta\searrow 0}{s-\lim}\left[\underset{s\searrow-\infty}
{s-\lim}E_{\rm ac}(K(1))W_{\eta}^*(s)\rho_{eq}(H)E_{ac}(H)W_{\eta}(s)E_{ac}(K(1))\right],
\end{align}
provided that the last double strong limit exists. Note that all errors go to
zero in the uniform norm. 

The next step in the proof is to replace $\rho_{eq}(H)$  with 
$\rho_{eq}(\overset{\circ}{H})E_{\rm ac}(\overset{\circ}{H})$ in
\eqref{hc3prima1}. In order to show that we can do that replacement, 
let us write the identity:
\begin{align}
&\{\rho_{eq}(\overset{\circ}{H})E_{\rm
  ac}(\overset{\circ}{H})-\rho_{eq}(H)\}
E_{\rm ac}(H)W_\eta(s) \\
&=-\rho_{eq}(\overset{\circ}{H})
E_{\rm pp}(\overset{\circ}{H})e^{-is H}E_{\rm ac}(H)
\left\{e^{is H}W_\eta(s)\right \}+\{\rho_{eq}(\overset{\circ}{H})
-\rho_{eq}(H)\}e^{-is H}E_{\rm ac}(H)\left\{e^{is H}W_\eta(s)\right \}.\nonumber
\end{align}
When $s\to -\infty$ both terms on the right
hand side converge to zero due to the fact that 
$e^{is H}W_\eta(s)$ is convergent in the operator
norm, $Ce^{-itA}P_{\rm ac}(A)$ converges strongly to $0$ for any $A$ selfadjoint and $C$ compact \cite[Lem.1,I $\mathsection$4.4]{Yafaev} and using the fact that $\rho_{eq}(\overset{\circ}{H})
E_{\rm pp}(\overset{\circ}{H})$ is compact and the following result (see $\mathsection$~\ref{SSec-H0} for the proof):

\vskip2mm
%%%%%%%%%%%%%%%%
\begin{proposition}\label{P-0}%
%%%%%%%%%%%%%%%%
\ For any continuous function $\Phi\in C(\mathbb{R})$ which tends to
 zero to infinity, 
we have that $\Phi(H)-\Phi(\overset{\circ}{H})$ is a compact operator.
\end{proposition}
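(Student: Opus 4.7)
I would attack the statement by first reducing to the compactness of a single resolvent difference and then obtaining a Krein-type factorization that passes through a Hilbert space of boundary data on the internal cross-sections $\D_{\pm a}:=\L\cap\{x_{||}=\pm a\}$. Precisely, since $H,\overset{\circ}{H}\geq 0$, fix $E>0$ and set $R := (H+E)^{-1}$, $R^\circ := (\overset{\circ}{H}+E)^{-1}$. The $C^*$-subalgebra of $C_0(\mathbb{R})$ generated by the single function $t\mapsto (t+E)^{-1}$ separates points and vanishes at infinity, so by Stone--Weierstrass it is uniformly dense in $C_0(\mathbb{R})$. Continuous functional calculus is a $C^*$-homomorphism and the compact operators form a norm-closed two-sided ideal in $\mathbb{B}(\mathcal{H})$, so once $R-R^\circ$ is known to be compact the conclusion follows for every $\Phi\in C_0(\mathbb{R})$.

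For the factorization, fix $u\in\mathcal{H}$ and set $f:=Ru\in\mathbb{H}_D(\L)$ and $g:=R^\circ u\in\overset{\circ}{\mathbb{H}}_D(\L)$. Both lie in $H^1_0(\L)$ and satisfy $(-\Delta+w+E)\,\cdot = u$ in the distributional sense separately on $\L_-$, $\C$, $\L_+$; the crucial distinction is that the trace of $g$ on $\D_{\pm a}$ vanishes whereas the trace of $f$ does not. Thus $h := f - g \in H^1_0(\L)$ solves the homogeneous boundary value problem
\[
(-\Delta+w+E)h = 0\ \text{on}\ \L_-\cup\C\cup\L_+,\qquad h|_{\partial\L} = 0,\qquad h|_{\D_{\pm a}} = f|_{\D_{\pm a}}.
\]
Denoting by $\gamma:H^2(\L)\to H^{3/2}(\D_{-a})\oplus H^{3/2}(\D_a)$ the interior trace onto the smooth embedded hypersurfaces $\D_{\pm a}$ and by $\mathcal{E}$ the solution operator of the above BVP on the three pieces, one obtains the Krein-type factorization $R-R^\circ = \mathcal{E}\circ\gamma\circ R$.

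Compactness of $\gamma\circ R:\mathcal{H}\to L^2(\D_{-a})\oplus L^2(\D_a)$ is then immediate: global elliptic regularity (available since $\partial\L$ is $C^\infty$) gives $R:\mathcal{H}\to H^2(\L)$ bounded, the interior trace onto the $\D_{\pm a}$ is bounded into $H^{3/2}$, and Rellich--Kondrachov yields the compact embedding $H^{3/2}(\D_{\pm a})\hookrightarrow L^2(\D_{\pm a})$ because $\D$ is a bounded smooth domain in $\R^d$. Boundedness of $\mathcal{E}:L^2(\D_{-a})\oplus L^2(\D_a)\to L^2(\L)$ can be established by separation of variables in the leads: expanding the boundary datum $\phi$ in the eigenbasis $\{\psi_n\}$ of $-\mathfrak{L}_\D$ with eigenvalues $\lambda_n\uparrow\infty$, the solution in $\L_\pm$ reads $\sum_n c_n\psi_n(\x_\perp)\,e^{-\sqrt{\lambda_n+E}\,|x_{||}\mp a|}$, whose $L^2(\L_\pm)$-norm is controlled by $(2\sqrt{E})^{-1/2}\|\phi\|_{L^2(\D)}$; on the bounded piece $\C$ the Dirichlet BVP with mixed $L^2$/zero data on the $C^\infty$ boundary is $L^2$-bounded by the standard very-weak elliptic theory. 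A composition of a bounded and a compact operator is compact, whence $R-R^\circ$ is compact, and the proposition follows.

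The main technical hurdle I foresee is the rigorous justification of $R-R^\circ = \mathcal{E}\gamma R$ as an honest identity of bounded operators on $\mathcal{H}$, together with the $L^2$-boundedness of $\mathcal{E}$ with merely $L^2$ data on the cross-sections; both rest on the $C^\infty$ regularity of $\S$ and the product structure of the leads, which is precisely why such smoothness has been built into the setup.
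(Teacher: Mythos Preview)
Your argument is correct; the reduction to a single resolvent difference via Stone--Weierstrass is equivalent to the paper's Helffer--Sj\"ostrand reduction, and your Krein-type factorisation $R-R^\circ=\mathcal{E}\,\gamma\,R$ is a legitimate way to exhibit compactness. The paper, however, proceeds differently after the reduction: rather than passing through traces on the cross-sections, it observes that $H$ and $\overset{\circ}{H}$ are two self-adjoint extensions of the same symmetric operator $D_0$, so that $\mathrm{Ran}\,(R(z)-\overset{\circ}{R}(z))\subset\ker(D_0^*-z)$. Restricted to the leads $\L_\pm$, elements of this deficiency space are explicit superpositions $\sum_n\beta_n e^{\zeta_{\pm,n}x}w_n(\x_\perp)$ with $\Re\zeta_{\pm,n}$ bounded away from zero, which yields directly a \emph{two-sided exponentially weighted} bound $\|\Psi_\alpha(Q_1)(R(z)-\overset{\circ}{R}(z))\Psi_\alpha(Q_1)\|<\infty$ (their Lemma~\ref{exp-dec}). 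Compactness then follows from this localisation together with $H^1$ regularity and Rellich--Kondrachov on bounded pieces.

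The trade-off is this: your factorisation is conceptually clean and isolates the compactness in the single step $H^{3/2}(\D_{\pm a})\hookrightarrow L^2(\D_{\pm a})$, but it requires a separate argument for the $L^2$-boundedness of the Poisson-type operator $\mathcal{E}$ on the bounded piece $\C$ (where, note, the boundary $\partial\C$ has a dihedral edge at $\D_{\pm a}\cap\S$ since the internal walls meet the cylinder wall at a right angle, so the ``very-weak elliptic theory'' invocation is not entirely routine). The paper's route avoids any boundary-value machinery on $\C$ and, more importantly, produces as a by-product the exponential weighted estimate \eqref{L2-descr-exp}, which is \emph{reused} later (Section~\ref{sc-w-bias} and \S\ref{ProofOfProposition2.8}) to prove trace-class properties for Birman--Kuroda and the propagation estimates for $\Xi_\eta$. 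So within the economy of the paper their approach does double duty; yours proves the proposition but would not by itself supply those later ingredients.
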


\vspace{0.5cm}

Up to now we have shown that the limit in \eqref{hc3prima1} must equal:
  \begin{align}\label{hc3priima1}
\underset{\eta\searrow 0} {s-\lim}\left\{\underset{s\searrow-\infty} {s-\lim}\; 
E_{\rm ac}(K(1))W_\eta^*(s)E_{\rm ac}(H)\rho_{eq}(\overset{\circ}{H})
E_{\rm ac}(\overset{\circ}{H})E_{\rm ac}(H)W_\eta(s)E_{\rm
  ac}(K(1))\right\}.
\end{align}
For the next step we will need a comparison dynamics for $W_\eta(t)$, generated
by the operator with internal Dirichlet walls. To the decoupled
Hamiltonian we can associate:
\begin{equation}\label{dec-t-dep-hamilt}
\overset{\circ}{K}(\chi(\eta t)):=\overset{\circ}{H}+\chi(\eta t)V.
\end{equation}
The associated evolution
$\overset{\circ}{W}_{\eta}(t)$ is defined as the solution of the following Cauchy problem:
$$
\left\{
 \begin{array}{l}
  i\partial_t\overset{\circ}{W}_{\eta}(t)=\overset{\circ}{K}_\eta(t)\overset{\circ}
{W}_{\eta}(t)\\
  \overset{\circ}{W}_{\eta}(0)=1
 \end{array}
\right.
$$
 (its existence results by arguments similar to 
those concerning the existence of $W_{\eta}(t)$).

An important observation is the fact that $\overset{\circ}{\Delta}_D$ 
commutes with $V$ so that we have
\begin{equation}\label{finalprima5}
 \overset{\circ}{W}_{\eta}(t)=e^{-it\overset{\circ}{H}}
\left[1+\Pi_-\left(e^{-iv_-\int_0^t\chi(\eta u)du}-1\right)+
\Pi_+\left(e^{-iv_+\int_0^t\chi(\eta u)du}-1\right)\right]
\end{equation}
with the exponentials in the second factor being just complex
numbers. All terms commute which each other. 
Therefore the limit in \eqref{hc3priima1} must equal: 
\begin{equation}\label{hc1prima7}\underset{\eta\searrow 0} {s-\lim}
\left\{\underset{s\searrow-\infty} {s-\lim}\,E_{\rm
    ac}(K(1))W_\eta^*(s)E_{\rm
    ac}(H)\overset{\circ}{W}_\eta(s)\rho_{eq}(\overset{\circ}{H})
E_{\rm ac}(\overset{\circ}{H})\overset{\circ}{W}^*_\eta(s)E_{\rm
  ac}(H)W_\eta(s)
E_{\rm ac}(K(1))\right\}.
\end{equation}

We state a result which will be proved later ( see $\mathsection$~\ref{SSec-H0}):
\vskip2mm
%%%%%%%%%%%%%%%%
\begin{proposition}\label{P-2}%
%%%%%%%%%%%%%%%%
\ The following limits  exist in the strong operator topology:
\begin{equation}\label{hc3prima5}
\Xi_\eta:=\underset{s\searrow-\infty}{\lim}\; E_{\rm
  ac}(K(1))W_\eta^*(s)E_{\rm ac}(H)\overset{\circ}{W}_\eta(s)E_{\rm
  ac}(\overset{\circ}{H}).
\end{equation}

%%%%%%%%%%
\end{proposition}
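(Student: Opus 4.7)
The plan is to use the explicit formula \eqref{finalprima5} for $\overset{\circ}{W}_\eta(s)$ to strip off its time-dependent part, and thereby reduce the existence of $\Xi_\eta$ to the existence of a standard, time-independent M{\o}ller wave operator for the pair $\{\overset{\circ}{H},H\}$. Using $\Pi_-+\Pi_0+\Pi_+=I$, I first write
$$\overset{\circ}{W}_\eta(s)=e^{-is\overset{\circ}{H}}M_\eta(s),\qquad M_\eta(s):=\Pi_0+\Pi_+e^{-iv_+\alpha_\eta(s)}+\Pi_-e^{-iv_-\alpha_\eta(s)},$$
where $\alpha_\eta(s):=\int_0^s\chi(\eta u)\,du$. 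Since $\chi\in L^1(\mathbb{R}_-)$, $\alpha_\eta(s)$ has a finite limit as $s\to-\infty$, so $M_\eta(s)$ converges in operator norm to a unitary $M_\eta^\infty$. Because $\overset{\circ}{H}$ is block-diagonal with respect to the decomposition of \eqref{H-dec}, each of $\Pi_0,\Pi_\pm$ commutes with $\overset{\circ}{H}$ and with $E_{\mathrm{ac}}(\overset{\circ}{H})$, and therefore so does $M_\eta(s)$.

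Next I split $W_\eta^*(s)=[W_\eta^*(s)e^{-isH}]\,e^{isH}$. By the estimate \eqref{Cauchy-1}, the first bracket converges in operator norm to $\omega_\eta$. Using also that $e^{\pm isH}$ commutes with $E_{\mathrm{ac}}(H)$, the operator under the limit in \eqref{hc3prima5} becomes
$$E_{\mathrm{ac}}(K(1))\,[W_\eta^*(s)e^{-isH}]\,E_{\mathrm{ac}}(H)\,\bigl[e^{isH}e^{-is\overset{\circ}{H}}E_{\mathrm{ac}}(\overset{\circ}{H})\bigr]\,M_\eta(s).$$
Composing a strongly convergent factor with norm-convergent ones preserves strong convergence of the product, so the existence of the limit reduces to the existence of the standard M{\o}ller wave operator
$$\Omega_-(H,\overset{\circ}{H}):=\underset{s\to-\infty}{s-\lim}\;e^{isH}e^{-is\overset{\circ}{H}}E_{\mathrm{ac}}(\overset{\circ}{H}).$$
Granted this, the intertwining relation $E_{\mathrm{ac}}(H)\Omega_-(H,\overset{\circ}{H})=\Omega_-(H,\overset{\circ}{H})$ yields the explicit identification
$$\Xi_\eta=E_{\mathrm{ac}}(K(1))\,\omega_\eta\,\Omega_-(H,\overset{\circ}{H})\,M_\eta^\infty.$$

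The main obstacle is thus the existence of $\Omega_-(H,\overset{\circ}{H})$. Since $\mathbb{H}_D(\mathcal{L})$ and $\overset{\circ}{\mathbb{H}}_D(\mathcal{L})$ differ by the extra Dirichlet constraint at $x_{||}=\pm a$, the formal difference $H-\overset{\circ}{H}$ is a singular boundary perturbation, not a bounded operator, and one must work at the level of resolvents. My plan is to invoke the Birman-Kato invariance principle by showing that some sufficiently high power of the resolvent difference $(H+i)^{-k}-(\overset{\circ}{H}+i)^{-k}$ is trace class. Proposition \ref{P-0} already delivers compactness of $\Phi(H)-\Phi(\overset{\circ}{H})$ for $\Phi\in C_0(\mathbb{R})$, and the very same type of analysis, exploiting the product structure \eqref{sect-Lapl} of the lead Laplacians together with the $C^\infty$ smoothness of the boundary $\boldsymbol{\Sigma}$, should sharpen into a trace-class estimate for a high enough resolvent power. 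The Birman-Kato theorem then delivers $\Omega_-(H,\overset{\circ}{H})$ as a partial isometry from $E_{\mathrm{ac}}(\overset{\circ}{H})\mathcal{H}$ onto $E_{\mathrm{ac}}(H)\mathcal{H}$, completing the proof.
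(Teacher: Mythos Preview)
Your argument is correct and takes a genuinely different route from the paper's. The paper proves Proposition~\ref{P-2} by a direct Cook-type argument: it works on the dense set $\mathcal{V}_\delta$ of vectors with compact spectral support away from thresholds, rewrites $\Xi_\eta(s)f$ via resolvent insertions as $\Phi_\eta(s)-\Psi_\eta(s)$, and uses non-stationary-phase propagation estimates together with the exponential localization of $\big(K(\chi(\eta s))+1\big)^{-1}-\big(\overset{\circ}{K}(\chi(\eta s))+1\big)^{-1}$ to show $\Psi_\eta(s)\to 0$ and $\partial_s\Phi_\eta(s)\in L^1$. Your approach instead factorizes $\overset{\circ}{W}_\eta(s)=e^{-is\overset{\circ}{H}}M_\eta(s)$ and $W_\eta^*(s)=[W_\eta^*(s)e^{-isH}]e^{isH}$, reducing everything to the existence of the stationary wave operator $e^{isH}e^{-is\overset{\circ}{H}}E_{\rm ac}(\overset{\circ}{H})$, which is precisely Proposition~\ref{P-1} (your $\Omega_-(H,\overset{\circ}{H})$ is the paper's $\omega_-^*$); you need not redo the trace-class analysis, just cite Proposition~\ref{P-1}. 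Your proof is shorter and yields the clean identity $\Xi_\eta=E_{\rm ac}(K(1))\,\omega_\eta\,\omega_-^*\,M_\eta^\infty$. The trade-off is that the paper's approach produces the explicit integral representation \eqref{cdh49} for $\Xi_\eta f$, with integrands bounded \emph{uniformly in $\eta$}; this is exactly what is needed (via dominated convergence) in the proof of Proposition~\ref{P-4}. In your formula both $\omega_\eta$ and $M_\eta^\infty$ oscillate without limit as $\eta\searrow 0$ (since $\alpha_\eta(-\infty)=\eta^{-1}\int_{-\infty}^0\chi$), so the adiabatic limit would have to be argued differently.
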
%
%%%%%%%%%

\vspace{0.5cm}
One can see that the product of operators in the limit
\eqref{hc3prima5} coincides with the product of operators placed at the left of
$\rho_{eq}(\overset{\circ}{H})$ in \eqref{hc1prima7}. At the same
time, at the right of $\rho_{eq}(\overset{\circ}{H})$ is the adjoint
of the same product.

Now if we can prove that $\Xi_\eta^*$  can be written in the following  way:
\begin{equation}\label{Cor-P-2}
\Xi_\eta^*=\underset{s\searrow-\infty}{s-\lim}\;E_{\rm
  ac}(\overset{\circ}{H})\overset{\circ}{W}_\eta^*(s)E_{\rm
  ac}(H)W_\eta(s)E_{\rm ac}(K(1)),
\end{equation}
then the limit $s\to-\infty$ in \eqref{hc1prima7} would give:
\begin{equation}\label{hdcprima1}
\Xi_\eta\rho(\overset{\circ}{H})\Xi_\eta^*.
\end{equation}

Indeed, since Proposition \ref{P-2} implies the existence of the weak 
limit:
$$
\Xi_\eta^*=\underset{s\searrow-\infty}{w-\lim}\;E_{\rm
  ac}(\overset{\circ}{H})\overset{\circ}{W}_\eta^*(s)E_{\rm ac}(H)W_\eta(s)E_{\rm ac}(K(1)),
$$
then \eqref{Cor-P-2} holds if we can prove the existence of a strong
limit. Now in order to prove that a strong limit
exists, let us insert some operators in the following way:
\begin{align}
&E_{\rm
  ac}(\overset{\circ}{H})
\overset{\circ}{W}_\eta^*(s)E_{\rm ac}(H)W_\eta(s)E_{\rm
  ac}(K(1))\nonumber \\
&=E_{\rm ac}(\overset{\circ}{H})\overset{\circ}{W}_\eta^*(s)e^{-is H}
E_{\rm ac}(H)\left \{e^{is H}W_\eta(s)\right\}E_{\rm
  ac}(K(1))\nonumber \\
&=E_{\rm ac}(\overset{\circ}{H})\left
  \{\overset{\circ}{W}_\eta^*(s)e^{-is \overset{\circ}{H}} \right\}
\left \{e^{is \overset{\circ}{H}}e^{-is H}
E_{\rm ac}(H)\right\} \left \{e^{is H}W_\eta(s)\right\}E_{\rm
  ac}(K(1)).
\end{align}

Let us investigate each curly bracket. The couple $e^{is H}W_\eta(s)$
converges in norm to $\omega_\eta^*$ when
$s\to-\infty$. The factor 
$\overset{\circ}{W}_\eta^*(s)e^{-is \overset{\circ}{H}}$ converges in
norm too, see
\eqref{finalprima5}.  Finally, the factor
$e^{is \overset{\circ}{H}}e^{-is H}E_{\rm ac}(H)$ converges strongly
to the wave operator associated to the pair of Hamiltonians
$\{\overset{\circ}{H},H\}$ as stated by the following proposition
which we will prove later: 
%%%%%%%%%%%%%%%%
\begin{proposition}\label{P-1}%
%%%%%%%%%%%%%%%
The wave operator
$\omega_-:=\underset{s\searrow-\infty}{s-\lim}e^{is\overset{\circ}{H}}e^{-isH}E_{ac}(H)$
exists  as a unitary map from $E_{ac}(H)\mathcal{H}$ onto $E_{ac}(\overset{\circ}{H})\mathcal{H}$ and one has:
$$
\underset{s\searrow-\infty}{s-\lim}e^{isH}e^{-is\overset{\circ}{H}}E_{ac}(\overset{\circ}{H})=\omega_-^*=E_{ac}(H)\omega_-^*.
$$
%%%%%%%%%%
\end{proposition}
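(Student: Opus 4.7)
The plan is to reduce the statement to the Birman--Kato trace-class criterion for the pair $\{H,\overset{\circ}{H}\}$. If the resolvent difference $(H+i)^{-1}-(\overset{\circ}{H}+i)^{-1}$ lies in the trace class $\mathcal{L}^1(\mathcal{H})$, then by the Birman--Kato theorem together with the Kato invariance principle applied to $\phi(x)=(x+i)^{-1}$, the wave operators for $\{H,\overset{\circ}{H}\}$ exist and are complete. In particular, both strong limits of the proposition exist, they are adjoint to each other, and $\omega_-$ restricts to a unitary operator from $E_{ac}(H)\mathcal{H}$ onto $E_{ac}(\overset{\circ}{H})\mathcal{H}$. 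The identity $\omega_-^{*}=E_{ac}(H)\omega_-^{*}$ is then just a rewording of the fact that the range of $\omega_-^{*}$ coincides with $E_{ac}(H)\mathcal{H}$, which is precisely completeness.

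The substantive content is the trace-class bound. The operator $\overset{\circ}{H}$ is obtained from $H$ by imposing additional interior Dirichlet conditions on the two compact smooth cross sections $\Sigma_\pm:=\{\pm a\}\times \mathfrak{D}$; the localized potential $w$ is common to both operators and is supported away from $\Sigma_\pm$. Krein's resolvent formula for this type of boundary perturbation gives a representation
\begin{equation*}
(H-z)^{-1}-(\overset{\circ}{H}-z)^{-1}=\Gamma(\bar z)^{*}\,M(z)^{-1}\,\Gamma(z),
\end{equation*}
with $\Gamma(z)$ a boundary trace map onto $L^2(\Sigma_-\cup\Sigma_+)$ and $M(z)$ a bounded-invertible Weyl operator. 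Interior elliptic regularity pushes $\Gamma(z)$ into $H^{3/2}$ of the boundary, so by standard Schatten estimates on compact smooth surfaces one has $\Gamma(z)\in\mathcal{L}^p$ for some finite $p$; iterating the formula (equivalently, taking a sufficiently high power of the resolvent and invoking the invariance principle for $(x+i)^{-N}$) yields trace class. A more self-contained alternative exploits the cylindrical tensor decomposition \eqref{sect-Lapl} on each lead: the one-dimensional longitudinal difference between having and not having a Dirichlet point at $\pm a$ is an explicit rank-one resolvent perturbation, and the transverse Weyl sum $\sum_n (1+\mu_n)^{-p}<\infty$ for $p>d/2$ ensures trace-class summation over the transverse modes; the bounded region $\C$, where the two operators agree modulo smooth cutoffs, contributes only through interface terms controlled by Proposition~\ref{P-0}.

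The main obstacle is that $H$ and $\overset{\circ}{H}$ do not share a common domain, so the formal difference $H-\overset{\circ}{H}$ is not a bounded perturbation and a naive Cook-type integral estimate is unavailable. Krein's formula, or equivalently a geometric resolvent identity with a smooth cutoff supported near $\pm a$, is the correct device to bypass this, and the only technical work is Schatten-class bookkeeping made tractable by the $C^\infty$ regularity of $\partial\mathfrak{D}$ and $\partial\mathcal{L}$ assumed throughout. Once trace class is in hand, Birman--Kato delivers existence and completeness in one stroke, and both equations of the proposition follow by taking adjoints and using that $\sigma_{sc}(H)=\emptyset$ (the $\kappa=0$ case of Lemma~\ref{LAP-K1}, which also applies to $\overset{\circ}{H}$ since its spectral analysis is even easier because of the tensor structure in the leads).
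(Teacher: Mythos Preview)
Your high-level strategy coincides with the paper's: both reduce Proposition~\ref{P-1} to the Birman--Kuroda theorem by showing that some sufficiently high power of the resolvent difference is trace class, and then invoke the invariance principle. Where you diverge is in the mechanism for the Schatten estimate. The paper does not use Krein's formula or boundary trace operators; instead it first proves (Lemma~\ref{exp-dec}) that the single resolvent difference $R(z)-\overset{\circ}{R}(z)$ carries \emph{exponential} weights $e^{\alpha\langle Q_1\rangle}$ on both sides, by analyzing the deficiency subspace $\ker(D_0^*-z)$ directly in the cylindrical coordinates. It then shows (Lemmas~\ref{cdhlemma1}--\ref{psiR-HS}) via Sobolev embedding that $F(Q_1)R(z)^{k}$ and $F(Q_1)\overset{\circ}{R}(z)^{k}$ are Hilbert--Schmidt for $k$ large and $F\in L^2$, and finally telescopes $R(z)^p-\overset{\circ}{R}(z)^p$ (formula~\eqref{dif-rez}) so that each term is a product of two Hilbert--Schmidt factors glued by the exponentially localized difference.

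Your Krein-formula route is a legitimate alternative and is arguably more conceptual; it also generalizes more readily to non-cylindrical geometries. The paper's approach, by contrast, is more self-contained and reuses the same exponential-weight machinery later needed for $K(\kappa)$ (Section~5.3) and for the propagation estimates in Section~6. One caution on your second alternative: invoking Proposition~\ref{P-0} to handle the sample region is not enough, since that result only yields compactness, not trace class; you would still need the exponential decay of Lemma~\ref{exp-dec} (or its analogue) to upgrade the interface contributions to $\mathcal{L}^1$.
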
%
%%%%%%%%%

\vspace{0.5cm}

Now we can introduce \eqref{hc3prima5} and \eqref{Cor-P-2} in
\eqref{hc1prima7}, and see that the contribution coming from the
continuous part of the spectrum will be:
\begin{equation}\label{hc1prima11}
\underset{\eta\searrow 0}{s-\lim}\; \Xi_\eta\rho(\overset{\circ}{H})\Xi_\eta^*.
\end{equation}
\\
The next step in our strategy is to prove that the strong limits of 
$\Xi_\eta$ and $\Xi_\eta^*$ exist when $\eta\searrow 0$, and they will 
equal the wave operators associated to the pair of Hamiltonians 
$\{\overset{\circ}{K}(1),K(1)\}$. First, we need to be sure that these
limiting operators exist and are complete, and this is stated by the
following proposition:
\begin{proposition}\label{P-3}
\begin{enumerate}
 \item For any $\kappa\in[0,1]$ we have 
$E_{ac}(\overset{\circ}{K}(\kappa))=E_{ac}(\overset{\circ}{H})$.
\item The following limits exist: 
\begin{align}\label{hc4prima1}
&\underset{s\searrow-\infty}{s-\lim}e^{isK(1)}e^{-is\overset{\circ}{K}(1)}
E_{ac}(\overset{\circ}{H})
=:\Xi_{0}=E_{ac}(K(1))\Xi_{0}E_{ac}(\overset{\circ}{H});\nonumber \\
&\underset{s\searrow-\infty}{s-\lim}e^{is\overset{\circ}{K}(1)}e^{-isK(1)}E_{ac}(K(1))=
\Xi_{0}^*=E_{ac}(\overset{\circ}{H})\Xi_{0}^*E_{ac}(K(1)).
\end{align}
Thus the wave operators associated to the pair
$\{\overset{\circ}{K}(1), K(1)\}$ exist and are complete.
\end{enumerate}
\end{proposition}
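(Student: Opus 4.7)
The plan is to split the proposition into two essentially independent parts. Part 1 is a structural direct-sum computation exploiting the special form of the bias $V$, while part 2 is a short derivation from Lemma \ref{lemamain} (which I take as given), part 1, and a standard completeness-to-inverse-wave-operator argument.

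For part 1, the crucial observation is that each of the projections $\Pi_-, \Pi_0, \Pi_+$ commutes with $\overset{\circ}{H}$ by construction: by definition $\overset{\circ}{H}$ splits as an orthogonal direct sum of $\overset{\circ}{H}_\pm := -\overset{\circ}{\Delta}_{D,\pm}$ acting on $\mathcal{H}_\pm$ and $\overset{\circ}{H}_0 := -\overset{\circ}{\Delta}_{D,0} + w$ acting on $\mathcal{H}_0$ (using that $\mathrm{supp}\, w \subset \widetilde{\C} \subset \C$). Since $V = v_-\Pi_- + v_+\Pi_+$, it commutes with $\overset{\circ}{H}$ and hence
$$\overset{\circ}{K}(\kappa) = (\overset{\circ}{H}_- + \kappa v_-) \oplus \overset{\circ}{H}_0 \oplus (\overset{\circ}{H}_+ + \kappa v_+).$$
The middle summand acts on $L^2(\C)$, a bounded domain with Dirichlet boundary conditions, and therefore has compact resolvent, so $E_{ac}(\overset{\circ}{H}_0) = 0$. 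On each lead summand, the extra term $\kappa v_\pm$ is a real constant that merely translates the spectrum without altering any spectral subspace, so $E_{ac}(\overset{\circ}{H}_\pm + \kappa v_\pm) = E_{ac}(\overset{\circ}{H}_\pm)$. Summing the three contributions yields $E_{ac}(\overset{\circ}{K}(\kappa)) = E_{ac}(\overset{\circ}{H})$ for every $\kappa \in [0,1]$.

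For part 2, Lemma \ref{lemamain} already establishes the existence of $\Xi_0$ as a strong limit and its unitarity from $E_{ac}(\overset{\circ}{K}(1))\mathcal{H}$ onto $E_{ac}(K(1))\mathcal{H}$. Substituting $E_{ac}(\overset{\circ}{K}(1)) = E_{ac}(\overset{\circ}{H})$ from part 1 immediately gives the first line of (\ref{hc4prima1}); the intertwining factorization $\Xi_0 = E_{ac}(K(1))\Xi_0 E_{ac}(\overset{\circ}{H})$ is automatic from the standard intertwining property $e^{itK(1)}\Xi_0 = \Xi_0 e^{it\overset{\circ}{K}(1)}$. For the second line, I would invoke the classical fact that existence plus completeness of a wave operator implies the existence of the adjoint wave operator as a strong limit: taking adjoints in the defining strong limit gives weak convergence of $e^{is\overset{\circ}{K}(1)}e^{-isK(1)}\varphi$ to $\Xi_0^*\varphi$, and for $\varphi \in E_{ac}(K(1))\mathcal{H}$ the unitarity identities $\|e^{is\overset{\circ}{K}(1)}e^{-isK(1)}\varphi\| = \|\varphi\| = \|\Xi_0^*\varphi\|$ upgrade weak to strong convergence, with the limit automatically landing in $E_{ac}(\overset{\circ}{H})\mathcal{H}$.

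The main obstacle is really Lemma \ref{lemamain} itself, whose proof (deferred to a later section) is genuinely non-trivial since it concerns existence and completeness of scattering for a pair of Hamiltonians differing by the removal of internal Dirichlet walls in the presence of a non-vanishing bias. Given that input, Proposition \ref{P-3} amounts to bookkeeping: the only substantive ingredient is the commutation $[\overset{\circ}{H}, V] = 0$ and the resulting direct-sum decomposition in part 1.
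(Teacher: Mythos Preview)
Your treatment of part 1 is correct and matches the paper: the commutation $[\overset{\circ}{H},V]=0$ and the resulting direct-sum decomposition is exactly what the paper uses (see the paragraph preceding Lemma~\ref{psi-K1-HS}).

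Your argument for part 2, however, is circular. You take Lemma~\ref{lemamain} as a black box and observe that Proposition~\ref{P-3}(2) is then bookkeeping. But in the paper there is \emph{no} independent proof of Lemma~\ref{lemamain}: the wave-operator content of that lemma is precisely what Section~5.3 (``Proof of Proposition~\ref{P-3}'') establishes. Lemma~\ref{lemamain} is stated in the introduction only as a summary of results to be proved later; its proof is the combination of Lemma~\ref{LAP-K1} (absence of singular continuous spectrum) and Proposition~\ref{P-3} itself. So invoking Lemma~\ref{lemamain} to prove Proposition~\ref{P-3}(2) is assuming the conclusion.

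The paper's actual proof of part 2 proceeds via the Birman--Kuroda theorem: one shows that a sufficiently high power of the resolvent difference $R_\kappa(z)^n-\overset{\circ}{R}_\kappa(z)^n$ is trace class, which yields existence and completeness simultaneously. The non-trivial step is Lemma~\ref{psi-K1-HS}, which adapts the Hilbert--Schmidt estimates of Section~\ref{WOp-H} to $K(\kappa)$. The difficulty is that $V$ has a singular commutator with $H$ at $x_\parallel=\pm a$, so $R_\kappa(z)^k$ does not map into $H^{2k}(\L)$; the paper circumvents this by using that $V$ commutes with the transverse derivatives, obtaining anisotropic Sobolev regularity $H^2(\mathbb{R};H^{2(k-1)}(\mathbb{R}^d))$ which still embeds into $BC(\L)$ for $k$ large. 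Your proposal bypasses all of this, and in doing so omits the actual content of the proof.
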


\vspace{0.5cm}

The next technical result establishes the adiabatic limit for the wave 
operators $\Xi_\eta$; note that Dollard \cite{Dol} investigated a
related problem in the case of short range and relatively bounded perturbations.  
\begin{proposition}\label{P-4}
$\Xi_{\eta}$ has a strong limit when $\eta\searrow 0$ and moreover 
$
\underset{\eta\searrow0}{s-\lim}\;\Xi_{\eta}=\Xi_{0}
$, where $\Xi_0$ is the stationary wave operator  associated to the pair 
$\{\overset{\circ}{K}(1),K(1)\}$ and is unitary as a map from $E_{\rm
  ac}(\overset{\circ}{H})$ onto $E_{\rm
  ac}(K(1))$..
\end{proposition}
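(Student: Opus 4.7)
My plan is to combine a Cook-type uniform estimate (in the adiabatic parameter $\eta$) with norm-pointwise convergence of the two propagators, exchanging the limits $t\to-\infty$ and $\eta\searrow 0$ via a triangle inequality.

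First I simplify the defining expression of $\Xi_\eta$. Using the explicit formula $\overset{\circ}{W}_\eta(t)=e^{-it\overset{\circ}{H}}e^{-iV\Phi_\eta(t)}$ from \eqref{finalprima5} (with $\Phi_\eta(t):=\int_0^t\chi(\eta u)\,du$), the finite-rank Hypothesis~\ref{Hyp-2}, and $e^{-iu\overset{\circ}{H}}E_{ac}(\overset{\circ}{H})\psi\rightharpoonup 0$, the middle projection $E_{ac}(H)$ is strongly redundant, so
$$\Xi_\eta=\underset{t\to-\infty}{s-\lim}\; E_{ac}(K(1))\,\Omega_\eta(t)\, E_{ac}(\overset{\circ}{H}),\qquad \Omega_\eta(t):=W_\eta^*(t)\overset{\circ}{W}_\eta(t).$$
Differentiating via \eqref{W} and \eqref{finalprima4}, the time-dependent bias $\chi(\eta t)V$ cancels out and yields
$$i\partial_t\Omega_\eta(t)=W_\eta^*(t)\bigl(H-\overset{\circ}{H}\bigr)\overset{\circ}{W}_\eta(t);$$
the only source of non-trivial time derivative is the $\eta$-independent geometric perturbation $H-\overset{\circ}{H}$ (a boundary current on $\{x_{||}=\pm a\}$), which is the same object whose control ultimately produces Proposition~\ref{P-3}.

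The core technical step is a uniform Cook bound: for $\psi$ in a dense subset of $E_{ac}(\overset{\circ}{H})\mathcal{H}$,
$$\int_{-\infty}^{t_0}\bigl\|E_{ac}(K(1))W_\eta^*(u)\bigl(H-\overset{\circ}{H}\bigr)\overset{\circ}{W}_\eta(u)E_{ac}(\overset{\circ}{H})\psi\bigr\|\,du\leq g_\psi(t_0),$$
with $g_\psi(t_0)\to 0$ as $t_0\to-\infty$ uniformly in $\eta\in(0,\eta_0]$. I would establish this in two reductions: $\overset{\circ}{W}_\eta(u)=e^{-iu\overset{\circ}{H}}e^{-iV\Phi_\eta(u)}$ differs from the stationary decoupled evolution only by a norm-one phase commuting with $\overset{\circ}{H}$, so the boundary-trace decay on the internal walls is identical to that at $\eta=0$; and writing $W_\eta^*(u)=U_\eta^*(u)\,e^{iuH}$ in the $H$-interaction picture, $U_\eta^*(u)$ is a norm-Cauchy Dyson series by \eqref{Cauchy-1}, so the adjoint estimates reduce to the stationary Kato-type bounds underlying Proposition~\ref{P-3}, with constants uniform in $\eta$.

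A three-$\varepsilon$ argument then concludes: fixing $\psi$ and $\varepsilon>0$, I choose $|t_0|$ large so that the Cook bound controls $\|\Xi_\eta\psi-E_{ac}(K(1))\Omega_\eta(t_0)E_{ac}(\overset{\circ}{H})\psi\|<\varepsilon/3$ uniformly in $\eta$, and, using Proposition~\ref{P-3}, $\|E_{ac}(K(1))e^{it_0 K(1)}e^{-it_0\overset{\circ}{K}(1)}E_{ac}(\overset{\circ}{H})\psi-\Xi_0\psi\|<\varepsilon/3$. For this fixed $t_0$, a Duhamel expansion and $1-\chi(\eta u)\to 0$ uniformly on $[t_0,0]$ give $W_\eta(t_0)\to e^{-it_0K(1)}$ and $\overset{\circ}{W}_\eta(t_0)\to e^{-it_0\overset{\circ}{K}(1)}$ in operator norm as $\eta\searrow 0$, so the remaining gap is $<\varepsilon/3$ for $\eta$ small, and density extends the conclusion to all of $E_{ac}(\overset{\circ}{H})\mathcal{H}$. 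The unitarity claim is inherited from Proposition~\ref{P-3} and Lemma~\ref{lemamain}. The main obstacle is the uniform Cook bound: it asks that the dispersive/boundary-trace estimates supporting Proposition~\ref{P-3} survive, unchanged up to bounded phases, under the adiabatic dynamics.
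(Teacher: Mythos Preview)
Your overall architecture --- a Cook estimate uniform in $\eta$, followed by a three-$\varepsilon$ exchange of limits --- is exactly the strategy the paper uses. The paper phrases the last step as Lebesgue dominated convergence applied to the integral formula \eqref{cdh49} for $\Xi_\eta f$, but this is equivalent to your triangle argument.

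The genuine gap is in your implementation of the uniform Cook bound. You write the derivative as $W_\eta^*(u)(H-\overset{\circ}{H})\overset{\circ}{W}_\eta(u)$ and propose to integrate it, but $H$ and $\overset{\circ}{H}$ have different domains (extra Dirichlet walls at $x_{||}=\pm a$), so $H-\overset{\circ}{H}$ is a singular form perturbation, not an operator you can put inside a norm integral. Your appeal to ``the stationary Kato-type bounds underlying Proposition~\ref{P-3}'' does not help here: Proposition~\ref{P-3} is proved by Birman--Kuroda (trace-class difference of high resolvent powers), not by a Cook integral, so there is no ready-made integrable bound to borrow. The paper resolves this by a resolvent regularization: it inserts $(\overset{\circ}{K}(\chi(\eta s))+1)^{-2}$ and rewrites the would-be perturbation as the bounded, exponentially localized resolvent difference $(K(\chi(\eta s))+1)^{-1}-(\overset{\circ}{K}(\chi(\eta s))+1)^{-1}$ (see \eqref{cdh36} and \eqref{cdh45}). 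This localization, combined with the explicit non-stationary-phase estimate \eqref{cdh44} on the decoupled lead dynamics (uniform in $\eta$ because $\overset{\circ}{W}_\eta$ differs from $e^{-is\overset{\circ}{H}}$ only by a commuting phase, exactly as you note), yields the integrable bound $C/(1+s^2)$ in \eqref{cdh48}. Once you replace your formal $H-\overset{\circ}{H}$ by this resolvent-difference device, your proof goes through and coincides with the paper's.
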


\vspace{0.5cm}

We see that the very last thing to be shown in order to finish the
computation of the adiabatic limit in \eqref{hc1prima11}, is the
strong convergence of
$\Xi_{\eta}^*$ to $\Xi_{0}^*$ when $\eta\searrow 0$. 
Due to the completeness of the wave operator $\Xi_{0}$ (point (2) in
Proposition \ref{P-3}), we have that
$\Xi_{0}^*:E_{ac}(K(1))\mathcal{H}\to E_{ac}(\overset{\circ}{H})\mathcal{H}$ is a unitary operator. Then:
$$
\left\|\left[\Xi_{0}^*-\Xi_{\eta}^*
  \right]f\right\|_{\mathcal{H}}^2\leq 
2\|f\|_{\mathcal{H}}^2-2\Re\big(\left<\Xi_{\eta}\Xi_{0}^*f,f
\right>\big)\underset{\eta\searrow 0}{\rightarrow} 2\|f\|_{\mathcal{H}}^2-2\Re\big(\left<\Xi_{0}\Xi_{0}^*f,f
\right>\big)= 0
$$
for any $f\in E_{ac}(K(1))\mathcal{H}$ and thus we have strong
convergence of $\Xi_{\eta}^*$ to $\Xi_{0}^*$ when $\eta\searrow 0$ on
$E_{ac}(K_1)\mathcal{H}$. 

With this, the proof of the adiabatic limit in \eqref{finalprima6} is concluded.

\vspace{1cm}

The next sections of the paper are devoted to the proofs of the above
stated Propositions \ref{P-Discr}- \ref{P-4} 
and Corollary \ref{Cor-P-discr}.

\section{Absence of singular continuous spectrum}
\setcounter{equation}{0}

We give here the proof of the absence of the singular continuous spectrum for $K(\kappa)$ by establishing a limiting absorption principle. The main technical result of this section is the following lemma:
\vskip2mm
%%%%%%%%%%%%%%%%
\begin{lemma}\label{LAP-K1}%
%%%%%%%%%%%%%%%%
\ Let $\kappa\in [0,1]$.  There
exists a discrete set $\mathfrak{N}\subset \mathbb{R}$ such that for
any closed interval $I\subset \mathbb{R}_+\setminus\mathfrak{N}$ we have the 
estimate (here $\langle x \rangle :=\sqrt{x^2+1}$): 
\begin{equation}\label{cdh32}
\underset{z\in\{x+iy|x\in I, 0<y<\delta\}}{\sup}\left\|
e^{-\langle Q_1\rangle} R_\kappa(z)e^{-\langle Q_1\rangle} \right\|\leq C(I,\delta,\kappa)<\infty.
\end{equation}
In particular, $K(\kappa)$ has no singular continuous spectrum.
%%%%%%%%
\end{lemma}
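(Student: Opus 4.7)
\medskip
\noindent
\textbf{Proof plan.}
My strategy is the classical two-step route: first establish the limiting absorption principle (LAP) for the decoupled comparison operator $\overset{\circ}{K}(\kappa)$, where a complete channel decomposition is available, and then transfer it to $K(\kappa)$ by a compact perturbation argument combined with the analytic Fredholm theorem, using Hypothesis \ref{Hyp-3}(1) to rule out exceptional points. The candidate for the discrete set $\mathfrak{N}$ will be the collection of channel thresholds
$$
\mathfrak{N}\,:=\,\{\lambda_n+\kappa v_-\}_{n\geq 1}\,\cup\,\{\lambda_n+\kappa v_+\}_{n\geq 1}\,\cup\,\sigma(\overset{\circ}{K}_0(\kappa)),
$$
where $\{\lambda_n\}$ are the Dirichlet eigenvalues of $\mathfrak{L}_{\D}$; this is a discrete subset of $\mathbb{R}$ since $\lambda_n\to\infty$ and $\overset{\circ}{K}_0(\kappa)$ acts on a bounded domain.

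First, I would prove the LAP for $\overset{\circ}{K}(\kappa)=\overset{\circ}{K}_-(\kappa)\oplus\overset{\circ}{K}_0(\kappa)\oplus\overset{\circ}{K}_+(\kappa)$. On each lead, \eqref{sect-Lapl} gives $\overset{\circ}{K}_\pm(\kappa)=\mathfrak{l}_\pm\otimes 1+1\otimes\mathfrak{L}_{\D}+\kappa v_\pm$, so after diagonalizing $\mathfrak{L}_{\D}$ into its Dirichlet eigenbasis, $R_{\overset{\circ}{K}_\pm(\kappa)}(z)$ is a direct sum over $n$ of half-line Dirichlet resolvents with shifted spectral parameter $z-\lambda_n-\kappa v_\pm$. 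The 1D half-line Dirichlet resolvent has an explicit integral kernel, and when sandwiched with $e^{-\langle x_{||}\rangle}$ it extends continuously from the upper half-plane up to the real axis off its threshold, with a norm that decays fast enough in $n$ to be summable. The middle block $\overset{\circ}{K}_0(\kappa)$ has compact resolvent, contributing only a discrete set of possible poles. Putting the three blocks together yields
$$
\sup_{z\in I+i(0,\delta)}\left\|e^{-\langle Q_1\rangle}R_{\overset{\circ}{K}(\kappa)}(z)e^{-\langle Q_1\rangle}\right\|<\infty
$$
for any closed $I\subset\mathbb{R}_+\setminus\mathfrak{N}$, and the weighted resolvent extends to a norm-continuous $\mathbb{B}(\mathcal{H})$-valued function on $I+i[0,\delta]$.

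Second, I would connect $K(\kappa)$ with $\overset{\circ}{K}(\kappa)$ through a geometric resolvent identity. Both operators coincide away from the two Dirichlet walls $\{x_{||}=\pm a\}$ introduced to define $\overset{\circ}{H}$; the difference of resolvents is thus a Krein-type boundary operator supported near these cross-sections. Concretely, using smooth cutoffs $\chi_\C,\chi_\pm$ adapted to $\C,\L_\pm$ with $\chi_\C\equiv 1$ near $\{x_{||}=\pm a\}$ and on $\mathrm{supp}\,w$, and commuting $\chi_\C$ through $K(\kappa)$, one obtains the identity
$$
R_{K(\kappa)}(z)\bigl(I-A(z)\bigr)=R_{\overset{\circ}{K}(\kappa)}(z)+B(z),
$$
where $A(z)$ and $B(z)$ are made of commutators $[\Delta,\chi_\C]$ (first-order differential operators localized near $x_{||}=\pm a$), so that $e^{-\langle Q_1\rangle}A(z)e^{\langle Q_1\rangle}$ and the corresponding sandwiched version of $B$ are compact and norm-continuous on $I+i[0,\delta]$ by the step-one LAP plus elliptic regularity.

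Finally, the analytic Fredholm theorem applied to $I-A(z)$ on the upper half-strip yields invertibility except on a discrete subset of $I$. At a real point $x\in I$ where $(I-A(x))f=0$, the identity above combined with the step-one boundary values produces a nontrivial $L^2_{\mathrm{loc}}$ solution of $(K(\kappa)-x)u=0$ with exponential decay along the leads, hence an $L^2$-eigenfunction embedded in the continuous spectrum; Hypothesis \ref{Hyp-3}(1) rules this out, so $I-A(x)$ is invertible on all of $I$. This proves \eqref{cdh32}, and absence of $\sigma_{sc}(K(\kappa))$ follows by the standard fact that a uniformly bounded boundary value of the weighted resolvent on $I$ implies that the spectral measure of $K(\kappa)$ is absolutely continuous on $I$ when tested against a dense family of weighted vectors. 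The main obstacle I expect is keeping track of the summability over transverse channels uniformly in $z$ while ensuring that the sandwiched commutator operators are genuinely compact and norm-continuous up to the real axis; this is where the exponential weights and the smoothness of $\S$ and $w$ do the real work.
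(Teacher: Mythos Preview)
Your approach is correct and follows the standard Agmon-type route (LAP for the decoupled reference operator, then a compact-perturbation/Fredholm transfer), but it differs from the paper's proof in two notable ways. First, the paper does not compare with $\overset{\circ}{K}(\kappa)$; instead it builds a parametrix $\widetilde{R}_\kappa(z)$ from a quadratic partition of unity $\chi_-^2+\chi_0^2+\chi_+^2=1$, gluing the decoupled lead resolvents to the resolvent of an auxiliary operator $K_{\kappa,L}$ on a \emph{bounded} box $\L\cap(-L,L)$ with Dirichlet walls at $\pm L$. This produces directly $(K(\kappa)-z)\widetilde{R}_\kappa(z)=\mathrm{Id}+X(z)$ with $X(z)$ compactly supported on the left, bypassing the need to establish a separate LAP for the full $\overset{\circ}{K}(\kappa)$. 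Second, and more importantly, the paper never invokes Hypothesis~\ref{Hyp-3}(1): because of the exponential weights and the compact cut-offs, $e^{\langle Q_1\rangle}X(z)e^{-\langle Q_1\rangle}$ extends analytically \emph{across} the real axis into a strip $\{x+iy:x\in I,\ -\delta<y<\delta\}$, so the analytic Fredholm alternative applies on an open set containing $I$ in its interior, and the resulting exceptional points on $I$ are simply absorbed into the discrete set $\mathfrak{N}$. Your argument instead works in the closed upper half-strip and then appeals to Hypothesis~\ref{Hyp-3}(1) to show that boundary exceptional points are embedded eigenvalues and hence absent. That is a valid strategy and yields a sharper description of $\mathfrak{N}$, but since the lemma only asks for \emph{some} discrete $\mathfrak{N}$, the paper's analytic-continuation trick is shorter and keeps Lemma~\ref{LAP-K1} logically independent of the no-embedded-eigenvalues assumption.
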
%
%%%%%%%

\begin{proof}
We use geometric perturbation theory. Let us define a quadratic
partition of unity in the following way: 
$$
\chi_-^2+\chi_0^2+\chi_+^2=1,\quad\chi_{\pm}\in\,
C^\infty(\mathbb{R}),\quad\chi_{\pm}(x)=1\text{ for } \pm x>2a, \quad\chi_{\pm}(x)=0\text{ for } |x|<a
$$
$$
\chi_{0}\in\, C^\infty(\mathbb{R}),\quad\chi_{0}(x)=0\text{ for }
|x|>2a,\quad\chi_{0}(x)=1\text{ for } |x|<a. 
$$
Fix some $L>2a$. Introduce the operator $K_{\kappa,L}$ obtained from
$K(\kappa)$ on the region $\L\cap (-L,L)$ by imposing Dirichlet
boundary conditions at $x=\pm L$. The operator $K_{\kappa,L}$ has
compact resolvent, and let us denote it with $R_{\kappa,L}(z)$. Here $z\in
\mathbb{C}\setminus \sigma(K_{\kappa,L})$. 

Now let us define an approximation for $R_\kappa(z)$ by the following formula:
$$
\widetilde{R}_\kappa(z):=\chi_-(Q_1)\overset{\circ}{R}_\kappa(z)\chi_-(Q_1)+\chi_0(Q_1)
R_{\kappa,L}(z)\chi_0(Q_1)+\chi_+(Q_1)\overset{\circ}{R}_\kappa(z)\chi_+(Q_1).
$$
Note that on the support of $\chi_{\pm}(Q_1)$ the differential
operators $K(\kappa)$ and 
$\overset{\circ}{K}(\kappa)$ coincide, while on the support of
$\chi_0(Q_1)$ the operators $K(\kappa)$ and $K_{\kappa,L}$ coincide,  so that we can write
\begin{align}\label{cdh33}
(K(\kappa)-z)\widetilde{R}_\kappa(z)&={\rm Id}+
\left[\overset{\circ}{H},\chi_-(Q_1)
\right]\overset{\circ}{R}_\kappa(z)\chi_-(Q_1)
+\left[H,\chi_0(Q_1) \right]R_{\kappa,L}(z)\chi_0(Q_1)\nonumber \\ 
&+\left[\overset{\circ}{H},\chi_+(Q_1) \right]\overset{\circ}{R}_\kappa(z)\chi_+(Q_1).
\end{align}
The above commutators are first order differential operators:
\begin{align}\label{cdh34}
\left[\overset{\circ}{H},\chi_{\pm}(Q_1)
\right]&=-2i\chi'_{\pm}(Q_1)P_1-\chi''_{\pm}(Q_1),
\nonumber \\
\left[H,\chi_0(Q_1)\right] &=-2i\chi'_{0}(Q_1)P_1-\chi''_{0}(Q_1).
\end{align}
Thus \eqref{cdh33} can be put in the following form:
\begin{align}\label{cdh35}
(K(\kappa)-z)\widetilde{R}_\kappa(z)&={\rm Id}+ X(z),\qquad  
e^{\langle Q_1\rangle}X(z)\in\mathbb{B}(\mathcal{H}),
\end{align}
where the boundedness of $e^{\langle Q_1\rangle}X(z)$ is due to the compact support of the
functions appearing on the left-hand side of the operator $X(z)$. Thus
we can write the identity:
$$
e^{-\langle Q_1\rangle}R_\kappa(z)e^{-\langle Q_1\rangle}=
e^{-\langle Q_1\rangle}\widetilde{R}_\kappa(z)e^{-\langle Q_1\rangle}-
e^{-\langle Q_1\rangle}R_\kappa(z)X(z)e^{-\langle Q_1\rangle}.
$$
Since for large values of ${ \Im}(z)$ the norm of $e^{\langle Q_1\rangle}
X(z)$ tends to $0$, we can write at least for those values of $z$ that:
$$
e^{-\langle Q_1\rangle}R_\kappa(z)e^{-\langle Q_1\rangle}=e^{-\langle
  Q_1\rangle}\widetilde{R}_\kappa(z)e^{-\langle Q_1\rangle}\left[1+
e^{\langle Q_1\rangle}X(z)e^{-\langle Q_1\rangle}\right]^{-1}.
$$ 
Now $e^{\langle Q_1\rangle} X(z)e^{-\langle Q_1\rangle}$ is
compact and analytic in the upper complex plane, and has a bounded
limit from above on any interval $I$ which avoids the
discrete set of thresholds in the leads and the discrete spectrum of
$K_{\kappa,L}$. Moreover, due to the exponential decaying weight on the right
and the compactly supported cut-offs on the left, $e^{\langle Q_1\rangle} X(z)e^{-\langle Q_1\rangle}$
can be analytically continued to the set
$\{x+iy|x\in I, -\delta <y<\delta\}$ for $\delta$ small enough. 
Thus we can apply the analytic Fredholm alternative on
this set and conclude that 
$\left[1+e^{\langle Q_1\rangle} X(z)e^{-\langle
    Q_1\rangle}\right]^{-1}$ exists on $I$ outside a discrete set of
points. 
\end{proof}

%%%%%%%%%%%%%%%%%%%%%%%%%%%%%%%%%
\section{Adiabatic limit of the discrete subspace}\label{Discr-sp}%
%%%%%%%%%%%%%%%%%%%%%%%%%%%%%%%%%

\setcounter{equation}{0}

In order to simplify our presentation, we adopt the 
conditions of Hypothesis \ref{Hyp-33} which means that we have 
$N=2$ discrete eigenvalues which might cross at only one point when $\kappa$
varies. Moreover, they remain well isolated from the continuous
spectrum. Under these conditions, Rellich's Theorem 
(Theorem II.61 in \cite{K}) states that the two eigenvalues are 
given by two real analytic functions
$\{\varepsilon_j(\kappa)\}_{j\in\{1,2\}}$ defined for
$\kappa\in[0,1]$. If they cross at $\kappa_0\in (0,1)$ and only there,
then there must exist two constants $C>0, M\in\mathbb{N}^*$ such that 
\begin{equation}\label{hcprima1}
|\varepsilon_1(\kappa)-\varepsilon_2(\kappa)|\geq C
|\kappa-\kappa_0|^M,\quad \kappa\in [0,1].
\end{equation}
Moreover, their corresponding orthogonal projections $E_j(\kappa)$ can also be
chosen to be real analytic on $[0,1]$. 

\subsection{Proof of Proposition \ref{P-Discr}}

Let us focus on $j=1$. We will have to show the equality: 
\begin{equation}\label{hcprima2}
E_1(1)=\underset{\eta\searrow
  0}{n-\lim}\left[\underset{s\searrow-\infty}{n-\lim}
B_\eta(s)\right],\quad\text{with}\quad
B_\eta(s):=
    W_{\eta}(s)^*
E_1(\chi(\eta s))W_{\eta}(s).
\end{equation}
This follows clearly from the next result.
\vskip2mm
\begin{lemma}\label{hcprima8}
We have:
\begin{equation}\label{hcprima6}
B_\eta(0)=E_1(\chi(0))=E_1(1)\quad {\rm and}\quad \lim_{\eta \searrow 0}\left \{\sup_{s\leq 0}
||B_\eta(s) -B_\eta(0)||\right \}=0.
\end{equation}
\end{lemma}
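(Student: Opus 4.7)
The first equality in \eqref{hcprima6} is immediate from $W_\eta(0)=1$ and $\chi(0)=1$; the content is the uniform smallness of $B_\eta(s)-B_\eta(0)$ for $s\le 0$. The plan is to differentiate in $s$: since $E_1(\chi(\eta s))$ is a spectral projection of $K(\chi(\eta s))$, the commutator with $K$ vanishes and
\[
\partial_s B_\eta(s)=\eta\,\chi'(\eta s)\,W_\eta^*(s)\,E_1'(\chi(\eta s))\,W_\eta(s).
\]
The naive bound $\|B_\eta(s)-B_\eta(0)\|\le\|E_1'\|_\infty$ is uniform in $\eta$ but does not vanish, so the missing factor $\eta$ must be produced by a Kato-type commutator construction.

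The strategy is to find a bounded $F(\kappa)$ solving $[K(\kappa),F(\kappa)]=E_1'(\kappa)$ and then use the identity
\[
\partial_\tau\bigl(W_\eta^*(\tau)F(\chi(\eta\tau))W_\eta(\tau)\bigr)=i\,W_\eta^*[K,F]W_\eta+\eta\chi'(\eta\tau)\,W_\eta^*F'(\chi(\eta\tau))W_\eta
\]
to convert $\partial_s B_\eta$ into a total $\tau$-derivative plus an $\eta$-small correction. One integration by parts in the prefactor $\chi'(\eta\tau)$ then controls the resulting boundary term by $\eta\|\chi'\|_\infty\|F\|_\infty$, the $\chi''$-bulk by $\eta\|\chi''\|_{L^1}\|F\|_\infty$, and the $\chi'^{\,2}$-bulk by $\eta\|\chi'\|_{L^2}^2\|F'\|_\infty$ (with $\chi'\in L^1\cap L^\infty\subset L^2$ following from \eqref{cdh1}). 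Since $E_1^{\,2}=E_1$, the derivative $E_1'$ is off-diagonal in the block decomposition $1=E_1+E_2+E_{ac}$, so the construction of $F$ reduces to inverting $\mathrm{ad}_K$ on two coupling blocks. On the $E_1$--$E_{ac}$ block, Hypothesis~\ref{Hyp-3}(3) provides a uniform gap $d>0$ and yields a bounded contribution to $F$; on the $E_1$--$E_2$ block the Sylvester formula gives $F_{12}(\kappa)=(E_1E_1'E_2-E_2E_1'E_1)/(\varepsilon_1-\varepsilon_2)$, and by \eqref{hcprima1} this only satisfies $\|F_{12}(\kappa)\|\le C|\kappa-\kappa_0|^{-M}$ and $\|F_{12}'(\kappa)\|\le C|\kappa-\kappa_0|^{-2M}$. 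The blow-up of $F$ at the crossing is the main obstacle.

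To absorb it I would split the $\tau$-integration according to whether $\chi(\eta\tau)$ lies in a window $(\kappa_0-\delta,\kappa_0+\delta)$ around the crossing (``near'') or outside it (``away''), with $\delta=\delta(\eta)>0$ to be chosen. On the near region, the change of variable $\kappa=\chi(\eta\tau)$ gives the direct estimate $2\delta\,\|E_1'\|_\infty$ for that part of the contribution. On the away region, $F$ obeys the bounds above, so the integration-by-parts argument (now including boundary terms at the two internal endpoints, where $\|F\|\lesssim\delta^{-M}$) contributes $O(\eta\delta^{-2M})$. Adding the two pieces, $\sup_{s\le 0}\|B_\eta(s)-B_\eta(0)\|\le C(\delta+\eta\delta^{-2M})$, and the balancing choice $\delta=\eta^{1/(2M+1)}$ yields an $O(\eta^{1/(2M+1)})$ uniform bound that vanishes as $\eta\searrow 0$, completing the proof.
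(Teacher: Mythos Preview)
Your proposal is correct and follows essentially the same route as the paper's proof: split the time axis into a window around the crossing (handled by direct integration of $\partial_s B_\eta$) and its complement (handled by one adiabatic/integration-by-parts step using a bounded solution of the commutator equation $[K,F]=E_1'$), then balance the window size against the blow-up of $F$. The differences are only presentational---you parametrize the window in $\kappa$-space rather than $\tau$-space and build $F$ block by block (Sylvester on $E_1$--$E_2$, spectral gap on $E_1$--$E_{ac}$) rather than via the paper's Riesz contour integral $Y(\kappa)=-\frac{1}{2\pi}\oint_{\Gamma_\eta}R(z)X(\kappa)R(z)\,dz$---but the resulting norm bounds $\|F\|\lesssim(\text{gap})^{-1}$, $\|F'\|\lesssim(\text{gap})^{-2}$ and the final rate $\eta^{1/(2M+1)}$ are identical.
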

\begin{proof}
The first two equalitites are obvious. For the limit let us remember that there exists a unique 
critical time $t_0<0$ when $\chi(t_0)=\kappa_0$ which 
corresponds to the intersection of the two eigenvalues. 
Fix some $0<\delta<1$ (to be chosen later in a more precise way). 
We split the negative semi-axis  
$\mathbb{R}_-$ in three parts:
\begin{equation}\label{hcprima9}
\mathbb{R}_-=\left (-\infty, \frac{t_0-\eta^\delta}{\eta}\right ]\cup
\left [\frac{t_0-\eta^\delta}{\eta},
\frac{t_0+\eta^\delta}{\eta}\right ]\cup 
\left [\frac{t_0+\eta^\delta}{\eta},0\right ].
\end{equation}
In what follows we will investigate how $B_\eta(\cdot)$ changes when
$s$ goes through each sub-interval. 

\noindent
{\sf Near the crossing:} 
Let us first consider the interval in the middle 
$\left [\frac{t_0-\eta^\delta}{\eta},
  \frac{t_0+\eta^\delta}{\eta}\right ]$. This is the "gap-less region",
but nevertheless, it is easiest to deal with. 
From the definition of $B_\eta(s)$ in \eqref{hcprima2}, and since 
$K(\kappa)$ commutes with $E_1(\kappa)$, we have the important 
identity:
\begin{equation}\label{hcprima11}
\partial_s B_\eta(s)=\eta \chi'(\eta s)W_{\eta}^*(s)
E_1'(\chi(\eta s))W_{\eta}(s),
\end{equation}
where $E_1'(\kappa)$ is uniformly bounded in $\kappa\in
[0,1]$ due to the real analyticity of the projector. We write:
\begin{equation}\label{hcprima10}
B_\eta\left (\frac{t_0+\eta^\delta}{\eta}\right )-B_\eta\left 
(\frac{t_0-\eta^\delta}{\eta}\right )=
\int_{\frac{t_0-\eta^\delta}{\eta}}^{\frac{t_0+\eta^\delta}{\eta}}\partial_s B_\eta(s)ds.
\end{equation}
This implies:
\begin{equation}\label{hcprima12}
\left \Vert   B_\eta\left (\frac{t_0+\eta^\delta}{\eta}\right )-B_\eta\left 
(\frac{t_0-\eta^\delta}{\eta}\right )    \right \Vert \leq
C \eta^\delta.
\end{equation}

\noindent
{\sf Outside the crossing:} In the other two intervals the eigenvalue 
$\varepsilon_1(\chi(\eta s))$ is isolated from the rest of the
spectrum, as can be inferred from our Hypothesis \ref{Hyp-3} and
\ref{Hyp-33}. More precisely, let us show that it is situated at a distance
larger than $C\eta^{M\delta}$ than the rest of the spectrum. Indeed,  
using the splitting from \eqref{hcprima1} we may write 
$$|\varepsilon_1(\chi(\eta s))-\varepsilon_2(\chi(\eta s))|\geq C
|\chi(\eta s)-\chi(t_0)|^M\geq \tilde{C} \eta^{M\delta}$$
for every $s$ situated at a distance larger than $\eta^{-1+\delta}$
from $t_0/\eta$. Here $\tilde{C}>0$ can be chosen uniformly in $\eta$
because we assumed that $\chi'(t)>0$ and $|\chi''|$ is integrable. 

It means that we can find a positively oriented
simple contour $\Gamma_\eta$ which only contains
$\varepsilon_1(\chi(\eta s))$ and the following estimate holds true:
\begin{equation}\label{hcprima13}
D_\eta:=\sup_{s\in \mathbb{R}_-\setminus \left [\frac{t_0-\eta^\delta}{\eta},
  \frac{t_0+\eta^\delta}{\eta}\right ]}
 \sup_{z\in \Gamma_\eta}||(K(\chi(\eta s))-z)^{-1}||\leq C
\eta^{-M\delta}.
\end{equation}
We can choose the length of the contour $\Gamma_\eta$ to be of order $1/D_\eta$.
We will treat this region by using a second order adiabatic
development for the 
quasi-eigenprojector given by the adiabatic theory (see \cite{N2,
  teufel} and references therein). If 
$$
X(\kappa):=\left[E^\bot_1(\kappa)E^\prime_1(\kappa)
  E_1(\kappa)-E_1(\kappa)E^\prime_1(\kappa) 
E^\bot_1(\kappa)\right],
$$
we define: 
\begin{align}\label{hcprima14}
&F_{\eta}(s):=B_\eta(s)+\eta \chi'(\eta s) W_{\eta}^*(s)Y(\chi(\eta s))W_{\eta}(s)\\
&Y(\chi(\eta s)):=-\frac{1}{2\pi}\oint_{\Gamma_\eta}dz\,
\big(K(\chi(\eta s))-z\big)^{-1}X(\chi(\eta s))\big(K(\chi(\eta
s))-z\big)^{-1},\nonumber 
\end{align}
where the operator $Y(\kappa)$ is a solution to the commutator
equation $i[K(\kappa),Y(\kappa)]=-E^\prime_1(\kappa)$. 
The operator $F_{\eta}(s)$ is constructed in such way that when we compute
$\partial_s F_{\eta}(s)$, the term $\partial_s B_{\eta}(s)$ gets
canceled and we have the identity:
\begin{align}\label{hcprima15}
&\partial_s F_{\eta}(s)=\\
&-\eta
^2W_{\eta}^*(s)\left \{\partial_x \frac{\chi'(x)}{2\pi}\oint_{\Gamma_\eta}dz\,\big(K(\chi(x))-z\big)^{-1}
X(\chi(x))\big(K(\chi(x))-z\big)^{-1}\right\}_{{x=\eta s}} W_{\eta}(s).\nonumber 
\end{align}
Note that $Y(\chi(\eta s))$ is a bounded operator obeying 
\begin{equation}\label{cdh2}
 \Vert Y(\chi(\eta s))\Vert \leq C \eta^{-M\delta}, \quad s\in \mathbb{R}_-\setminus \left [\frac{t_0-\eta^\delta}{\eta},
  \frac{t_0+\eta^\delta}{\eta}\right ],
\end{equation}
which is a consequence of \eqref{hcprima13} and because our choice of the contour $\Gamma_\eta$. It follows that we can
write a rough bound of the type
\begin{equation}\label{hcprima16}
||\partial_s F_{\eta}(s)||\leq C
\eta^{2-2 M\delta}(|\chi'(\eta s)|^2+|\chi''(\eta s)|),\quad s\in \mathbb{R}_-\setminus \left [\frac{t_0-\eta^\delta}{\eta},
  \frac{t_0+\eta^\delta}{\eta}\right ].
\end{equation}
Thus on any sub-interval $[s_1,s_2]$ of the negative real axis where the above estimate holds true we can write: 
\begin{equation}\label{cdh3}
||F_{\eta}(s_1)-F_{\eta}(s_2)||\leq C \eta^{1-2M\delta},
\end{equation}
due to the integrability properties of $\chi$ (see \eqref{cdh1}). 
From \eqref{hcprima14} and \eqref{cdh2} we can derive the estimate:
\begin{equation}\label{hcprima17}
||B_\eta(s)-F_{\eta}(s)||\leq C
\eta^{1- M\delta}, \quad s\in \mathbb{R}_-\setminus \left [\frac{t_0-\eta^\delta}{\eta},
  \frac{t_0+\eta^\delta}{\eta}\right ].
\end{equation}
Up to a use of the triangle inequality, on any sub-interval $[s_1,s_2]$ of $\mathbb{R}_-\setminus \left [\frac{t_0-\eta^\delta}{\eta},
  \frac{t_0+\eta^\delta}{\eta}\right ]$ we can write:
\begin{align}\label{hcprima18}
||B_\eta(s_1)-B_\eta(s_2)||&\leq 
 C
\eta^{1-2 M\delta}.
\end{align}
This estimate together with \eqref{hcprima12} imply:
$$ ||B_\eta(s)-B_\eta(0)||=||W_\eta^*(s)E_1(\chi(\eta s))W_\eta(s)-E_1(1)||\leq C(\eta^\delta +\eta^{1-2 M\delta}),\quad s\leq 0.$$
Choose now any $\delta \in (0,1/(2M))$; then \eqref{hcprima6} is proved, which concludes the lemma. \end{proof}

\subsection{Proof of Corollary \ref{Cor-P-discr}}

The limit in \eqref{hc2prima1} is a trivial consequence of \eqref{E-1} and the result of Proposition \ref{P-Discr}. The proof of \eqref{hc2prima2} is a bit longer. We start with a lemma:
\begin{lemma}\label{L-discr-1}%
%%%%%%%%%%%%%%%%
\ At fixed $\eta>0$, we have the limit $
\underset{s\searrow-\infty}{n-\lim}B_\eta(s)= \omega_\eta E_1(0)\omega_\eta^*$.
%%%%%%%%
\end{lemma}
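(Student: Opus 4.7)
The plan is to factor $B_\eta(s)$ into three pieces each of which converges in operator norm as $s\searrow-\infty$, and then invoke the fact that a product of uniformly bounded norm-convergent factors converges in norm to the product of the limits. Concretely, I will insert $e^{-isH}e^{isH}$ on either side of the spectral projection to obtain
$$B_\eta(s)=\bigl\{W_\eta^*(s)e^{-isH}\bigr\}\,\bigl\{e^{isH}E_1(\chi(\eta s))e^{-isH}\bigr\}\,\bigl\{e^{isH}W_\eta(s)\bigr\}.$$

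The outer two factors are precisely those whose norm limits define $\omega_\eta$ and $\omega_\eta^*$ in \eqref{finaladoua1}; their norm convergence as $s\searrow-\infty$ is ensured by the Cauchy estimate \eqref{Cauchy-1} (using $\chi\in L^1(\mathbb{R}_-)$). So the only new piece to handle is the middle factor, and the plan for it is to compare with the limiting projection $E_1(0)$, which is a spectral projection of $H$ and therefore commutes with $e^{\pm isH}$. Writing
$$e^{isH}E_1(\chi(\eta s))e^{-isH}-E_1(0)=e^{isH}\bigl[E_1(\chi(\eta s))-E_1(0)\bigr]e^{-isH},$$
the operator-norm difference is bounded by $\|E_1(\chi(\eta s))-E_1(0)\|$. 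Since $\eta>0$ is fixed and $\chi(-\infty)=0$, we have $\chi(\eta s)\to 0$ as $s\searrow-\infty$, and by the Lipschitz (indeed analytic) dependence of $E_1(\kappa)$ on $\kappa\in[0,1]$ used already in \eqref{E-1}, this difference tends to zero.

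Combining these three norm limits and using uniform boundedness of the factors by $1$ gives the stated norm limit $B_\eta(s)\to\omega_\eta E_1(0)\omega_\eta^*$. The main (and only) potentially subtle point is ensuring the middle factor really does converge in norm rather than just strongly; this is why it is important to observe that $E_1(0)$ is a genuine spectral projection of $H$ (so that the free propagators commute through it) and that the Rellich-analytic parameter dependence of $E_1(\kappa)$ furnishes uniform norm continuity on $[0,1]$. Apart from this observation, the argument is essentially a routine factorisation, so no further technical input is needed.
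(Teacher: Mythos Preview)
Your proof is correct and follows exactly the same approach as the paper: factor $B_\eta(s)$ as $\{W_\eta^*(s)e^{-isH}\}\{e^{isH}E_1(\chi(\eta s))e^{-isH}\}\{e^{isH}W_\eta(s)\}$ and use norm convergence of each factor. You have in fact supplied more detail than the paper on why the middle factor converges in norm (via commutation of $E_1(0)$ with $e^{\pm isH}$ and analytic continuity of $\kappa\mapsto E_1(\kappa)$), which the paper leaves implicit.
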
%
\begin{proof}
 This can be seen by
  writing 
$$B_\eta(s)= \{W_{\eta}^*(s)e^{-isH}\}\{e^{isH}E_1(\chi(\eta
s))e^{-isH}\}
\{e^{isH}W_{\eta}(s)\}\underset{s\searrow-\infty}{\to} \omega_\eta E_1(0)\omega_\eta^* $$
where we used the fact that each parenthesis converges in norm (even though
not uniformly in $\eta$). \end{proof}
\vspace{0.2cm}
\begin{corollary}\label{hcprima8-a} (of Lemmas \ref{hcprima8} and \ref{L-discr-1})
\begin{equation}
\lim_{\eta \searrow 0}\left \Vert 
\omega_\eta E_1(0)\omega_\eta^* -E_1(1)\right \Vert =0.
\end{equation}
\end{corollary}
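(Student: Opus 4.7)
The plan is to chain the two lemmas directly. The key observation is that Lemma \ref{hcprima8} provides a bound on $\|B_\eta(s) - E_1(1)\|$ that is \emph{uniform} in $s \leq 0$, while Lemma \ref{L-discr-1} gives a norm limit of $B_\eta(s)$ as $s \to -\infty$ for each fixed $\eta > 0$. Since norm limits preserve norm inequalities, the uniform bound transfers to the limit object.

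More concretely, I would fix $\eta > 0$ and define $\varepsilon(\eta) := \sup_{s \leq 0} \|B_\eta(s) - E_1(1)\|$. By Lemma \ref{hcprima8} (using $B_\eta(0) = E_1(1)$), we have $\varepsilon(\eta) \to 0$ as $\eta \searrow 0$. For every $s \leq 0$, the triangle inequality together with Lemma \ref{L-discr-1} yields
$$
\|\omega_\eta E_1(0) \omega_\eta^* - E_1(1)\| \leq \|\omega_\eta E_1(0) \omega_\eta^* - B_\eta(s)\| + \|B_\eta(s) - E_1(1)\| \leq \|\omega_\eta E_1(0) \omega_\eta^* - B_\eta(s)\| + \varepsilon(\eta).
$$
Letting $s \to -\infty$ with $\eta$ fixed, the first term on the right vanishes (by Lemma \ref{L-discr-1}), so $\|\omega_\eta E_1(0) \omega_\eta^* - E_1(1)\| \leq \varepsilon(\eta)$. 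Taking now $\eta \searrow 0$ concludes the proof.

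There is no genuine obstacle here; the argument is essentially the standard fact that a uniform bound on a net passes to the norm limit of that net. The only point to be careful about is the order of the two limits (first $s \to -\infty$ at fixed $\eta$, then $\eta \searrow 0$), which is precisely the order in which the double limit appears in Proposition \ref{P-Discr} and its corollary, so everything is consistent.
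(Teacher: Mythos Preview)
Your argument is correct and is exactly the intended one: the paper states the result as an immediate corollary of Lemmas~\ref{hcprima8} and~\ref{L-discr-1} without writing out a proof, and your chaining of the uniform-in-$s$ estimate with the norm limit $B_\eta(s)\to\omega_\eta E_1(0)\omega_\eta^*$ is precisely how one fills in the details.
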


\vspace{0.2cm}

We can now prove \eqref{hc2prima2}:
\begin{align}\label{hc2prima22}
&\underset{\eta\searrow0}{n-\lim}
\left[\underset{s\searrow-\infty}{n-\lim}
e^{isH}E_{ac}(H)W_{\eta}(s)E_{pp}(K(1))\right]\nonumber \\
&=\underset{\eta\searrow0}{n-\lim}\left[\underset{s\searrow-\infty}{n-\lim}\;
\left
  \{e^{isH}W_{\eta}(s)\right\}W_{\eta}^*(s)\{1-E_{pp}(H)\}W_{\eta}(s)E_{pp}(K(1))\right]\nonumber \\
&=\underset{\eta\searrow0}{n-\lim}\left[\omega_\eta^*\;\underset{s\searrow-\infty}{n-\lim}\;
W_{\eta}^*(s)\{1-E_{pp}(H)\}W_{\eta}(s)E_{pp}(K(1))\right]\nonumber \\
&=\underset{\eta\searrow0}{n-\lim}\left[\omega_\eta^*\;\{E_{pp}(K(1))-\omega_\eta E_{pp}(K(0))\omega_\eta^*\}\right] E_{pp}(K(1))=0,
\end{align}
where we 
used Corollary \ref{hcprima8-a}, and the fact that 
$\omega_\eta^*=\underset{s\searrow-\infty}{n-\lim}e^{isH}W_{\eta}(s)$ has norm one.
%%%%%%%%%%%%%%%%

\section{Existence and completeness of stationary wave operators}\label{sc-w-bias}

\setcounter{equation}{0}

In this Section we analyze the pair of Hamiltonians
$\{\overset{\circ}{H}+\kappa V,H+\kappa V\}$ and prove Propositions 
\ref{P-0}, \ref{P-1}
and \ref{P-3}.
%%%%%%%%%%%%%%%%%%%%%%%%%%%%%%
\subsection{Proof of Proposition \ref{P-0}}\label{SSec-H0}%
%%%%%%%%%%%%%%%%%%%%%%%%%%%%%%

We can approximate the function $\Phi$ in the uniform norm with a sequence of 
$C_0^\infty(\mathbb{R})$ functions $\Phi_n$. Thus if we can prove the
proposition for smooth and compactly supported functions, then we are
done. For such a $\Phi_n$ we can apply for example the
Helffer-Sj\"ostrand formula (or any other norm convergent functional
calculus involving the resolvent) and argue that we can approximate in norm
the difference $\Phi_n(H)-\Phi_n(\overset{\circ}{H})$ with a linear
combination of differences of resolvents of the type 
\begin{equation}\label{cdh10}
\sum_{j=1}^{N} C_j \left
  \{(H-z_j)^{-1}-(\overset{\circ}{H}-z_j)^{-1}\right \},
\end{equation}  
where $C_j$ are complex coefficients and $z_j$ are
complex numbers with nonzero imaginary part. Thus one can reduce the
problem to showing that 
$$
(H-z)^{-1}-(\overset{\circ}{H}-z)^{-1}=:\,R(z)-\overset{\circ}{R}(z)
$$ is
compact for some $z$ with $\Im(z)>0$. 

Our decoupled Hamiltonian $\overset{\circ}{H}$ (see
\eqref{decoupl-domains}-
\eqref{dec-hamilt}) is a direct sum of three commuting operators, and
we have $\sigma_{sc}\big(\overset{\circ}{H}\big)=\emptyset$ and
$$
\sigma_{pp}\big(\overset{\circ}{H}\big)=\sigma_{pp}\big(\Pi_0\overset{\circ}{H}\Pi_0\big)=
\sigma\big(\Pi_0\overset{\circ}{H}\Pi_0\big)\subset\mathbb{R}_+.
$$

Let us denote by $\{w_n\}_{n\in\mathbb{N}}$ the complete orthonormal
set 
of eigenvectors of $\mathfrak{L}_{\D}$ in $L^2(\D)$ (see
(\ref{sect-Lapl})), 
having eigenvalues $\{\lambda_n\}_{n\in\mathbb{N}}$ so that 
$\sigma_{pp}\big(\mathfrak{L}_{\D}\big)=\{\lambda_n\}_{n\in\mathbb{N}}$;
let $P_n$ be 
the 1-dimensional orthogonal projection on $w_n$ in $L^2(\D)$. In particular, 
$$
\sigma_{ac}\big(\overset{\circ}{H}\big)=\sigma_{ac}\big(\Pi_-\overset{\circ}{H}\Pi_-\oplus\Pi_+\overset{\circ}{H}\Pi_+\big)=\sigma\big(\Pi_-\overset{\circ}{H}\Pi_-\oplus\Pi_+\overset{\circ}{H}\Pi_+\big)=[\lambda_1,\infty).
$$

Then for $z\in\mathbb{C}\setminus[0,\infty)$ we have
$$
\overset{\circ}{R}(z)=\underset{n\in\mathbb{N}}{\oplus}
\left[\big(\mathfrak{l_-}-(z-\lambda_n)\big)^{-1}\pi_-\oplus
\big(\mathfrak{l_+}-(z-\lambda_n)\big)^{-1}\pi_+\right]P_n\,\oplus\,(-\overset{\circ}\Delta_D+w-z)^{-1}
$$
with $\pi_{\pm}:L^2(\mathbb{R})\rightarrow L^2(\I_{\pm})$ the
orthogonal projections and $\big(\mathfrak{l}-(z-\lambda_n)\big)^{-1}$
the resolvent of the longitudinal kinetic energy on $\I_{\pm}$ with
Dirichlet conditions at $\pm a$.

In order to study the Hamiltonian $H$ and its relation with $\overset{\circ}{H}$, let us first observe that they are two self-adjoint extensions of the same symmetric operator 
$$
D_{0}:=-\Delta_D+w:C^\infty_0({\L_-}\cup{\C}\cup{\L_+})\rightarrow\mathcal{H}.
$$ 
Let $D_{0}^*$ be the adjoint of this symmetric operator.
In order to compare the two resolvents, $R(z)$ and
$\overset{\circ}{R}(z)$ for $z\in\mathbb{C}\setminus[0,\infty)$, we
note that $D_{0}^*$ extends both self-adjoint operators $H$ and
$\overset{\circ}{H}$ so that:  
$$
\left(R(z)- \overset{\circ}{R}(z)\right)\mathcal{H}\subset \ker \left(D_{0}^*-z\right).
$$
 Notice that for $u\in\ker\left(D_{0}^*-z\right)$
 the distribution $D_{0}^* u-zu$ has support in the part of the boundary
 $\D_-\cup\D_+$, where $\D_\pm:=\L_\pm\cap\big(\{\pm a\}\times\mathbb{R}^d\big)$; thus on $\L_-\cup\L_+$ they satisfy the equation:
$$
-\overset{\circ}{\Delta}_{D,\pm} u_\pm=z u_\pm
$$
with the boundary condition $\left.u_{\pm}\right|_{\I_{\pm}\times\partial\D}=0$, for $u_{\pm}:=\left.u\right|_{\L_{\pm}}$.
Then standard arguments show that our vectors
$u_{\pm}\in\mathcal{H}_{\pm}$ 
must be of the form
$
u_{\pm}=\underset{n\in\mathbb{N}}{\oplus}\alpha_n\big(u_{\pm,n}\otimes w_n\big),
$
where the functions 
$u_{\pm,n}\in L^2(\I_{\pm})$ satisfy the equation
$\mathfrak{l}_{\pm}u_{\pm,n}=(z-\lambda_n)u_{\pm,n}$. Thus 
$u_{\pm,n}=\beta_{\pm,n}e^{\zeta_{\pm,n}x}$ with $\zeta_{\pm,n}$ the
unique complex square root of 
$z-\lambda_n$ having $\pm Re\zeta_{\pm,n}<0$.
Let us observe that due to the fact that $z\in\mathbb{C}\setminus[0,\infty)$ and $\lambda_n>0$, our sequence  $\{\left|Re\zeta_{\pm,n}\right|\}$ contains strictly positive numbers, and moreover, diverges with $n$. Thus the infimum below is positive:
\begin{equation}\label{gamma0}
\gamma_0(z):=\underset{n\in\mathbb{N}}{\inf}\left|Re\zeta_{\pm,n}\right|>0.
\end{equation}
We have thus proved the following statement (here $Q_1$ is the multiplication operator with the longitudinal coordinate):
\vskip2mm
%%%%%%%%%%%%%%%%
\begin{lemma}\label{exp-dec}%
%%%%%%%%%%%%%%%%
\ Let $z\in\mathbb{C}\setminus[0,\infty)$ and $\gamma_{\pm}\in \mathbb{R}_+\setminus\{0\}$ be such that $0<\gamma_{\pm}<\gamma_0(z)$ (defined in \ref{gamma0}), then we have the following estimations of exponential decay:
\begin{equation}\label{L2-descr-exp-0}
\left\|e^{\pm\gamma_{\pm} Q_1}\Pi_\pm\big(R(z)-\overset{\circ}{R}(z)\big)e^{\pm\gamma_{\pm} Q_1}\Pi_\pm\right\|\leq c,
\end{equation} 
and for $\Psi_\alpha(x):=e^{\alpha \sqrt{x^2+1}}$ (with $0\leq\alpha<\gamma_0(z)$) we have:
\begin{equation}\label{L2-descr-exp}
\left\|\Psi_\alpha(Q_1)\big(R(z)-\overset{\circ}{R}(z)\big)\Psi_\alpha(Q_1)\right\|\leq c.
\end{equation} 
%%%%%%%%
\end{lemma}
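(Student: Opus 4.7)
The plan is to factor $T := R(z) - \overset{\circ}{R}(z)$ on each lead through an explicit Poisson-type operator composed with a boundary trace, which makes the two-sided exponential cancellation transparent. For $u \in L^2(\D)$ I would set
\begin{equation*}
(\mathcal{P}_+ u)(x_1,x_\perp) := \sum_n \langle w_n,u\rangle_{L^2(\D)}\, e^{\zeta_{+,n}(x_1 - a)}\, w_n(x_\perp),\qquad (x_1,x_\perp) \in \L_+,
\end{equation*}
the unique decaying $L^2(\L_+)$ solution of the homogeneous equation on $\L_+$ with Dirichlet trace $u$ at $\D_+$; the exponents $\zeta_{+,n}$ are the ones introduced just before the Lemma, and $\nu_n := |{\rm Re}\,\zeta_{+,n}| \ge \gamma_0(z)$. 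A one-line integration gives, for every $0 < \gamma_+ < \gamma_0(z)$,
\begin{equation*}
\bigl\|e^{\gamma_+ Q_1}\mathcal{P}_+ u\bigr\|_{L^2(\L_+)}^2 \;=\; \sum_n |\langle w_n,u\rangle|^2\,\frac{e^{2\gamma_+ a}}{2(\nu_n - \gamma_+)} \;\le\; \frac{e^{2\gamma_+ a}}{2(\gamma_0(z) - \gamma_+)}\,\|u\|^2_{L^2(\D)}.
\end{equation*}
Writing $\mathrm{tr}_+$ for the trace at $\D_+$, the analysis already performed just before the Lemma identifies $Tg|_{\L_+}$ as the unique decaying solution of the homogeneous equation on $\L_+$ with trace $\mathrm{tr}_+ R(z)g$ at $\D_+$ (the decoupled part $\overset{\circ}{R}(z)g|_{\L_+}$ contributes nothing because of the internal Dirichlet condition). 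Thus $\Pi_+ T \Pi_+ = \mathcal{P}_+\circ\mathrm{tr}_+ R(z)\Pi_+$, and an analogous factorization holds on $\L_-$.

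For the first assertion I would combine this factorization with the commutativity of $\Pi_+$ and $e^{\gamma_+ Q_1}$ to write
\begin{equation*}
e^{\gamma_+ Q_1}\Pi_+ T\, e^{\gamma_+ Q_1}\Pi_+ \;=\; \bigl(e^{\gamma_+ Q_1}\mathcal{P}_+\bigr)\circ\bigl(\mathrm{tr}_+ R(z)\,\Pi_+ e^{\gamma_+ Q_1}\bigr),
\end{equation*}
whose first factor is the operator just bounded. For the second factor, set $A := \mathrm{tr}_+ R(z)\,\Pi_+ e^{\gamma_+ Q_1}$ and compute $A^* = \Pi_+ e^{\gamma_+ Q_1} R(\bar z)\,\mathrm{tr}_+^*$, which sends $v \in L^2(\D)$ to $\Pi_+ e^{\gamma_+ Q_1}\phi_v$ with $\phi_v := R(\bar z)\,\mathrm{tr}_+^* v$. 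The function $\phi_v$ solves the homogeneous equation on $\L_+$ with trace $Mv := \mathrm{tr}_+ R(\bar z)\,\mathrm{tr}_+^* v$ at $\D_+$, and $M$ is bounded on $L^2(\D)$ by the standard mapping chain $\mathrm{tr}_+^*:L^2(\D) \to H^{-1}(\L)$, $R(\bar z):H^{-1}(\L)\to H^1(\L)$, $\mathrm{tr}_+:H^1(\L)\to L^2(\D)$. Applying the Poisson bound of the first paragraph with $\bar z$ in place of $z$ (legitimate since $\gamma_0(\bar z) = \gamma_0(z)$) to $\phi_v|_{\L_+} = \mathcal{P}_{+,\bar z}(Mv)$ yields $\|A^* v\|_{L^2(\L_+)} \le C\|v\|_{L^2(\D)}$, hence $A$ itself is bounded. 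The $-$ case is symmetric, which finishes (\ref{L2-descr-exp-0}).

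For (\ref{L2-descr-exp}) I would use the orthogonal decomposition $\mathcal{H} = \mathcal{H}_-\oplus\mathcal{H}_0\oplus\mathcal{H}_+$ to expand $\Psi_\alpha(Q_1)T\Psi_\alpha(Q_1) = \sum_{\sigma,\tau\in\{-,0,+\}}(\Psi_\alpha\Pi_\sigma)T(\Pi_\tau\Psi_\alpha)$. Since $\Psi_\alpha(x_1) \le e^{\alpha\sqrt{a^2+1}}$ on $\C$ and $\Psi_\alpha(x_1)\le e^\alpha e^{\pm\alpha x_1}$ on $\L_\pm$, every term containing $\Pi_0$ on either side is bounded by $\|T\|$ up to constants, the two diagonal lead terms reduce to (\ref{L2-descr-exp-0}) with $\gamma_\pm := \alpha$, and the two cross-lead terms $(\Psi_\alpha\Pi_\pm)T(\Pi_\mp\Psi_\alpha)$ are handled by the very same factorization: for example $\Pi_+ T\Pi_- = \mathcal{P}_+\circ\mathrm{tr}_+ R(z)\Pi_-$, and the second factor composed with $e^{-\alpha Q_1}\Pi_-$ on the right is bounded by the same duality argument, now exploiting that $\phi_v|_{\L_-}$ has the analogous exponential decay at $-\infty$. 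The main obstacle is precisely the two-sided weighted estimate: no such bound can hold for $R(z)$ or $\overset{\circ}{R}(z)$ separately, because their Green kernels decay only in the difference of longitudinal coordinates. The Poisson-trace factorization isolates the structural cancellation in $T$ and reduces everything to the boundedness of the Dirichlet-to-Dirichlet operator $\mathrm{tr}_+ R(\bar z)\mathrm{tr}_+^*$ on $L^2(\D)$, which is standard elliptic regularity at the smooth interface $\D_+$.
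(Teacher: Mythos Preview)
Your argument is correct, and it rests on the same structural observation as the paper: the range of $T=R(z)-\overset{\circ}{R}(z)$, restricted to each lead, lies in $\ker(D_0^*-z)$ and hence is spanned by the exponentially decaying modes $e^{\zeta_{\pm,n}(x\mp a)}w_n$. The paper stops essentially there, declaring the lemma proved; what it leaves implicit is precisely the passage from this range description to the \emph{two-sided} weighted bound. Your Poisson--trace factorization $\Pi_+T=\mathcal{P}_+\circ\mathrm{tr}_+R(z)$ makes that passage explicit: the left weight is absorbed by the elementary bound on $e^{\gamma_+Q_1}\mathcal{P}_+$, and the right weight is handled by duality, reducing to boundedness of the Dirichlet-to-Dirichlet map $\mathrm{tr}_+R(\bar z)\,\mathrm{tr}_+^*$ on $L^2(\D)$.

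The paper's implicit route to the same conclusion would be: from the range description one gets $\|e^{\gamma_+Q_1}u\|\le C\|u\|$ for every $u\in\Pi_+\,\mathrm{Ran}\,T$; applying the same to $T^*=R(\bar z)-\overset{\circ}{R}(\bar z)$ and taking adjoints shows that $T\Pi_+e^{\gamma_+Q_1}$ extends boundedly; its range still lies in $\ker(D_0^*-z)$ by closedness, so the left weight can be applied once more. Your factorization and the paper's deficiency-subspace viewpoint are thus two packagings of the same mechanism. What your version buys is transparency about where the cancellation lives (in the boundary data at $\D_\pm$) and a clean modular structure that immediately covers the cross-lead blocks $\Pi_\pm T\Pi_\mp$ needed for \eqref{L2-descr-exp}; the paper's version is shorter but demands the reader reconstruct the duality/closure step.
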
%
%%%%%%%
\noindent
Taking into account that $R(z)\mathcal{H}$ and
$\overset{\circ}{R}(z)\mathcal{H}$ are both contained in
$H^1(\L_-)\oplus H^1(\C)\oplus H^1(\L_+)$, the above estimate 
\eqref{L2-descr-exp} together with the compactness of Sobolev
embeddings for compact domains, imply that
$
R(z)-\overset{\circ}{R}(z)
$
are compact operators for any $z\in\mathbb{C}\setminus\mathbb{R}$.

\subsection{Proof of Proposition \ref{P-1}}\label{WOp-H}

The absence of singular continuous spectrum will be proved later on in
Lemma \ref{LAP-K1}; here we only show completeness of wave operators
by using the Birman-Kuroda method \cite{Yafaev}. In other words, we want to show
that the difference between some large enough powers of the 
resolvents is a trace class operator.  

We need to elaborate on the previous definition of $\Psi_{\alpha}\in
C^\infty(\mathbb{R})$ introduced in Lemma~\ref{exp-dec}. We now allow any $\alpha\in \mathbb{R}$ and 
further more: 
\begin{equation}\label{Psi}
\Psi_{\alpha}(x)=e^{\alpha \sqrt{x^2+1}},\quad \mbox{{so that}}\quad
|(\partial^s \ln \Psi_{\alpha})(x)|\leq C_s|\alpha|, \quad s\geq
1,\; x\in\mathbb{R}.
\end{equation}
We observe that $\Psi_{\alpha}(x)$ is invertible everywhere on $\mathbb{R}$, and 
if $\alpha >0$ then $\Psi_{\alpha}^{-1}\in L^k(\mathbb{R})$ for any
$k\geq 1$. Another fact we shall use here is that the following multiple commutator is bounded:
\begin{equation}\label{Hyp-1}
\left[P_1,\left[P_1,\ldots\big[P_1,w\big]\ldots\right]\right]
\end{equation}
where $P_1:=-i\partial_x$ and $x$ denotes the first (longitudinal) variable of $\mathbb{R}^{d+1}$.

\vsth
%%%%%%%%%%%%%%%%%
\begin{lemma}\label{cdhlemma1}%
%%%%%%%%%%%%%%%%%%
\ Fix $z\in\mathbb{C}\setminus[0,\infty)$. Then there exist 
$k_d\in\mathbb{N}$ large enough and $\alpha(z)>0$ small
 enough such that for any $k\geq k_d$ the operators 
$\overset{\circ}{R}(z)^{k}$ and 
$\Psi_{\alpha}(Q_1)\overset{\circ}{R}(z)^{k}\Psi_{\alpha}(Q_1)^{-1}$
with $|\alpha|<\alpha(z)$ are bounded operators from $L^2(\L)$ into $BC(\L)$ (the bounded continuous functions on $\L$).
%%%%%%%%
\end{lemma}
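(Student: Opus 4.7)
The plan is to establish the two bounds separately: the unweighted bound via iterated elliptic regularity and Sobolev embedding, and the weighted bound by reducing it to the unweighted case for a conjugated elliptic operator with the same principal symbol.

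For the first half, note that $\overset{\circ}{H}=-\overset{\circ}{\Delta}_D+w$ is a uniformly elliptic second-order operator with smooth coefficients and Dirichlet conditions on the regular $C^\infty$-boundary $\partial\L_-\cup\partial\L_+\cup\partial\C$. Iterated interior and boundary elliptic regularity gives
$\mathrm{Dom}(\overset{\circ}{H}^k)\subset H^{2k}(\L_-)\oplus H^{2k}(\C)\oplus H^{2k}(\L_+)$
together with the uniform local estimate
\begin{equation*}
\|u\|_{H^{2k}(B)}\leq C_k\bigl(\|\overset{\circ}{H}^{k}u\|_{L^2(2B)}+\|u\|_{L^2(2B)}\bigr)
\end{equation*}
for every unit ball $B\subset\L$, the constant $C_k$ being independent of the center of $B$ thanks to the translation invariance of $-\overset{\circ}{\Delta}_{D,\pm}$ along the leads together with the compactness of $\widetilde{\C}$. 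Choosing $k_d>(d+1)/4$, the local Sobolev embedding $H^{2k}(B)\hookrightarrow BC(B)$ yields the pointwise bound $\|u\|_{L^\infty(\L)}\leq C_k'\bigl(\|\overset{\circ}{H}^{k}u\|_{L^2}+\|u\|_{L^2}\bigr)$, which applied to $u=\overset{\circ}{R}(z)^k g$ proves the unweighted half of the lemma.

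For the weighted half, a direct computation using $P_1=-i\partial_x$ gives
\begin{equation*}
 \Psi_\alpha\overset{\circ}{H}\Psi_\alpha^{-1}
 =\overset{\circ}{H}+2i\alpha\,g'(Q_1)P_1+\alpha g''(Q_1)-\alpha^2 g'(Q_1)^2=:\overset{\circ}{H}_\alpha,
\end{equation*}
with $g(x)=\sqrt{x^2+1}$. By \eqref{Psi}, the perturbation $\overset{\circ}{H}_\alpha-\overset{\circ}{H}$ is a first-order differential operator with $L^\infty$ coefficients of size $O(|\alpha|)$, hence $\overset{\circ}{H}$-bounded with relative bound $O(|\alpha|)$. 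For $|\alpha|<\alpha(z)$ small enough, $\overset{\circ}{H}_\alpha$ is therefore closed on $\mathrm{Dom}(\overset{\circ}{H})$, still elliptic with principal symbol $|\xi|^2$, and its spectrum lies in a small neighborhood of $\sigma(\overset{\circ}{H})\subset[\lambda_1,\infty)$, so $z\in\mathbb{C}\setminus[0,\infty)$ stays in its resolvent set. On the dense subspace of compactly supported elements of $\mathrm{Dom}(\overset{\circ}{H})$ the conjugation identity $\Psi_\alpha(\overset{\circ}{H}-z)\Psi_\alpha^{-1}u=(\overset{\circ}{H}_\alpha-z)u$ holds literally; inverting and extending by continuity (using that both $(\overset{\circ}{H}_\alpha-z)^{-1}$ and $\overset{\circ}{R}(z)$ are bounded on $L^2$) yields $\Psi_\alpha\overset{\circ}{R}(z)^k\Psi_\alpha^{-1}=(\overset{\circ}{H}_\alpha-z)^{-k}$ as bounded operators on $L^2(\L)$. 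The first half of the proof, applied now to the elliptic operator $\overset{\circ}{H}_\alpha$, then gives the desired $L^2\to BC$ bound for $k\geq k_d$.

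The main technical obstacle is the conjugation identity itself: both $\Psi_\alpha$ and $\Psi_\alpha^{-1}$ are unbounded on $L^2$, so one must carefully identify a dense subspace on which both sides literally make sense, and verify that the image of the compactly supported portion of $\mathrm{Dom}(\overset{\circ}{H})$ under $(\overset{\circ}{H}_\alpha-z)$ is dense in $L^2$, which follows from the fact that $(\overset{\circ}{H}_\alpha-z)$ is a topological isomorphism $\mathrm{Dom}(\overset{\circ}{H})\to L^2$ and from the density of compactly supported domain elements in the graph norm. Once this is settled, the rest is a routine application of elliptic regularity.
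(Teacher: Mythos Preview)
Your proposal is correct and follows essentially the same route as the paper: Sobolev embedding after iterated elliptic regularity for the unweighted part, and a Combes--Thomas type conjugation $\Psi_\alpha\overset{\circ}{H}\Psi_\alpha^{-1}=\overset{\circ}{H}+T_\alpha$ for the weighted part. The only cosmetic difference is that the paper, rather than re-running elliptic regularity for the (non-self-adjoint) conjugated operator $\overset{\circ}{H}_\alpha$, proves by induction on $k$ that $\|\overset{\circ}{H}^{k}\Psi_\alpha\overset{\circ}{R}(z)^{k}\Psi_\alpha^{-1}\|<\infty$ using the mapping property \eqref{cdh21} of $T_\alpha$ between Sobolev scales, thereby landing directly in $\mathrm{Dom}(\overset{\circ}{H}^{k})\subset H^{2k}$; your identification $\Psi_\alpha\overset{\circ}{R}(z)^{k}\Psi_\alpha^{-1}=(\overset{\circ}{H}_\alpha-z)^{-k}$ achieves the same end.
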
%
%%%%%%%
\begin{proof} The result for $\overset{\circ}{R}(z)^{k}$ is based 
on the fact that 
\begin{equation}\label{cdh20}
\overset{\circ}{R}(z)^{k} \mathcal{H}\subset \left(H^{2k}(\L_-)\bigcap H^1_0(\L_-)\right)\oplus 
\left(H^{2k}(\C)\bigcap H^1_0(\C)\right)\oplus\left(H^{2k}(\L_+)\bigcap H^1_0(\L_+)\right)
\end{equation}
which is based on the commutator estimate in (\ref{Hyp-1}). 
Now if $k$ is large enough 
(depending only on the dimension $d$), the right hand side becomes a 
subset of $BC(\L)$.

In order to prove the same inclusion for the other operator, let us
note that we can use a Combes-Thomas type rotation \cite{combes}:
$
\Psi_{\alpha}(Q_1)\overset{\circ}{H}\Psi_{\alpha}(Q_1)^{-1}=\overset{\circ}{H}+T_{\alpha},
$
where $T_\alpha$ is a first order differential operator which has the following mapping property:
\begin{equation}\label{cdh21}
T_\alpha:H^k(\L_-)\oplus H^k(\C)\oplus H^k(\L_+)\longrightarrow H^{k-1}(\L_-)\oplus H^{k-1}(\C)\oplus H^{k-1}(\L_+).
\end{equation}
Now if $|\alpha|$ is small enough, one can prove by induction with 
respect to $k$ that 
\begin{equation}\label{cdh22}
\Vert \overset{\circ}{H}^k\Psi_{\alpha}(Q_1)\overset{\circ}{R}(z)^{k}
\Psi_{\alpha}(Q_1)^{-1}\Vert <\infty,
\end{equation}
which means:
\begin{equation}\label{cdh23}
\Psi_{\alpha}(Q_1)\overset{\circ}{R}(z)^{k}\Psi_{\alpha}(Q_1)^{-1}\mathcal{H}\subset H^{2k}(\L_-)\oplus H^{2k}(\C)\oplus H^{2k}(\L_+)
\end{equation}
and we are done. \end{proof}

\vspace{0.5cm}

\begin{corollary}\label{psiR0-HS}
Let $F\in L^2(\mathbb{R})$. Then there exists $k_d\in\mathbb{N}$
depending on the dimension $d$ such that for any
$z\in\mathbb{C}\setminus[0,\infty)$ and any $k\geq k_d$ we have that
$F(Q_1)\overset{\circ}{R}(z)^{k}$ and
$F(Q_1)\Psi_{\alpha}(Q_1)\overset{\circ}{R}(z)^{k}\Psi_{\alpha}(Q_1)^{-1}$
with $|\alpha|<\alpha(z)$, are Hilbert-Schmidt operators on $\mathcal{H}$.
\end{corollary}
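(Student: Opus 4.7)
The plan is to reduce the Hilbert--Schmidt estimate to the $L^2(\L)\to BC(\L)$ mapping property established in Lemma \ref{cdhlemma1}. For $k\geq k_d$, set $T:=\overset{\circ}{R}(z)^k$. Since $T:L^2(\L)\to BC(\L)$ is bounded with some norm $M$, for each fixed $x\in\L$ the linear functional $f\mapsto (Tf)(x)$ is continuous on $L^2(\L)$ with norm at most $M$. The Riesz representation theorem then yields a measurable map $x\mapsto K(x,\cdot)\in L^2(\L)$ with $\|K(x,\cdot)\|_{L^2(\L)}\leq M$ for every $x\in\L$, which realizes $T$ as the integral operator with kernel $K(x,y)$.

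The operator $A:=F(Q_1)T$ then has Schwartz kernel $F(x_1)K(x,y)$, and Tonelli's theorem yields
\[
\|A\|_{HS}^2=\int_{\L}|F(x_1)|^2\Big(\int_{\L}|K(x,y)|^2\,dy\Big)\,dx\,\leq\, M^2\int_{\L}|F(x_1)|^2\,dx.
\]
Writing $x=(x_1,\x_\perp)$ and integrating out the transverse variable gives $\int_{\L}|F(x_1)|^2\,dx=\int_{\R}|F(x_1)|^2\,|\mathfrak{D}(x_1)|\,dx_1$. The cylindrical structure outside the compact piece $\widetilde{\C}$ fixes $\mathfrak{D}(x_1)=\mathfrak{D}$ for $|x_1|>\tilde a$, while the $C^\infty$ regularity of $\S$ inside $\widetilde{\C}$ gives a uniform bound for the transverse measure. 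Hence the right-hand side is at most $CM^2\|F\|_{L^2(\R)}^2$, proving that $F(Q_1)\overset{\circ}{R}(z)^k$ is Hilbert--Schmidt.

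For the twisted operator $F(Q_1)\Psi_\alpha(Q_1)\overset{\circ}{R}(z)^k\Psi_\alpha(Q_1)^{-1}$ I would repeat the argument verbatim, now invoking the second statement of Lemma \ref{cdhlemma1}, which provides exactly the same $L^2\to BC$ mapping property for the conjugated operator whenever $|\alpha|<\alpha(z)$. Since the abstract hypothesis driving the argument is identical, no further work is needed. The only delicate points are the measurability of the Riesz kernel $x\mapsto K(x,\cdot)$ and the uniform boundedness of the transverse cross-sections; both are automatic from the geometric hypotheses on $\L$ recalled in Section \ref{System}, so I do not expect any real obstacle in this step.
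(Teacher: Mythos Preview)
Your argument is correct and follows essentially the same route as the paper: both use Lemma~\ref{cdhlemma1} to obtain the $L^2(\L)\to BC(\L)$ bound, invoke the Riesz representation theorem to produce an integral kernel with $\sup_{x}\|K(x,\cdot)\|_{L^2}<\infty$, and then observe that multiplying by $F(Q_1)$ with $F\in L^2(\mathbb{R})$ yields an $L^2(\L\times\L)$ kernel. Your additional remark on the uniform boundedness of the cross-sections $\mathfrak{D}(x_1)$ makes explicit a step the paper leaves implicit.
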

\begin{proof}
Let us denote by $T$ either $\overset{\circ}{R}(z)^{k}$ or 
$\Psi_{\alpha}(Q_1)\overset{\circ}{R}(z)^{k}\Psi_{\alpha}(Q_1)^{-1}$
appearing in the previous lemma. For any $f\in \mathcal{H}$ we have
that $Tf$ is a bounded and continuous function. 
Then for any fixed ${\bf x}\in \L$, the mapping 
\begin{equation}\label{cdh24}
 \mathcal{H}\ni f\mapsto (Tf)({\bf x})\in \mathbb{C}
\end{equation}
defines a bounded linear functional on $\mathcal{H}$, 
uniformly bounded in ${\bf x}$. The Riesz representation theorem
allows us to conclude that $T$ has an integral kernel obeying
\begin{equation}\label{cdh25}
 \sup_{{\bf x}\in \L}\int_{\L}|T({\bf x},{\bf y})|^2 d{\bf y} <\infty.
\end{equation}
Thus for any function $F\in L^2(\mathbb{R})$, the operators $F(Q_1)T$ 
have integral kernels of class $L^2(\L\times\L)$, hence they are Hilbert-Schmidt.
\end{proof}

\vspace{0.5cm}

\begin{lemma}\label{psiR-HS}
Fix $z\in\mathbb{C}\setminus[0,\infty)$. Then there exist 
$k_d\in\mathbb{N}$ large enough and $\alpha(z)>0$
small enough,  such
that for any $k\geq k_d$ we have that  $F(Q_1)R(z)^{k}$ and 
$F(Q_1)\Psi_{\alpha}(Q_1)R(z)^{k}\Psi_{\alpha}(Q_1)^{-1}$ with
$|\alpha|<\alpha(z)$ are Hilbert-Schmidt operators on $\mathcal{H}$
for any measurable function $F\in L^2(\mathbb{R})$.
\end{lemma}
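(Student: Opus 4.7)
The plan is to mirror the two-step pattern that Lemma \ref{cdhlemma1} and Corollary \ref{psiR0-HS} used for $\overset{\circ}{R}(z)$, transposed to the coupled resolvent. First I would show that both $R(z)^k$ and its Combes--Thomas rotation $\Psi_\alpha(Q_1)R(z)^k\Psi_\alpha^{-1}(Q_1)$ are bounded operators from $L^2(\L)$ into $BC(\L)$ for every $k\geq k_d$ large enough; once that is secured, the Riesz-representation argument of Corollary \ref{psiR0-HS} applies verbatim to yield the Hilbert--Schmidt property after composition with $F(Q_1)$.

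For the Combes--Thomas step, using the derivative bounds in \eqref{Psi}, a direct computation gives
\[
\Psi_\alpha(Q_1)\,H\,\Psi_{-\alpha}(Q_1) \,=\, H + T_\alpha,
\]
where $T_\alpha$ is a first-order differential operator whose coefficients are uniformly bounded by $C|\alpha|$. For $|\alpha|<\alpha(z)$ small enough, $H+T_\alpha-z$ remains boundedly invertible with inverse equal to $\Psi_\alpha(Q_1)R(z)\Psi_\alpha^{-1}(Q_1)$, and taking the $k$-th power represents the weighted resolvent power as $(H+T_\alpha-z)^{-k}$.

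For the mapping-property step, I would iterate $R(z)\mathcal{H}\subset \mathbb{H}_D(\L) = H_0^1(\L)\cap H^2(\L)$, using the commutator bound \eqref{Hyp-1} on the smooth potential $w$ exactly as in the proof of Lemma \ref{cdhlemma1}, to obtain $R(z)^k\mathcal{H}\subset H^{2k}(\L)\cap H_0^1(\L)$. The same bootstrap applies with $H+T_\alpha$ in place of $H$, since $T_\alpha$ is a first-order perturbation with smooth bounded coefficients that does not spoil the elliptic estimate, giving $(H+T_\alpha-z)^{-k}\mathcal{H}\subset H^{2k}(\L)\cap H_0^1(\L)$. Once $2k$ exceeds $(d+1)/2$, Sobolev embedding places both ranges in $BC(\L)$.

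Finally, with the $L^2\to BC$ property in hand, the Riesz-representation argument carries over unchanged: each of the operators $T\in\{R(z)^k,\ \Psi_\alpha(Q_1)R(z)^k\Psi_\alpha^{-1}(Q_1)\}$ admits an integral kernel satisfying $\sup_{\mathbf{x}\in\L}\int_\L|T(\mathbf{x},\mathbf{y})|^2\,d\mathbf{y}<\infty$, whence $\|F(Q_1)T\|_{HS}^2\leq |\D|\cdot\|F\|_{L^2(\R)}^2\cdot\sup_{\mathbf{x}}\int|T|^2\,d\mathbf{y}<\infty$ because the transverse section $\D$ is bounded. The main technical obstacle is confirming the \emph{global} (not merely local) $H^{2k}$ regularity for $R(z)$ on the unbounded domain $\L$; this is ensured by the $C^\infty$ cylinder-like geometry, which coincides with the product $\R\times\D$ with Dirichlet conditions outside a compact deformation region, so that standard interior/boundary elliptic estimates piece together with the explicit separation-of-variables regularity in the leads to produce a uniform global bound.
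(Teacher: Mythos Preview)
Your proposal is correct and follows exactly the approach the paper intends: the paper's own proof consists of the single sentence ``We use similar arguments with those for $\overset{\circ}{H}$ but this time repeated for $H$,'' and your write-up is a faithful elaboration of that, transporting the $L^2\to BC$ mapping property (via elliptic regularity plus Sobolev embedding) and the Combes--Thomas rotation from Lemma~\ref{cdhlemma1}, then reusing verbatim the Riesz-kernel argument of Corollary~\ref{psiR0-HS}. Your closing remark on global $H^{2k}$ regularity for the coupled $R(z)$ on the unbounded $\L$ is precisely the point that distinguishes this case from the decoupled one, and your justification via the $C^\infty$ cylinder-like geometry is the right one.
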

\begin{proof}
We use similar arguments with those
for $\overset{\circ}{H}$ but this time repeated for $H$. We do not give further details.
\end{proof}

\vspace{0.5cm}

The final technical result needed for the Birman-Kuroda theorem is the following:

\vsth
%%%%%%%%
\begin{lemma}%
%%%%%%%%
\ Let $z\in\mathbb{C}\setminus[0,\infty)$. Then there exists 
$n_d\in\mathbb{N}$ large enough such that for any $n\geq n_d$ we have
that $\big[R(z)^n-\overset{\circ}{R}(z)^n\big]\in\mathbb{B}_1(\mathcal{H})$ (the set of trace class operators).
%%%%%%%
\end{lemma}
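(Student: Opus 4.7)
The plan is the classical Birman-Kuroda telescoping. We write
\begin{equation*}
R(z)^n-\overset{\circ}{R}(z)^n=\sum_{j=0}^{n-1}R(z)^{n-1-j}\bigl[R(z)-\overset{\circ}{R}(z)\bigr]\overset{\circ}{R}(z)^j,
\end{equation*}
and aim to show that each summand is trace class for $n$ large enough in terms of $k_d$; summing $n$ such trace-class operators then gives the lemma.

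For each summand I would plug in the double exponential decay of Lemma~\ref{exp-dec}. Fix $\alpha\in(0,\gamma_0(z))$ small enough that Lemma~\ref{cdhlemma1} and Lemma~\ref{psiR-HS} apply with this weight, and set
\begin{equation*}
B(z):=\Psi_\alpha(Q_1)\bigl[R(z)-\overset{\circ}{R}(z)\bigr]\Psi_\alpha(Q_1)\in\mathbb{B}(\mathcal{H}),
\end{equation*}
so that $R(z)-\overset{\circ}{R}(z)=\Psi_{-\alpha}(Q_1)B(z)\Psi_{-\alpha}(Q_1)$ with $\Psi_{-\alpha}\in L^2(\mathbb{R})$. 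The $j$-th summand then factors as
\begin{equation*}
\bigl[R(z)^{n-1-j}\Psi_{-\alpha}(Q_1)\bigr]\,B(z)\,\bigl[\Psi_{-\alpha}(Q_1)\overset{\circ}{R}(z)^j\bigr].
\end{equation*}
For any $j$ in the \emph{middle range} $k_d\leq j\leq n-1-k_d$, the adjoint of Lemma~\ref{psiR-HS} makes the left bracket Hilbert-Schmidt, while Corollary~\ref{psiR0-HS} makes the right bracket Hilbert-Schmidt. Sandwiching the bounded $B(z)$ gives Hilbert-Schmidt $\cdot$ bounded $\cdot$ Hilbert-Schmidt, hence trace class.

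The main obstacle is the boundary indices $0\leq j<k_d$ and $n-1-k_d<j\leq n-1$, where the short side has fewer than $k_d$ resolvent factors and the direct splitting produces only one Hilbert-Schmidt bracket. To deal with these I would exploit the Combes-Thomas rotations implicit in Lemma~\ref{cdhlemma1} and Lemma~\ref{psiR-HS}: the conjugates $\Psi_\alpha(Q_1)R(z)^{k_d}\Psi_{-\alpha}(Q_1)$ and $\Psi_\alpha(Q_1)\overset{\circ}{R}(z)^{k_d}\Psi_{-\alpha}(Q_1)$ are bounded on $\mathcal{H}$, which lets one commute a block of $k_d$ resolvent powers across an exponential weight at the price of a bounded conjugate. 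For instance, for $j<k_d$ the rewrite
\begin{equation*}
R(z)^{n-1-j}\Psi_{-\alpha}(Q_1)=R(z)^{n-1-j-k_d}\Psi_{-\alpha}(Q_1)\cdot\bigl[\Psi_\alpha(Q_1)R(z)^{k_d}\Psi_{-\alpha}(Q_1)\bigr]
\end{equation*}
leaves the outer factor Hilbert-Schmidt; combining this with a symmetric manoeuvre on the right short side, using the double decay of $R(z)-\overset{\circ}{R}(z)$ to feed a matching $\Psi_{-\alpha}$ into the inner product, reassembles each boundary summand as Hilbert-Schmidt $\cdot$ bounded $\cdot$ Hilbert-Schmidt. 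Choosing $n_d$ of order $3k_d+1$ ensures enough resolvent budget for this rearrangement at every index.

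Summing the resulting $n$ trace-norm bounds then yields $R(z)^n-\overset{\circ}{R}(z)^n\in\mathbb{B}_1(\mathcal{H})$ for all $n\geq n_d$. The delicate bookkeeping for the boundary indices, where the direct factorisation fails and one must genuinely produce two Hilbert-Schmidt factors rather than only Hilbert-Schmidt times bounded, is the step I expect to be the hardest.
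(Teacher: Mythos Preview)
Your approach is the same as the paper's: the telescoping identity, the exponential decay of $R(z)-\overset{\circ}{R}(z)$ from Lemma~\ref{exp-dec}, and the Hilbert--Schmidt input from Corollary~\ref{psiR0-HS} and Lemma~\ref{psiR-HS}. Your treatment of the ``middle'' indices is correct.

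The gap is precisely where you flag it. For a boundary index, say $j<k_d$, your displayed rewrite
\[
R(z)^{n-1-j}\Psi_{-\alpha}(Q_1)=R(z)^{n-1-j-k_d}\Psi_{-\alpha}(Q_1)\cdot\bigl[\Psi_\alpha(Q_1)R(z)^{k_d}\Psi_{-\alpha}(Q_1)\bigr]
\]
produces only \emph{one} Hilbert--Schmidt factor (the first bracket), the second being merely bounded. Your ``symmetric manoeuvre on the right short side'' cannot supply the second one: for $j<k_d$ the factor $\Psi_{-\alpha}(Q_1)\overset{\circ}{R}(z)^{j}$ has too few resolvents to be Hilbert--Schmidt, and no amount of weight-shuffling changes that. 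So as written the boundary summand is only shown to be Hilbert--Schmidt, not trace class.

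The paper closes this gap with a half-weight splitting on the \emph{long} side alone. Taking $p\geq 4k_d+1$, every summand has one side with at least $2k_d$ resolvents, and then
\[
R(z)^{2k_d}\Psi_{-\alpha}(Q_1)=\Bigl\{\Psi_{-\alpha/2}(Q_1)\bigl[\Psi_{\alpha/2}(Q_1)R(z)^{k_d}\Psi_{-\alpha/2}(Q_1)\bigr]\Bigr\}\cdot\Bigl\{\Psi_{-\alpha/2}(Q_1)\bigl[\Psi_{\alpha}(Q_1)R(z)^{k_d}\Psi_{-\alpha}(Q_1)\bigr]\Bigr\},
\]
each brace being Hilbert--Schmidt by Lemma~\ref{psiR-HS} with $F=\Psi_{-\alpha/2}\in L^{2}(\mathbb{R})$. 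Thus the long block is directly trace class, the short side and $B(z)$ stay bounded, and there is no need to distinguish middle from boundary indices at all. Your budget $n_d\sim 3k_d+1$ actually suffices for this trick too (the long side then has $\geq 2k_d+1$ resolvents), but the mechanism you describe is not the one that works.
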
%
%%%%%%%
\begin{proof}
Let us fix $z\in\mathbb{C}\setminus[0,\infty)$.
 We start with the formula (valid for any $p\in\mathbb{N}$):
\begin{equation}\label{dif-rez}
R(z)^p-\overset{\circ}{R}(z)^p=\sum_{0\leq j\leq p-1}R(z)^j\big(R(z)-
\overset{\circ}{R}(z)\big)\overset{\circ}{R}(z)^{p-1-j}.
\end{equation}
Let us choose $0<\alpha<\min\{\alpha(z),\gamma_0(z)\}$, where
$\gamma_0(z)$ is the same as in Lemma \ref{exp-dec}. 
Let us choose $p\geq 4k_d+1$ and observe that in (\ref{dif-rez}) we
either have $j\geq 2k_d$ or $p-j-1\geq 2 k_d$. The idea is to prove
that operators of the type  
$$R(z)^{2k_d}
\Psi_{\alpha}(Q_1)^{-1} \quad {\rm or}\quad 
\Psi_{\alpha}(Q_1)^{-1}\overset{\circ}{R}(z)^{2k_d}$$
are trace class, which together with Lemma \ref{exp-dec} would finish
the proof. Indeed, we can write:
\begin{align}\label{cdh30}
R(z)^{2k_d}\Psi_{\alpha}(Q_1)^{-1}&=\left \{ \Psi_{-\alpha/2}(Q_1)
[\Psi_{\alpha/2}(Q_1)R(z)^{k_d}
\Psi_{-\alpha/2}(Q_1)]\right\}\nonumber \\
&\cdot  \left\{ \Psi_{-\alpha/2}(Q_1)
[\Psi_{\alpha}(Q_1)R(z)^{k_d}\Psi_{-\alpha}(Q_1)]\right
\},
\end{align}
where the right hand side is - according to Lemma
\ref{psiR-HS} - a product of two Hilbert-Schmidt operators. The other
operator can be treated in a similar way, up to taking the
adjoint. The proof is over.    
\end{proof}

\subsection{Proof of Proposition \ref{P-3}}

We now want to study the pair of Hamiltonians 
$K(\kappa)=H+\kappa V$ and
$\overset{\circ}{K}(\kappa)=\overset{\circ}{H}+\kappa V$, for any
$\kappa\in[0,1]$ and prove Proposition \ref{P-3}. The only difficulty 
comes from the fact that the perturbation $V$ has a singular 
commutator with $H$ and, at the same time, it does not tend to zero at
infinity. 

But for $\overset{\circ}{K}(\kappa)$ there is no difficulty due to the
fact that $\overset{\circ}{H}$ commutes with the bias $V$, while the
last one is just a multiple of the identity on each orthogonal
subspace in the decomposition
$\mathcal{H}=\mathcal{H}_-\oplus\mathcal{H}_0 \oplus\mathcal{H}_+$. 

In fact, the only result which cannot be obtained just like in the
previous section is the following lemma:

\vsth
%%%%%%%%%%%%%%%%%
\begin{lemma}\label{psi-K1-HS}%
%%%%%%%%%%%%%%%%%
\ Fix $z\in\mathbb{C}\setminus[0,\infty)$. 
There exists $k'_d\in\mathbb{N}$ large enough such that for any any 
$k\geq k'_d$ we have that the operators  $F(Q_1)R_\kappa(z)^{k}$ and $F(Q_1)
\Psi_{\alpha}(Q_1)R_\kappa(z)^{k}\Psi_{\alpha}(Q_1)^{-1}$ with
$|\alpha|<\alpha(z)$ (here $\Psi_\alpha$ is as in (\ref{Psi})) are 
Hilbert-Schmidt operators on $\mathcal{H}$ for any function 
$F\in L^2(\mathbb{R})$.
%%%%%%%%
\end{lemma}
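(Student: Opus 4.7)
The plan is to adapt the arguments of Lemma \ref{psiR-HS} to the biased Hamiltonian $K(\kappa) = H + \kappa V$. Two features of $V = v_-\Pi_- + v_+\Pi_+$ make this adaptation feasible: first, $V$ depends only on the longitudinal coordinate $x_1$, so it commutes with every multiplication operator in $Q_1$ used in the previous proofs, in particular with $\Psi_\alpha(Q_1)$ and with the partition-of-unity functions $\chi_\pm(Q_1), \chi_0(Q_1)$ from Lemma \ref{LAP-K1}; second, although $V$ has jump discontinuities at $x_1 = \pm a$, it is locally constant on each of the three subregions, so $K(\kappa)$ still has smooth coefficients off the measure-zero set $\{x_1 = \pm a\}$.

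I will first establish the analog of Lemma \ref{cdhlemma1}: for sufficiently large $k$, both $R_\kappa(z)^k$ and $\Psi_\alpha(Q_1) R_\kappa(z)^k \Psi_\alpha(Q_1)^{-1}$ map $\mathcal{H}$ continuously into a space of functions enjoying a pointwise bound of the form $\sup_{\mathbf{x}}\int_\L |T(\mathbf{x},\mathbf{y})|^2\, d\mathbf{y} < \infty$. The Combes-Thomas identity
$$\Psi_\alpha(Q_1) K(\kappa) \Psi_\alpha(Q_1)^{-1} = K(\kappa) + T_\alpha$$
goes through with the same first-order $T_\alpha$ as in (\ref{cdh21}), precisely because $[V,\Psi_\alpha(Q_1)] = 0$, and the inductive argument of (\ref{cdh22}) then reduces the weighted statement to the unweighted one. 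To obtain the unweighted pointwise control, I will use the geometric decomposition of Lemma \ref{LAP-K1} to approximate $R_\kappa(z)$ by
$$\widetilde R_\kappa(z) = \chi_-(Q_1) \overset{\circ}{R}_\kappa(z) \chi_-(Q_1) + \chi_0(Q_1) R_{\kappa,L}(z) \chi_0(Q_1) + \chi_+(Q_1) \overset{\circ}{R}_\kappa(z) \chi_+(Q_1),$$
and iterate the identity (\ref{cdh35}) to obtain a similar expansion for $R_\kappa(z)^k$. On each lead, $\overset{\circ}{K}(\kappa) = \overset{\circ}{H} + \kappa v_\pm$ is a mere constant shift of $\overset{\circ}{H}$, so Corollary \ref{psiR0-HS} applied at the shifted point $z - \kappa v_\pm$ yields the desired bounds directly. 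For the bounded central piece, $R_{\kappa,L}(z)$ has compact resolvent, and standard piecewise elliptic regularity on the three subregions $\L\cap(-L,-a)$, $\C$, $\L\cap(a,L)$, on each of which $V$ is constant and smooth, gives piecewise $H^{2k}$ regularity, which embeds into $C^0$ in each piece once $k > (d+1)/4$.

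With the pointwise bound in hand, the Riesz representation argument of Corollary \ref{psiR0-HS} carries over verbatim and, combined with the boundedness of the cross-section $\mathfrak{D}(x_1)$, produces
$$\|F(Q_1) R_\kappa(z)^k\|_{HS}^2 \leq \Bigl(\sup_{x_1}|\mathfrak{D}(x_1)|\Bigr) \|F\|_{L^2}^2 \sup_{\mathbf{x}} \int_\L |R_\kappa(z)^k(\mathbf{x},\mathbf{y})|^2\, d\mathbf{y},$$
and analogously for the weighted version. I expect the main obstacle to be the loss of global $H^{2k}$ regularity caused by the jump of $V$ at $x_1 = \pm a$, which would naively block the iteration underlying (\ref{cdh20}). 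The remedy is to decompose the configuration space into the three subregions where $V$ is constant, iterate the elliptic regularity piecewise, and exploit the fact that the jump set has measure zero and therefore contributes nothing to the Hilbert-Schmidt integral.
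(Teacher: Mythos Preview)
Your approach differs substantially from the paper's, and while the unweighted part is defensible with more work, the weighted part has a genuine gap.

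For the unweighted operator $F(Q_1)R_\kappa(z)^k$, the paper does \emph{not} use a geometric decomposition. Instead it exploits the fact that $V$ is constant in the transverse variables and therefore commutes with every $\partial_{x_j}$, $j\geq 2$. This yields anisotropic regularity
\[
R_\kappa(z)^k\mathcal{H}\subset H^2\bigl(\mathbb{R};H^{2(k-1)}(\mathbb{R}^d)\bigr)\cap L^2(\L)\subset BC(\L),
\]
with only $H^2$ in the longitudinal direction but full $H^{2(k-1)}$ transversally, which is enough for the Sobolev embedding into $BC$. Your piecewise-regularity route could be made to work, but the sentence ``the jump set has measure zero and therefore contributes nothing'' is not the right justification: what you actually need is regularity \emph{up to} the interfaces $x_1=\pm a$ from each side (a transmission-problem estimate), not merely interior regularity on the open pieces, and this step is missing from your sketch.

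The real problem is the weighted case. You assert that because $[V,\Psi_\alpha(Q_1)]=0$ the Combes--Thomas identity gives the same $T_\alpha$ and hence the induction \eqref{cdh22} carries over unchanged. The first claim is correct; the second is not. The induction behind \eqref{cdh22} ultimately requires that $T_\alpha$ map $\mathrm{Dom}(K(\kappa)^{j})$ into $\mathrm{Dom}(K(\kappa)^{j-1})$, or equivalently that the commutator $[K(\kappa),T_\alpha]$ be $K(\kappa)$-bounded. But $T_\alpha$ contains $\partial_{x_1}$, and $[V,\partial_{x_1}]$ produces $\delta$-measures at $x_1=\pm a$; correspondingly, a vector $u\in\mathrm{Dom}(K(\kappa)^2)$ is only $H^2$ in the longitudinal variable near the jumps (its second $x_1$-derivative has a jump proportional to $(v_\pm-0)u(\pm a,\cdot)$), so $T_\alpha u\notin H^2=\mathrm{Dom}(K(\kappa))$ in general. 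The induction therefore breaks down exactly at the singularity of $V$. The paper repairs this by replacing $\Psi_\alpha$ with a function $\tilde\Psi_\alpha$ that is \emph{constant} on a neighbourhood of $\{x_1=\pm a\}$; then $\tilde T_\alpha$ vanishes there, the dangerous commutator $[\kappa V,\tilde T_\alpha]$ is identically zero, and the induction goes through as for $H$. You need this modification (or an equivalent device) for the weighted statement.
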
%
%%%%%%%
\begin{proof} 
The perturbation $V$ is still relatively bounded with respect to $H$
with $0$ relative bound, but its commutator with $H$ defined as a 
sesquilinear form on the domain of $H$ is singular. We observe that
the main difficulty comes from the fact that the range of the
operator $R_\kappa(z)^k$ is no longer contained in the Sobolev space 
$H^{2k}(\L)$ but only in $H^2(\L)$ for any $k\in\mathbb{N}$, due to
the singularity of the commutator of $-\Delta$ with $V$. In fact the 
situation is a bit better due to the fact that $V$ 
(being constant in the $\D$-space) commutes with all derivatives
with respect to directions from $\D$. Thus, using the results in \cite{LM2}, 
we may conclude that:
$$
R_\kappa(z)^k\mathcal{H}\subset H^2\big(\mathbb{R};H^{2(k-1)}(\mathbb{R}^d)\big)\cap L^2(\L)\subset BC\big(\mathbb{R};H^{2(k-1)}(\mathbb{R}^d)\big)\cap L^2(\L)\subset
$$
$$
\subset BC\big(\mathbb{R};BC(\mathbb{R}^d)\big)\cap L^2(\L)\subset BC(\L)
$$
for $k\geq k^\prime_d$ depending only on the dimension $d$. Thus the
proof goes on exactly as in Section \ref{sc-w-bias} and we are done
with $F(Q_1)R_\kappa(z)^{k}$. 

Regarding $F(Q_1)
\Psi_{\alpha}(Q_1)R_\kappa(z)^{k}\Psi_{\alpha}(Q_1)^{-1}$, we need to
replace $\Psi_{\alpha}$ with a function $\tilde{\Psi}_{\alpha}$which is constant in a
small neighborhood of $\pm a$. In this case, when we write
$$\tilde{\Psi}_{\alpha}K(\kappa)\tilde{\Psi}_{\alpha}^{-1}=K(\kappa)+
\tilde{T}_\alpha$$
we see that $\tilde{T}_\alpha$ equals zero around the points where $V$ is
discontinuous. Therefore, if
$|\alpha|$ is small enough we will have
\begin{equation}\label{cdh40}
\Vert
K(\kappa)^k\tilde{\Psi}_{\alpha}R_\kappa(z)^{k}\tilde{\Psi}_{\alpha}^{-1}\Vert
<\infty,
\end{equation}
and the proof goes in the same way as in the previous section. 
\end{proof}

\section{Study of $\Xi_\eta$ and its adiabatic limit}

\setcounter{equation}{0}

Let us recall a few facts about the decoupled system: 
$$
\sigma_{sc}(\overset{\circ}{K}({\kappa}))=\emptyset,\qquad\mathcal{H}_{ac}(\overset{\circ}{K}({\kappa}))=
\mathcal{H}_-\oplus\mathcal{H}_+,\qquad\mathcal{H}_{pp}(\overset{\circ}{K}({\kappa}))=\mathcal{H}_0,
\qquad\forall\kappa\in[0,1],
$$
$$
\left.\overset{\circ}{H}\right|_{\mathcal{H}_{ac}(\overset{\circ}{H})}=
\Pi_-\big[\big(-\overset{\circ}{\Delta}_{D,-}\big)\big]\Pi_-\oplus\Pi_+\big
[\big(-\overset{\circ}{\Delta}_{D,+}\big)\big]\Pi_+
$$
$$
 \overset{\circ}{W}_{\eta}(s)E_{\rm
   ac}(\overset{\circ}{H})=\left[\Pi_-\left(e^{-iv_-\int_0^s\chi(\eta
       u)du}\right)+\Pi_+\left(e^{-iv_+\int_0^s\chi(\eta u)du}
   \right)\right]e^{-is\overset{\circ}{H}}E_{\rm ac}(\overset{\circ}{H})
$$
and using the notations defined earlier (\ref{sect-Lapl}), it is well known that: 
$$
\overset{\circ}{\Delta}_{D,\pm}=\mathfrak{l}_\pm\otimes1+1\otimes\mathfrak{L}_{\D},\quad\sigma(\mathfrak{l}_\pm)=
\sigma_{ac}(\mathfrak{l}_\pm)=[0,\infty),\qquad\sigma(\mathfrak{L}_{\D})=
\sigma_{pp}(\mathfrak{L}_{\D})\subset\mathbb{R}_+.
$$
Thus $\sigma_{ac}(\overset{\circ}{H})=[\inf
\sigma(\mathfrak{L}_{\D}),\infty)$ and has the set of thresholds 
$\mathcal{T}=\sigma_{pp}(\mathfrak{L}_{\D})$.

%%%%%%%%%%%%%%%%%%%%%%%
\subsection {Proof of Proposition \ref{P-2}}\label{ProofOfProposition2.8}%
%%%%%%%%%%%%%%%%%%%%%%

Here we are interested in the strong limit when $s\to -\infty$ of:
$$
E_{\rm ac}(K_1)W_\eta^*(s)E_{\rm ac}(H)\overset{\circ}{W}_\eta(s)E_{\rm
  ac}(\overset{\circ}{H}).
$$
Let us start by noting that we can replace $E_{\rm ac}(H)$ with the
identity in the
above product, and still get the same strong limit (if it exists). The
explanation is that we can write:
$$E_{\rm pp}(H)\overset{\circ}{W}_\eta(s)E_{\rm
  ac}(\overset{\circ}{H})=\{E_{\rm pp}(H)e^{-is\overset{\circ}{H}} E_{\rm
  ac}(\overset{\circ}{H})\}\;e^{is\overset{\circ}{H}} \overset{\circ}{W}_\eta(s)$$
and use the fact that $e^{is\overset{\circ}{H}}
\overset{\circ}{W}_\eta(s)$ converges in norm, while $E_{\rm pp}(H)e^{-is\overset{\circ}{H}} E_{\rm
  ac}(\overset{\circ}{H})$ converges strongly to zero when $s\to -\infty$ because $E_{\rm pp}(H)$ is compact. 
Thus it is enough to study the existence of a strong limit when
$s\to -\infty$ of:
$$
\Xi_\eta(s):=E_{\rm ac}(K_1)W_\eta^*(s)\overset{\circ}{W}_\eta(s)E_{\rm
  ac}(\overset{\circ}{H}).
$$

For any $\delta>0$ let $\mathcal{V}_{\delta}$ be the set of vectors
$f\in\mathcal{H}_{ac}(\overset{\circ}{H})$ with compact spectral
support with respect to $\overset{\circ}{H}$ 
at distance larger than $\delta$ from all thresholds. Clearly, 
$\{\mathcal{V}_{\delta}\}_{\delta>0}$ is dense in 
$\mathcal{H}_{ac}(\overset{\circ}{K}_{\kappa})=\mathcal{H}_{ac}(\overset{\circ}{H})$. 
It is thus enough to show the existence of the limit
$\underset{s\searrow-\infty}{\lim}\Xi_{\eta}(s)f$ for
$f\in\mathcal{V}_{\delta}$. As any vector
$f\in\mathcal{H}_{ac}(\overset{\circ}{H})$ is of the form 
$(f_-,f_+)\in\mathcal{H}_-\oplus\mathcal{H}_+$ we will treat the two situations separately. 

The idea is to use a variant of Cook's method. We have the following identities:
\begin{align}\label{cdh36}
\Xi_{\eta}(s)f
&=E_{\rm ac}(K_1)W_\eta^*(s)\big(\overset{\circ}{K}(\chi(\eta s))+1\big)^{-2}
\overset{\circ}{W}_\eta(s)\big(\overset{\circ}{K}(\chi(\eta s))+1\big)^2f\nonumber \\
&=E_{\rm ac}(K_1)W_\eta^*(s)\big({K}(\chi(\eta s))+1\big)^{-1}
\big(\overset{\circ}{K}(\chi(\eta s))+1\big)^{-1}
\overset{\circ}{W}_\eta(s)\big(\overset{\circ}{K}(\chi(\eta s))+1\big)^2f\nonumber \\
&-E_{\rm ac}(K_1)W_\eta^*(s)\left[\big({K}(\chi(\eta s))+1\big)^{-1}-
\big(\overset{\circ}{K}(\chi(\eta s))+1\big)^{-1}\right]\overset{\circ}{W}_\eta(s)
\big(\overset{\circ}{K}(\chi(\eta s))+1\big)f\nonumber \\
&=:\Phi_\eta(s)-\Psi_\eta(s).
\end{align}
Without loss of generality, let us assume that $f\in
\mathcal{H}_+\cap \mathcal{V}_{\delta}$. Since $f$ is with compact
support in the spectral measure of $\overset{\circ}{H}$, there exist a
finite number $N$ of transverse eigenvectors $\{w_n\}$ of
$\mathfrak{L}_{\D}$ in $L^2(\D)$ corresponding to the eigenvalues
$\{\lambda_n\}$ and  so that 
\begin{equation}\label{cdh400}
f(x,{\bf x}_\perp)=\sum_{n=1}^N w_n({\bf x}_\perp)\int_{\mathbb{R}}\sin[k(x-a)]f_n(k)dk
\end{equation}
where $f_n$ are smooth, compactly supported, with a support which does
not contain the points $k^2<\delta$. Then we have
\begin{align}\label{cdh41}
\{e^{-is\overset{\circ}{H}} f\}(x,{\bf x}_\perp)&=\sum_{n=1}^N w_n({\bf
  x}_\perp)\int_{\mathbb{R}}
e^{-is(k^2+\lambda_n)}\sin[k(x-a)]f_n(k)dk,\nonumber \\
\{\overset{\circ}{W}_\eta(s) f\}(x,{\bf
  x}_\perp)&=e^{-iv_+\int_0^s\chi(\eta t)dt}\sum_{n=1}^N w_n({\bf
  x}_\perp)\int_{\mathbb{R}}
e^{-is(k^2+\lambda_n)}\sin[k(x-a)]f_n(k)dk.
\end{align}
Moreover, for $j\geq 1$ we have:
\begin{align}\label{cdh42}
&\{\overset{\circ}{W}_\eta(s)\big(\overset{\circ}{K}(\chi(\eta s))+1\big)^j f\}(x,{\bf
  x}_\perp)\\
&=e^{-iv_+\int_0^s\chi(\eta t)dt}\sum_{n=1}^N w_n({\bf
  x}_\perp)\int_{\mathbb{R}}
e^{-is(k^2+\lambda_n)}[k^2+v_+\chi(\eta
s)+\lambda_n]^j\sin[k(x-a)]f_n(k)dk.\nonumber
\end{align}
By standard integration by parts arguments, due to the support condition of $f_n$, we can prove the following
estimate:
 \begin{align}\label{cdh43}
&s^N \left \vert
  \{\overset{\circ}{W}_\eta(s)\big(\overset{\circ}{K}(\chi(\eta s))+1\big)^j f\}(x,{\bf
  x}_\perp)\right \vert\leq (1+|x-a|^N) C_N(f,j),\quad |s|>1,\,\forall N\in\mathbb{N} 
\end{align}
or (taking $N=2$)
\begin{align}\label{cdh44}
\left \Vert e^{-\alpha \langle Q_1\rangle }
  \{\overset{\circ}{W}_\eta(s)\big(\overset{\circ}{K}(\chi(\eta s))+1\big)^j
  f\}\right \Vert\leq \frac{C(f,j,\alpha)}{1+s^2}. 
\end{align}
We need only one more ingredient. Using the same methods as in subsection \ref{SSec-H0} 
one can prove the following estimation similar to \eqref{L2-descr-exp-0} for the given time
dependent objects:
\begin{equation}\label{cdh45}
\left\|e^{\alpha\langle Q_1\rangle }\Pi_\pm\big[ \big(K(\chi(\eta s))+1\big)^{-1} - \big(\overset{\circ}{K}(\chi(\eta s))+1\big)^{-1}\big]e^{\alpha\langle Q_1\rangle }\Pi_\pm\right\|\leq c,
\end{equation} 
for $\alpha$ small enough. The above constant can be chosen
uniformly with respect to $s$. 

Now we can go back to \eqref{cdh36} and investigate the structure of
$\Psi_\eta(s)$. Let us show that it will converge to zero when $s\to
-\infty$. Indeed, the difference of resolvents provides the
exponential localization near the sample. But then we know that the
adiabatic decoupled free evolution decays with $s$, as in
\eqref{cdh44}. We conclude:
 \begin{align}\label{cdh46}
\lim_{s\to-\infty}\Xi_{\eta}(s)f =\lim_{s\to-\infty}\Phi_\eta(s),
\end{align}
provided that the limit on the right hand side exists. We shall show that $\Phi_\eta(s)$ has an absolutely integrable
derivative with respect to $s$. Let us differentiate $\Phi_\eta(s)$ with respect to $s$. We obtain the identity:
\begin{align}\label{cdh47}
&-i\partial_s\Phi_\eta(s)=\\
&-E_{\rm ac}(K_1)W_\eta^*(s)
\left[\big({K}(\chi(\eta s))+1\big)^{-1}-\big(\overset{\circ}{K}
(\chi(\eta s))+1\big)^{-1}\right]
\overset{\circ}{W}_\eta(s)
\big(\overset{\circ}{K}(\chi(\eta s))+1\big)^2f\nonumber \\
&+i\eta\chi^\prime(\eta s)E_{\rm ac}(K_1)W_\eta^*(s)\big({K}(\chi(\eta
s))+1\big)^{-1}\nonumber \\
&\cdot V\left[\big({K}(\chi(\eta s))+1\big)^{-1}-\big(\overset{\circ}{K}
(\chi(\eta s))+1\big)^{-1}\right]
\overset{\circ}{W}_\eta(s)\big(\overset{\circ}{K}(\chi(\eta s))+1\big)f.\nonumber 
\end{align}
We see that using \eqref{cdh45} and \eqref{cdh46} we can write:
\begin{align}\label{cdh48}
||\partial_s\Phi_\eta(s)||\leq \frac{C}{1+s^2}. 
\end{align}
Thus $\lim_{s\to-\infty}\Phi_\eta(s)$ exists and equals:

\begin{align}\label{cdh49}
\Xi_\eta f&=\Phi_\eta(0)\\
&-i\int_{-\infty}^0E_{\rm ac}(K_1)W_\eta^*(s)
\left[\big({K}(\chi(\eta s))+1\big)^{-1}-\big(\overset{\circ}{K}
(\chi(\eta s))+1\big)^{-1}\right]
\overset{\circ}{W}_\eta(s)
\big(\overset{\circ}{K}(\chi(\eta s))+1\big)^2f ds\nonumber \\
&-\int_{-\infty}^0\eta\chi^\prime(\eta s)E_{\rm ac}(K_1)W_\eta^*(s)\big({K}(\chi(\eta
s))+1\big)^{-1}\nonumber \\
&\cdot V\left[\big({K}(\chi(\eta s))+1\big)^{-1}-\big(\overset{\circ}{K}
(\chi(\eta s))+1\big)^{-1}\right]
\overset{\circ}{W}_\eta(s)\big(\overset{\circ}{K}(\chi(\eta
s))+1\big)f ds.\nonumber 
\end{align}
The proof of
Proposition \ref{P-2} is over.

%%%%%%%%%%%%%%%%%%%%%%%
\subsection {Proof of Proposition \ref{P-4}}\label{ProofOfProp2.11}%
%%%%%%%%%%%%%%%%%%%%%%

First, let us compute the limit $\eta\searrow 0$ in \eqref{cdh49}. We
can apply the Lebesgue dominated
convergence theorem in \eqref{cdh49} and obtain: 
\begin{align}\label{cdh51}
\lim_{\eta\searrow 0}\Xi_\eta f&=E_{\rm ac}(K_1)\big({K}(1)+1\big)^{-1}
\big(\overset{\circ}{K}(1)+1\big)f \\
&-i\int_{-\infty}^0E_{\rm ac}(K_1)e^{isK(1)}
\left[\big({K}(1)+1\big)^{-1}-\big(\overset{\circ}{K}
(1)+1\big)^{-1}\right]
e^{-is\overset{\circ}{K}(1)}
\big(\overset{\circ}{K}(1)+1\big)^2f ds.\nonumber 
\end{align}

Second, let us show that the above right hand side coincides with
$\Xi_0f$.  Indeed, let us look at the vector
$
E_{\rm ac}(K_1)e^{is K(1)}e^{-is\overset{\circ}{K}(1)}f,
$
where $f\in\mathcal{V}_\delta$. As before, we can decompose the vector
as :
$
\Phi_0(s)-\Psi_0(s)
$
where 
$$
\Phi_0(s):=E_{\rm ac}(K_1)e^{isK(1)}\big(K(1)+1\big)^{-1}\big(\overset{\circ}{K}(1)+1\big)^{-1}
e^{-is\overset{\circ}{K}(1)}\big(\overset{\circ}{K}(1)+1\big)^2f,
$$
$$
\Psi_0(s):=E_{\rm ac}(K_1)e^{isK(1)}\left[\big(K(1)+1\big)^{-1}-\big(\overset{\circ}{K}(1)+1\big)^{-1}
\right]e^{-is\overset{\circ}{K}(1)}\big(\overset{\circ}{K}(1)+1\big)f.
$$

Using the previous propagation estimates which were shown to be
uniform in $\eta$, we can repeat the same argument which led us to
\eqref{cdh49} but with $\eta=0$ from the beginning. This will give a
formula for $\Xi_0f$ which will coincide with the right hand side of
\eqref{cdh51}. The proof is over.

\section{Acknowledgments}

Part of this
work has been done while P.D. and R.P. were visiting professors at Aalborg
University. H.C. acknowledges support from the Danish FNU grant {\it Mathematical Physics}. R.P. aknowledges the CNCSIS support
under the Ideas Programme, PCCE project no. 55/2008 {\it “Sisteme
diferentiale in analiza neliniara si aplicatii”}.

\end{document}